\newtheorem{thm}{Theorem}
\newtheorem{theorem}{Theorem}
\newtheorem{lem}[thm]{Lemma}
\newtheorem{defn}{Definition}[section]
\newenvironment{claim}[1]{\par\noindent\underline{Claim:}\space#1}{}
\theoremstyle{remark}
\def\x{{\mathbf x}}
\begin{document}

\title{Separation-Free Super-Resolution from Compressed Measurements is Possible: an Orthonormal Atomic Norm Minimization Approach}

\author{
\thanks{The first two authors contributed equally to this work.}
Weiyu Xu\thanks{Department of Electrical and Computer Engineering, University of Iowa, Iowa City, IA 52242. Email: \texttt{weiyu-xu@uiowa.edu}. }
\and Jirong Yi\thanks{Department of Electrical and Computer Engineering, University of Iowa, Iowa City, IA 52242. Yi is co-first author.}
\and~Soura Dasgupta\thanks{Department of Electrical and Computer Engineering, University of Iowa, Iowa City, IA 52242.}
\and
~Jian-Feng Cai\thanks{Department of Mathematics, Hong Kong University of Science and Technology, Hong Kong.}
\and
~Mathews Jacob \thanks{Department of Electrical and Computer Engineering, University of Iowa, Iowa City, IA.}
\and
~\and~Myung Cho \thanks{Department of Electrical and Computer Engineering, University of Iowa, Iowa City, IA.}
}

\maketitle

\begin{abstract}
We consider the problem of recovering the superposition of $R$ distinct complex exponential functions from compressed non-uniform time-domain samples.
Total Variation (TV) minimization or atomic norm minimization was proposed in the literature to recover the $R$ frequencies or the missing data. However,
it is known that in order for TV minimization and atomic norm minimization to recover the missing data or the frequencies, the underlying $R$ frequencies are required to be well-separated,
even when the measurements are noiseless.  This paper shows that the Hankel matrix recovery approach can super-resolve
the $R$ complex exponentials and their frequencies from compressed non-uniform measurements, regardless of how close their frequencies are to each other.
We propose a new concept of orthonormal atomic norm minimization (OANM), and demonstrate that the success of Hankel matrix recovery in separation-free super-resolution comes from the fact that
the nuclear norm of a Hankel matrix is an orthonormal atomic norm. More specifically, we show that, in traditional atomic norm minimization,  the underlying parameter values \emph{must} be well separated
to achieve successful signal recovery, if the atoms are changing continuously with respect to the continuously-valued parameter. In contrast, for OANM,
it is possible the OANM is successful even though the original atoms can be arbitrarily close.

As a byproduct of this research, we provide one matrix-theoretic inequality of nuclear norm, and give its proof from the theory of compressed sensing.
\end{abstract}

\section{Introduction}
In super-resolution, we are interested in recovering the high-end spectral information of signals from observations of it low-end spectral
components \cite{candes_towards_2014}. In one setting of super-resolution problems, one aims to recover a superposition of complex
exponential functions from time-domain samples.  In fact, many problems arising in science and engineering involve high-dimensional signals
that can be modeled or approximated by a superposition of a few complex exponential functions.
In particular, if we choose the exponential functions to be complex sinusoids, this superposition of complex exponentials models signals in
acceleration of medical imaging \cite{lustig_sparse_2007}, analog-to-digital conversion \cite{tropp_beyond_2010}, and signals in array signal processing \cite{roy_esprit-estimation_1989}.
Accelerated NMR  (Nuclear magnetic resonance) spectroscopy, which is a prerequisite for studying short-lived molecular systems and monitoring chemical reactions in real time, is another
application where signals can be modeled or approximated by a superposition of complex exponential functions.
How to recover the superposition of complex exponential functions or parameters of these complex exponential functions is of prominent importance in these applications.

In this paper, we consider how to recover those superposition of complex exponential from linear measurements. More specifically, let ${\bm{x}}\in\mathbb{C}^{2N-1}$ be a vector satisfying
\begin{equation}\label{eq:hatx}
\bm{x}_j=\sum_{k=1}^{R}c_k z_k^j, \qquad j=0,1,\ldots,2N-2,
\end{equation}
where $z_k\in\mathbb{C}$, $k=1,\ldots,R$, are some unknown complex numbers with $R$ being a positive integer. In other words, ${\bm{x}}$ is a superposition of $R$ complex exponential functions.
We assume $R\leq 2N-1$.  When $|z_k|=1$, $k=1,\ldots,R$, ${\bm{x}}$ is a superposition of complex sinusoid. When $z_k=e^{-\tau_k}e^{2\pi\imath f_k}$, $k=1,\ldots,R$, $\imath=\sqrt{-1}$, ${\bm{x}}$ can model the signal in NMR spectroscopy.

Since $R\leq 2N-1$ and often $R\ll 2N-1$, the degree of freedom to determine ${\bm{x}}$ is much less than the ambient dimension $2N-1$. Therefore, it is possible to recover ${\bm{x}}$ from its under sampling \cite{candes_robust_2006,donoho_compressed_2006}.
In particular, we consider recovering ${\bm{x}}$ from its linear measurements
\begin{equation}\label{eq:linmea}
\bm{b}=\mathcal{A}({\bm{x}}),
\end{equation}
where $\mathcal{A}$ is a linear mapping to $\mathbb{C}^{M}$, $M< 2N-1$. After ${\bm{x}}$ is recovered, we can use the single-snapshot MUSIC  or the Prony's method to recover the parameter $z_{k}$'s.

The problem on recovering  ${\bm{x}}$ from its linear measurements (\ref{eq:linmea}) can be solved
using Compressed Sensing (CS)\cite{candes_robust_2006}, by discretizing the dictionary of basis vectors into grid points corresponding to discrete values of $z_k$. When the parameters $f_k$'s in signals from spectral compressed sensing or  $(f_k,\tau_k)$'s from  signals in accelerated NMR spectroscopy indeed fall on the grid, CS is a powerful tool to recover those signals even when the number of samples is far below its ambient dimension ($R\ll 2N-1$) \cite{candes_robust_2006,donoho_compressed_2006}. Nevertheless, the  parameters in our problem setting often take continuous values, leading to a continuous dictionary, and may not exactly fall on a grid. The basis mismatch problem between the continuously-valued parameters and the grid-valued parameters degenerates the performance of conventional compressed sensing \cite{chi_sensitivity_2011}. 

In two seminal papers \cite{candes_towards_2014,tang_compressed_2013}, the authors proposed to use the total variation minimization or the atomic
 norm minimization to recover $\bm{x}$ or to recover the parameter $z_{k}$, when $z_{k}=e^{\imath 2 \pi f_k}$ with $f_{k}$ taking continuous values from $[0,1)$. In
 these two papers, the author showed that the TV minimization or the atomic norm minimization can recover correctly the
 continuously-valued frequency $f_k$'s when there are no observation noises. However, as shown in \cite{candes_towards_2014,tang_compressed_2013,resolutionlimit}, in
 order for the TV minimization or the atomic norm minimization to recover spectrally sparse data or the associated frequencies correctly, it is necessary to require that adjacent frequencies be separated far enough from each other. For example, for complex exponentials with $z_k$'s taking values on the complex unit circle, it is required that their adjacent frequencies $f_k \in [0,1]$'s be at least $\frac{2}{2N-1}$ apart. This separation condition is necessary, even if we observe the full $(2N-1)$ data samples, and even if the observations are noiseless.

This raises a natural question, ``Can we super-resolve the superposition of complex exponentials with \emph{continuously-valued} parameter $z_k$, without requiring frequency separations, from compressed measurements?''  In this paper, we answer this question in positive. More specifically, we show that a Hankel matrix recovery approach using nuclear norm minimization can super-resolve the superposition of complex exponentials with \emph{continuously-valued} parameter $z_k$, without requiring frequency separations, from compressed measurements. This separation-free super-resolution result holds even when we only compressively observe $\bm{x}$ over a subset $\mathcal{M} \subseteq \{0,..., 2N-2\}$.

In this paper, we give the worst-case and average-case performance guarantees of Hankel matrix recovery in recovering the superposition of complex exponentials.
In establishing the worst-case performance guarantees, we establish the conditions under which the Hankel matrix recovery can recover the underlying complex exponentials,
no matter what values the coefficients $c_{k}$'s of the complex exponentials take. For the average-case performance guarantee, we assume that
the phases of the coefficients $c_{k}$'s  are uniformly  distributed over $[0, 2\pi)$.  For both the worst-case and average-case performance guarantees,
we establish that Hankel matrix recovery can super-resolve complex exponentials with continuously-valued parameters $z_{k}$'s,
no matter how close two adjacent frequencies or parameters $z_{k}$'s are to each other. We further introduce a new concept of orthonormal atomic norm minimization (OANM),
and discover that the success of Hankel matrix recovery in separation-free super-resolution comes from the fact that the nuclear norm of the Hankel matrix is an orthonormal atomic norm.
In particular, we show that,  in traditional atomic norm minimization,  for successful signal recovery, the underlying parameters \emph{must} be well separated, if the atoms are
changing continuously with respect to the continuously-valued parameters; however, it is possible the OANM is successful even though the original atoms can be arbitrarily close.

As a byproduct of this research, we discover one interesting matrix-theoretic inequality
of nuclear norm, and give its proof from the theory of compressed sensing.

\subsection{Comparisons with related works on Hankel matrix recovery and atomic norm minimization}
Low-rank Hankel matrix recovery approaches were used for recovering parsimonious models in system identifications, control, and signal processing.
In \cite{markovsky_structured_2008}, Markovsky considered low-rank approximations for Hankel structured matrices with applications in signal processing, system identifications and control.
In \cite{fazel_log-det_2003,fazel_hankel_2013}, Fazel et al. introduced low-rank Hankel matrix recovery via nuclear norm minimization, motivated by applications including realizations and identification of linear time-invariant systems,
inferring shapes (points on the complex plane) from moments estimation (which is related to super-resolution with $z_k$ from the complex plane),
and moment matrix rank minimization for polynomial optimization.  In \cite{fazel_hankel_2013}, Fazel et al. further designed optimization algorithms to solve the nuclear norm minimization problem for low-rank
Hankel matrix recovery.  In \cite{chen_robust_2014}, Chen and Chi proposed to use multi-fold Hankel matrix completion for spectral compressed sensing, studied the performance guarantees of spectral compressed sensing via structured multi-fold Hankel matrix completion, and derived performance guarantees of structured matrix completion.  However, the results in \cite{chen_robust_2014} require that the Dirichlet kernel associated with underlying frequencies
satisfies certain incoherence conditions, and these conditions require the underlying frequencies to be well separated from each other. In \cite{dai_nuclear_2015,HankelComon}, the authors
derived performance guarantees for Hankel matrix completion in system identifications. However, the performance guarantees in \cite{dai_nuclear_2015,HankelComon} require a very specific sampling patterns of
fully sampling the upper-triangular part of the Hankel matrix. Moreover, the performance guarantees in \cite{dai_nuclear_2015,HankelComon} require that the parameters $z_{k}$'s be very small
(or smaller than $1$) in magnitude. In our earlier work \cite{cai_robust_2016}, we established performance guarantees of Hankel matrix recovery for spectral compressed sensing under Gaussian measurements of
${\bm{x}}$, and by comparison, this paper considers direct observations of  $\bm{x}$ over a set $\mathcal{M} \subseteq \{0,1,2,...,2N-2\}$, which is a more relevant sampling model in many applications.
In the single-snapshot MUSIC algorithm \cite{liao_music_2016}, the Prony's method \cite{de_prony_essai_1795} or the matrix pencil approach \cite{hua_matrix_1990}, one would need the full $(2N-1)$ consecutive samples to perform frequency identifications, while
the Hankel matrix recovery approach can work with compressed measurements. When prior information of the locations of the frequencies are available, one can use weighted atomic norm minimization to relax the
separation conditions in successful signal recovery \cite{Prior}.  In \cite{schiebinger_superresolution_2015}, the authors consider super-resolution without separation using atomic norm minimization, but under the restriction that the
coefficients are non-negative and for a particular set of atoms.

\subsection{Organizations of this paper}
The rest of the paper is organized as follows. In Section \ref{sec:preliminaries}, we present the problem model, and introduce the Hankel matrix recovery approach.
In Section \ref{sec:RecoveryGuarantees}, we investigate the worst-case performance guarantees of recovering spectrally sparse signals regardless of frequency separation, using the Hankel matrix recovery approach.
In Section \ref{sec:average}, we study the Hankel matrix recovery's average-case performance guarantees of recovering spectrally sparse signals regardless of frequency separation.
In Section \ref{sec:separation}, we show that  in traditional atomic norm minimization,  for successful signal recovery, the underlying parameters \emph{must} be well separated, if the atoms are
changing continuously with respect to the continuously-valued parameters. 
In Section \ref{sec:OANM}, we introduce the concept of orthonormal atomic norm minimization, and show that it is possible that the atomic
norm minimization is successful even though the original atoms can be arbitrarily close. In Section \ref{sec:nuclearnorminequality}, as a byproduct of this research, we provide one matrix-theoretic
inequality of nuclear norm, and give its proof from the theory of compressed sensing. Numerical results are given in Section \ref{sec:NumericalResult} to validate our theoretical predictions.
We conclude our paper in Section \ref{sec:OpenProblem}.

\subsection{Notations}
We denote the set of complex numbers and real number as $\mathbb{C}$ and $\mathbb{R}$ respectively. We use calligraphic uppercase letters to represent index sets, and use $|\cdot|$ to represent a set's cardinality. When we use an index set as the subscript of a vector, we refer to the part of the vector over the index set. For example, $\bm{x}_{\Omega}$ is the part of vector $\bm{x}$
over the index set $\Omega$.  We use $\mathbb{C}^{n_1\times n_2}_{{r}}$ to represent the set of matrices from $\mathbb{C}^{n_1\times n_2}$ with rank $r$.  We denote the trace of a matrix $\bm{X}$ by ${\rm Tr}(\bm{X})$, and denote the real part and imaginary parts of a matrix $\bm{X}$  by ${\rm Re}(X)$ and ${\rm Im}(X)$ respectively. The superscripts $T$ and $*$ are used to represent transpose, and conjugate transpose of matrices or vectors. The Frobenius norm, nuclear norm, and spectral norm of a matrix are denoted by $\|\cdot\|_F, \|\cdot\|_*$ and $\|\cdot\|_2$ (or $\|\cdot\|$) respectively. The notation $\|\cdot\|$ represents the spectral norm
if its argument is a matrix, and represents the Euclidean norm if its argument is vector. The probability of an event ${S}$ is denoted by $\mathbb{P}({S})$.

\section{Problem statement}
\label{sec:preliminaries}
\quad\ The underlying model for spectrally sparse signal is a mixture of complex exponentials
\begin{align}\label{Def:SignalModel}
{\bm{x}}_j=\sum_{k=1}^{R} c_{k}e^{(\imath 2\pi f_{k} -\tau_k) j}, j\in \{0,1,\cdots,2N-2\}
\end{align}
where $\imath=\sqrt{-1}, f_k\in[0,1)$, $c_k\in\mathbb{C}$, and $\tau_k\geq0$ are the normalized frequency, coefficients, and damping factor, respectively.  We observe $\bm{x}$ over a subset $\mathcal{M} \subseteq \{ 0,1, 2, ..., 2N-2\}$.

To estimate the continuous parameter $f_k$'s, in \cite{candes_towards_2014,tang_compressed_2013}, the authors proposed to use the total variation minimization or the atomic norm minimization
to recover $\bm{x}$ or to recover the parameter $z_{k}$, when $z_{k}=e^{\imath 2 \pi f_k}$ with $f_{k}$ taking continuous values from $[0,1)$. In these two papers,
the author showed that the TV minimization or the atomic norm minimization can recover correctly the continuously-valued frequency $f_k$'s when there are no
observation noises. However, as shown in \cite{candes_towards_2014,tang_compressed_2013,resolutionlimit}, in order for atomic norm minimization to recover spectrally sparse data or the associated frequencies
correctly, it is necessary to require that adjacent frequencies be separated far enough from each other. Let us define the minimum separation between frequencies as the following:
\begin{defn}
(minimum separation, see \cite{candes_towards_2014}) For a frequency subset $\mathcal{F}\subset [0,1)$ with a group of points, the minimum separation is defined as smallest distance two arbitrary different elements in $\mathcal{F}$, i.e.
\begin{align}
{\rm dist}(\mathcal{F})= \inf_{f_i,f_l\in \mathcal{F}, f_i\neq f_l} d(f_i, f_l),
\end{align}
where $d(f_i, f_l)$ is the wrap around distance between two frequencies.
\end{defn}

As shown in \cite{resolutionlimit}, for complex exponentials with $z_k$'s taking values on the complex unit circle,
it is required that their adjacent frequencies $f_k \in [0,1]$'s be at least $\frac{2}{2N-1}$ apart. This separation condition is necessary,
even if we observe the full $(2N-1)$ data samples, and even if the observations are noiseless.

Following the idea the matrix pencil method in \cite{hua_matrix_1990} and Enhanced Matrix Completion (EMaC) in \cite{chen_robust_2014}, we construct a Hankel matrix based on signal ${\bm{x}}$.
More specifically, define the Hankel matrix ${\bm{H}(\bm{x})}\in\mathbb{C}^{N\times N}$ by
\begin{equation}\label{eq:Hankel}
{\bm{H}}_{jk}(\bm{x})=\bm{x}_{j+k-2},\qquad j,k=1,2,\ldots,N.
\end{equation}

The expression \eqref{Def:SignalModel} leads to a rank-$R$ decomposition:
$$
{\bm{H} (\bm{x})}=
\left[
\begin{matrix}
1&\ldots&1\cr
z_1&\ldots&z_R\cr
\vdots&\vdots&\vdots\cr
z_1^{N-1}&\ldots&z_R^{N-1}\cr
\end{matrix}
\right]
\left[\begin{matrix}
c_1\cr &\ddots\cr&&c_R
\end{matrix}
\right]
\left[
\begin{matrix}
1&z_1\ldots&z_1^{N-1}\cr
\vdots&\vdots&\vdots\cr
1&z_R\ldots&z_R^{N-1}\cr
\end{matrix}
\right]
$$

Instead of reconstructing ${\bm{x}}$ directly, we reconstruct the rank-$R$ Hankel matrix ${\bm{H}}$, subject to the observation constraints.
Low rank matrix recovery has been widely studied in recovering a matrix from incomplete observations \cite{candes_exact_2009}.
It is well known that minimizing the nuclear norm can lead to a solution of low-rank matrices. We therefore use the nuclear norm minimization to recover the low-rank matrix $\bm{H}$. More specifically,
for any given $\bm{x}\in{\mathbb{C}^{2N-1}}$, let $\bm{H}(\bm{x})\in\mathbb{C}^{N\times N}$ be the corresponding Hankel matrix. We solve the following optimization problem:
\begin{equation}\label{eq:min}
\min_{\bm{x}}\|\bm{H}(\bm{x})\|_*,\qquad\mbox{subject to}\quad \mathcal{A}(\bm{x})=\bm{b},
\end{equation}
where $\|\cdot\|_*$ is the nuclear norm, and $\mathcal{A}$ and $\bm{b}$ are the linear measurements and measurement results.
When there is noise $\bm{\eta}$ contained in the observation, i.e.,
$$
\bm{b}=\mathcal{A}{\bm{x}}+\bm{\eta},
$$
we solve
\begin{equation}\label{eq:minnoise}
\min_{\bm{x}}\|\bm{H}(\bm{x})\|_*,\qquad\mbox{subject to}\quad \|\mathcal{A}\bm{x}-\bm{b}\|_2\leq\delta,
\end{equation}
where $\delta=\|\bm{\eta}\|_2$ is the noise level.

It is known that the nuclear norm minimization (\ref{eq:min}) can be transformed into a semidefinite program
\begin{align}\label{SDP}
& \min_{\bm{\x},\bm{Q}_1,\bm{Q}_2} \frac{1}{2}({\rm Tr}(\bm{Q}_1)+{\rm Tr}(\bm{Q}_2)) \nonumber\\
& {\rm s.t.\ } \bm{b}=\mathcal{A} \bm{x},\nonumber\\
&\ \ \ \
\begin{array}{l}
\left[\begin{array}{*{20}{c}}
\bm{Q}_1 & \bm{H}(\bm{x})^*\\
\bm{H}(\bm{x}) & \bm{Q}_2
\end{array}\right]
\end{array}
\succeq 0
\end{align}
which can be easily solved with existing convex program solvers such as interior point algorithms. After successfully recovering all the time samples, we can
use the single-snapshot MUSIC algorithm (as discussed in \cite{liao_music_2016}) to identify the underlying frequencies $f_k$.

For the recovered Hankel matrix $\bm{H}(\bm{x})$,  let its SVD be
\begin{align}\label{Def:SVDofZ}
\bm{H}(\bm{x})=[\bm{U}_1\ \bm{U}_2]
        \begin{array}{l}
        \left[\begin{array}{*{20}{c}}
        \bm{\Sigma}_1 &0\\
        0 &0
        \end{array}\right]
        \end{array}[\bm{V}_1\ \bm{V}_2]^*, \bm{U}_1, \bm{V}_1\in\mathbb{C}^{N\times R},
\end{align}
and we define the vector $\bm{\phi}^N(f)$ and imaging function $J(f)$ as
\begin{align}\label{Def:AtomVector}
\bm{\phi}^N(f)=(e^{\imath {2 \pi f 0}}, e^{\imath {2 \pi f 1}},..., e^{\imath {2 \pi f (N-1)}} )^T,
J(f)=\frac{\|\bm{\phi}^N(f)\|_2}{\|\bm{U}_2^*\bm{\phi}^N(f)\|_2}, f\in[0,1).
\end{align}
The single-snapshot MUSIC algorithm is given in Algorithm \ref{alg:MUSIC}.
 \begin{algorithm}
    \caption{The Single-Snapshot MUSIC algorithm \cite{liao_music_2016}}
    \label{alg:MUSIC}
    \begin{algorithmic}[1]
        \STATE require: solution ${\bm {x}}$, parameter $R$ and $N$

        \STATE form Hankel matrix $Z\in \mathbb{C}^{N\times N}$

        \STATE SVD $\bm{Z}=[\bm{U}_1\ \bm{U}_2]\begin{array}{l}\left[\begin{array}{*{20}{c}}
        \bm{\Sigma}_1 &0\\
        0 &0
        \end{array}\right]\end{array}[\bm{V}_1\ \bm{V}_2]^*$ with $\bm{U}_1\in\mathbb{C}^{N\times R}$ and  $\bm{\Sigma}_1\in\mathbb{C}^{R\times R}$

        \STATE compute imaging function $J(f)=\frac{\|\bm{\phi}^N(f)\|_2}{\|\bm{U}_2^*\bm{\phi}^N(f)\|_2}, f\in[0,1)$

        \STATE get set $\hat{\mathcal{F}}$=\{ $f$: $f$ corresponds to $R$ largest local maxima of $J(f)$\}

    \end{algorithmic}
\end{algorithm}

In \cite{liao_music_2016}, the author showed that the MUSIC algorithm can exactly recover all the frequencies by finding the local maximal of $J(f)$. Namely,
for undamped signal (\ref{Def:SignalModel}) with the set of frequencies $\mathcal{F}$, if $N\geq R$, then ``$f \in \mathcal{F}$'' is equivalent to ``$J(f)=\infty$.''

\section{Worst-case performance guarantees of separation-free super-resolution}\label{sec:RecoveryGuarantees}
In this section, we provide the worst-case performance guarantees of Hankel matrix recovery for recovering the superposition of complex exponentials. Namely, we provide the conditions under which the Hankel matrix recovery can uniformly recover the superposition of every possible $R$ complex exponentials. Our results show that the Hankel matrix recovery can achieve separation-free super-resolution, even if we consider the criterion of worst-case performance guarantees. Later in Section \ref{sec:tightness}, we further show that our derived worst-performance guarantees are tight, namely we can find examples where $R$ is bigger than the predicted recoverable sparsity by our theory, and the nuclear norm minimization fails to recover the superposition of complex exponentials.

We let $w_i$ be the number of elements in the $i$-th anti-diagonal of matrix $\bm{H}$, namely,
\begin{align}\label{Def:AntiDiagonalElementsNumber}
w_{i}=
\begin{cases}
i, i=1,2,\cdots,N,\\
2N-i, i=N+1,\cdots,2N-1.
\end{cases}
\end{align}

Here we call the $(2N-1)$ anti-diagonals of $\bm{H}(\bm{x})$ from the left top to the right bottom as the $1$-st anti-diagonal, ..., and the $(2N-1)$-th anti-diagonal.
We also define $w_{min}$ as
\begin{align}\label{Def:MinAntiDiagonalElementsNumber}
w_{min}=\min_{i \in \{0,1,2,...,2N-2\}\setminus \mathcal{M}}  w_{i+1}.
\end{align}

With the setup above, we give the following Theorem \ref{thm:HankelMC} concerning the worst-case performance guarantee of Hankel matrix recovery.

\begin{theorem}\label{thm:HankelMC}
Let us consider the signal model of the superposition of $R$ complex exponentials (\ref{Def:SignalModel}), the observation set $\mathcal{M} \subseteq \{0,1,2,..., 2N-2\}$.
We further define $w_{i}$ of an $N \times N$ Hankel matrix $\bm{H}(\bm{x})$ as in (\ref{Def:AntiDiagonalElementsNumber}), and define
$w_{min}$ as in (\ref{Def:MinAntiDiagonalElementsNumber}). Then the nuclear norm minimization (\ref{eq:min}) will uniquely recover
$\bm{H}(\bm{\x})$, regardless of the (frequency) separation between the $R$ continuously-valued (frequencies) parameters if
\begin{align}\label{Bound:KinHankelMC}
R<\frac{w_{min}}{2(2N-1-|\mathcal{M}|)}.
\end{align}
\end{theorem}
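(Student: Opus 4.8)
The plan is to prove uniqueness by showing that every nonzero feasible perturbation strictly increases the nuclear norm of the Hankel matrix. Since any feasible competitor $\bm{x}'$ must agree with $\bm{x}$ on the observed set $\mathcal{M}$, I would write $\bm{x}' = \bm{x} + \bm{h}$, where $\bm{h}$ is supported on the complementary (missing) index set $\mathcal{D} = \{0,1,\dots,2N-2\}\setminus\mathcal{M}$, so that $|\mathcal{D}| = 2N-1-|\mathcal{M}|$ and the candidate perturbation is itself the Hankel matrix $\bm{\Delta} = \bm{H}(\bm{h})$. Let $\bm{H}(\bm{x}) = \bm{U}_1\bm{\Sigma}_1\bm{V}_1^*$ be its rank-$R$ SVD, let $T$ be the tangent space at $\bm{H}(\bm{x})$, i.e.\ the span of all $\bm{U}_1\bm{A}^* + \bm{B}\bm{V}_1^*$, with orthogonal projection $\mathcal{P}_T$ and $\mathcal{P}_{T^\perp}(\bm{M}) = (\bm{I}-\bm{U}_1\bm{U}_1^*)\bm{M}(\bm{I}-\bm{V}_1\bm{V}_1^*)$. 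The subgradient inequality for the nuclear norm gives $\|\bm{H}(\bm{x})+\bm{\Delta}\|_* \ge \|\bm{H}(\bm{x})\|_* + {\rm Re}\langle \bm{U}_1\bm{V}_1^*, \bm{\Delta}\rangle + \|\mathcal{P}_{T^\perp}\bm{\Delta}\|_*$, so it suffices to verify the tangent-space null condition $\|\mathcal{P}_{T^\perp}\bm{\Delta}\|_* > \|\mathcal{P}_T\bm{\Delta}\|_*$ for every nonzero $\bm{h}$. The entire problem thus reduces to controlling how much of $\bm{\Delta}$ can live in $T$.

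The heart of the argument is three estimates, and the decisive feature is that each sees only the \emph{orthonormal} factors $\bm{U}_1,\bm{V}_1$ (orthogonal projectors onto the column and row spaces), so each is completely insensitive to how ill-conditioned the Vandermonde factorization of $\bm{H}(\bm{x})$ is; this orthonormality is precisely why the bound is separation-free. First, writing $\bm{\Delta}=\sum_{\ell\in\mathcal{D}} \bm{h}_\ell \bm{H}(\bm{e}_\ell)$ and noting that each single-anti-diagonal Hankel matrix $\bm{H}(\bm{e}_\ell)$ is a partial anti-permutation with unit spectral norm, the triangle inequality yields $\|\bm{\Delta}\|_2 \le \|\bm{h}\|_1$. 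Second, because $\bm{U}_1,\bm{V}_1$ have orthonormal columns, $\|\mathcal{P}_T\bm{\Delta}\|_F^2 \le \|\bm{U}_1^*\bm{\Delta}\|_F^2 + \|\bm{\Delta}\bm{V}_1\|_F^2 \le 2R\,\|\bm{\Delta}\|_2^2$, with no dependence on the gaps between the $z_k$. Third, the anti-diagonal structure of the Hankel map gives the matching lower bound $\|\bm{\Delta}\|_F^2 = \sum_{\ell\in\mathcal{D}} w_{\ell+1}|\bm{h}_\ell|^2 \ge w_{min}\,\|\bm{h}\|_2^2$. Combining these with $\|\bm{h}\|_1^2 \le |\mathcal{D}|\,\|\bm{h}\|_2^2$ produces $\|\mathcal{P}_T\bm{\Delta}\|_F^2 \le 2R|\mathcal{D}|\,\|\bm{h}\|_2^2 \le \tfrac{2R|\mathcal{D}|}{w_{min}}\,\|\bm{\Delta}\|_F^2$, so under the hypothesis $R < w_{min}/(2|\mathcal{D}|)$ the ratio is strictly below one and a definite fraction of $\bm{\Delta}$'s Frobenius energy is forced into $T^\perp$; in particular no nonzero feasible $\bm{\Delta}$ can lie inside the tangent space.

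I expect the main obstacle to be promoting this Frobenius-energy statement to the nuclear-norm optimality condition $\|\mathcal{P}_{T^\perp}\bm{\Delta}\|_* > \|\mathcal{P}_T\bm{\Delta}\|_*$ with \emph{exactly} the stated constant $2$ rather than a weaker $R$-dependent one, since naively converting the at-most-rank-$2R$ object $\mathcal{P}_T\bm{\Delta}$ from $\|\cdot\|_F$ to $\|\cdot\|_*$ costs a spurious factor $\sqrt{2R}$, and the sign term ${\rm Re}\langle\bm{U}_1\bm{V}_1^*,\bm{\Delta}\rangle$ must be absorbed as well. The plan to close this gap is to work directly in the nuclear norm via $\|\mathcal{P}_T\bm{\Delta}\|_* \le 2R\|\bm{\Delta}\|_2 \le 2R\sqrt{|\mathcal{D}|}\,\|\bm{h}\|_2$ together with the dual-energy lower bound $\|\bm{\Delta}\|_* \ge \|\bm{\Delta}\|_F^2/\|\bm{\Delta}\|_2 \ge (w_{min}/\sqrt{|\mathcal{D}|})\|\bm{h}\|_2$, and to exploit the complex-symmetry of Hankel matrices (the Takagi factorization forces $\bm{V}_1 = \overline{\bm{U}_1}$, giving $\mathcal{P}_T\bm{\Delta}$ and the sign matrix $\bm{U}_1\bm{V}_1^*$ extra structure) to merge the two tangent pieces and remove the avoidable factor, so that the threshold $2R|\mathcal{D}| < w_{min}$ survives intact. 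Finally I would record that the unique recovery of $\bm{H}(\bm{x})$, followed by the single-snapshot MUSIC step of Algorithm~\ref{alg:MUSIC}, identifies the frequencies $f_k$ exactly, with no separation hypothesis entering the argument at any point.
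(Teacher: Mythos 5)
Your outline shares the paper's two structural estimates --- the spectral--norm bound $\|\bm{\Delta}\|_2\le\|\bm{h}\|_1\le\sqrt{|\mathcal{D}|}\,\|\bm{h}\|_2$ (the paper derives the same $\sqrt{|\mathcal{D}|}\,\|\bm{h}\|_2$ bound by a row-wise Cauchy--Schwarz argument) and the anti-diagonal energy identity $\|\bm{\Delta}\|_F^2=\sum_{\ell\in\mathcal{D}}w_{\ell+1}|\bm{h}_\ell|^2\ge w_{min}\|\bm{h}\|_2^2$ --- but it embeds them in a different framework, and that is exactly where the proof does not close. The paper feeds these estimates into the signal-independent null-space condition of Lemma \ref{lem:StrongNullSpaceCondition}: uniform recovery holds iff $2\|\bm{Z}\|_{*R}<\|\bm{Z}\|_*$ for every nonzero $\bm{Z}=\bm{H}(\bm{h})$ in the null space, where $\|\bm{Z}\|_{*R}\le R\,\sigma_1(\bm{Z})$ is the sum of the top $R$ singular values; then $\|\bm{Z}\|_*/\|\bm{Z}\|_{*R}\ \ge\ \|\bm{Z}\|_F^2/\bigl(R\,\sigma_1(\bm{Z})^2\bigr)\ \ge\ w_{min}/\bigl(R\,|\mathcal{D}|\bigr)$, which exceeds $2$ precisely under (\ref{Bound:KinHankelMC}). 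Your tangent-space route replaces $\|\bm{Z}\|_{*R}$ by $\|\mathcal{P}_T\bm{\Delta}\|_*$, and this costs a factor of $2$: since $\mathcal{P}_T\bm{\Delta}=\bm{U}_1\bm{U}_1^*\bm{\Delta}+(\bm{I}-\bm{U}_1\bm{U}_1^*)\bm{\Delta}\bm{V}_1\bm{V}_1^*$ has rank up to $2R$, your bound is $\|\mathcal{P}_T\bm{\Delta}\|_*\le 2R\|\bm{\Delta}\|_2$, and converting $\|\mathcal{P}_{T^\perp}\bm{\Delta}\|_*\ge\|\bm{\Delta}\|_*-\|\mathcal{P}_T\bm{\Delta}\|_*$ forces you to verify $\|\bm{\Delta}\|_*>2\|\mathcal{P}_T\bm{\Delta}\|_*$. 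Chaining your own estimates then proves the theorem only under $R<w_{min}/\bigl(4(2N-1-|\mathcal{M}|)\bigr)$, a factor of $2$ short of the stated threshold. (Your intermediate Frobenius-energy conclusion only shows $T\cap\mathcal{N}(\mathcal{A})=\{0\}$, which you correctly recognize as insufficient.)

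The repair you propose --- invoking the Takagi factorization $\bm{V}_1=\overline{\bm{U}_1}$ to ``merge the two tangent pieces'' --- is not substantiated, and I do not believe it works: the complex symmetry of Hankel matrices does not reduce the generic rank of $\mathcal{P}_T\bm{\Delta}$ below $2R$, so the bound $\|\mathcal{P}_T\bm{\Delta}\|_*\le 2R\|\bm{\Delta}\|_2$ cannot be improved to $R\|\bm{\Delta}\|_2$ by symmetry alone. The way to recover the constant is to abandon the tangent-space decomposition and argue on singular values, as the paper does: the perturbation inequality $\|\bm{X}_0+\bm{Z}\|_*\ge\sum_i|\sigma_i(\bm{X}_0)-\sigma_i(\bm{Z})|$ gives $\|\bm{X}_0+\bm{Z}\|_*\ge\|\bm{X}_0\|_*+\|\bm{Z}\|_*-2\|\bm{Z}\|_{*R}$ for any rank-$R$ matrix $\bm{X}_0$, so the object to control is the sum of the top $R$ (not $2R$) singular values of the null-space element, bounded by $R\sigma_1(\bm{Z})$. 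With that substitution your remaining steps --- restriction of the perturbation to the unobserved set $\mathcal{D}$, the two norm estimates, and the concluding MUSIC step --- go through and match the paper's argument.
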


\begin{proof}

We can change (\ref{eq:min}) to the following optimization problem:
\begin{equation}\label{eq:revisedmin}
\min_{\bm{B}, \bm{x}}\|\bm{B}\|_*,\qquad\mbox{subject to}\quad \bm{B}=\bm{H}(\bm{x}),~~\mathcal{A}(\bm{x})=\bm{b}.
\end{equation}
We can think of (\ref{eq:revisedmin}) as a nuclear norm minimization problem, where the null space of the linear operator (applied to ($\bm{B}$,$\bm{x}$)) in the constraints of (\ref{eq:revisedmin}) is
given by $(\bm{H}(\bm{z}), \bm{z})$ such that $\mathcal{A}(\bm{z})=0$.

From \cite{recht_null_2011,oymak_new_2010}, we have the following lemma about
the null space condition for successful signal recovery via nuclear norm minimization.
\begin{lem}\label{lem:StrongNullSpaceCondition} \cite{oymak_new_2010}
Let $\bm{X}_0$ be any $N \times N$ matrix of rank $R$, and we observe it through a linear mapping $\mathcal{A}(\bm{X}_0)=\bm{b}$. Then the nuclear norm minimization (\ref{eq:NNM})
\begin{equation}\label{eq:NNM}
\min_{\bm{X}}\|\bm{X}\|_*,\qquad\mbox{subject to}\quad \mathcal{A}(\bm{X})=\mathcal{A}(\bm{X}_0),
\end{equation}
can uniquely and correctly recover every matrix $\bm{X}_0$ with rank no more than $R$ if and only if, for all nonzero $ \bm{Z} \in \mathcal{N}(\mathcal{A})$,
\begin{align}\label{eq:NullSpaceUniqueness}
2\|\bm{Z}\|_{*R}<\|\bm{Z}\|_{*},
\end{align}
where $\|\bm{Z}\|_{*R}$ is the sum of the largest $R$ singular values of $\bm{Z}$, and $\mathcal{N}(\mathcal{A})$ is the null space of $\mathcal{A}$.
\end{lem}

Using this lemma, we can see that (\ref{eq:revisedmin}) or (\ref{eq:min}) can correctly recover $\bm{x}$ as a superposition of $R$ complex exponentials if
$\|\bm{H}(\bm{z})\|_{*R} < \|\bm{H}(\bm{z})\|_{*}$ holds true for every nonzero $\bm{z}$ from the null space of $\mathcal{A}$.

Considering the sampling set $\mathcal{M} \subseteq \{0,1,2,..., 2N-2\}$, the null space of the sampling operator $\mathcal{A}$ is
composed of $(2N-1)\times 1$ vectors $\bm{z}$'s such that $\bm{z}_{i+1}=0$ if $i \in \mathcal{M}$. For such a vector $\bm{z}$, let us denote the element across the $i$-th anti-diagonal
of $\bm{H}(\bm{z})$ as $a_{i}$, and thus $\bm{Q}=\bm{H}(\bm{z})$ is a Hankel matrix with its $(i+1)$ anti-diagonal element equal to $0$ if $i \in \mathcal{M}$. Let $\sigma_1,\cdots,\sigma_{N}$ be the $N$
singular values of $\bm{H}(\bm{z})$ arranged in a descending order. To verify the null space condition for nuclear norm minimization, we would like to find the largest $R$ such that
\begin{align}
\sigma_1+\cdots+\sigma_{R} < \sigma_{R+1}+\cdots+\sigma_{N},
\end{align}
for every nonzero $\bm{z}$ in the null space of $\mathcal{A}$.

Towards this goal, we first obtain a bound for its largest singular value (for a matrix $\bm{Q}$, we use $\bm{Q}_{i, :}$ to denote its $i$-th row vector):
\begin{align}
\sigma_1&= \max_{{\bm u} \in \mathbb{C}^N, \|{\bm u}\|_2=1}\|\bm{Q}{\bm u}\|_2\\
&= \max_{{\bm u} \in \mathbb{C}^N, \|{\bm u}\|_2=1}\sqrt{    \sum_{i=1}^{N} |\bm{Q}_{i, :} {\bm u}|^2       }\\
& = \max_{{\bm u} \in \mathbb{C}^N, \|{\bm u}\|_2=1}\sqrt{    \sum_{i=1}^{N}
 \left|\sum_{j\in\{j:\ i \in Ind(j),a_j \neq 0\}} a_j \bm{u}_{j-i+1}\right|^2
  }\label{Eq:RowElementCount}\\
&\leq \max_{{\bm u} \in \mathbb{C}^N, \|{\bm u}\|_2=1} \sqrt{    \sum_{i=1}^{N} \left(\sum_{j\in\{j: i \in Ind(j),a_j \neq 0\}}  |a_j|^2 \right ) \left(\sum_{j\in\{j: i \in Ind(j),a_j \neq 0\}}  |{\bm u}_{j-i+1}|^2 \right)       } \label{cauchy}\\
&= \max_{{\bm u} \in \mathbb{C}^N, \|{\bm u}\|_2=1} \sqrt{     \left(\sum_{j\in\{j: i \in Ind(j),a_j \neq 0\}}  |a_j|^2 \right )
\sum_{i=1}^{N}
\left(\sum_{j\in\{j: i \in Ind(j),a_j \neq 0\}}  |{\bm u}_{j-i+1}|^2 \right)       } \\
&\leq \max_{{\bm u} \in \mathbb{C}^N, \|{\bm u}\|_2=1} \sqrt{ \sum_{j_1=1}^{2N-1} \left[|a_{j_1}|^2 \left(\sum_{i \in Ind(j_{1})}~\sum_{j_2\in\{j: i \in Ind(j),a_j \neq 0\} } |{\bm u}_{j_2-i+1}|^2    \right)\right]} \label{Eq:AntidiagonalElementCount}\\
&\leq \max_{{\bm u} \in \mathbb{C}^N, \|{\bm u}\|_2=1} \sqrt{ \sum_{j_1=1}^{2N-1} \left[|a_{j_1}|^2 (2N-1-M)\right]} \label{adding}\\
&= \sqrt{ \sum_{j \in \{0, 1, ..., 2N-2\}\setminus \mathcal{M}} |a_{j+1}|^2 (2N-1-M)}.
\end{align}
where (\ref{Eq:RowElementCount}) is obtained by looking at the rows (with $Ind(j)$ being the set of indices of rows which intersect with the $j$-th anti-diagonal and $\{j: i \in Ind(j),a_j \neq 0\}$ being the set of all non-zero anti-diagonals intersecting with the $i$-th row),  (\ref{cauchy}) is due to the Cauchy-Schwarz inequality, and (\ref{adding}) is because $\|\bm{u}\|_2=1$ and, for each $i$,
$|\bm{u}_i|^2$ appears for no more than $(2N-1-M)$ times in $\left(\sum_{i \in Ind(j_{1})}~\sum_{j_2\in\{j: i \in Ind(j),a_j \neq 0\} } |{\bm u}_{j_2-i+1}|^2    \right)$.

Furthermore, summing up the energy of the matrix $\bm{Q}$, we have
\begin{align}
\sum_{i=1}^{N} \sigma_{i}^2= \sum_{i \in \{0, 1, ..., 2N-2\}\setminus \mathcal{M}} |a_{i+1}|^2 w_{i+1}.
\end{align}
Thus for any integer $k \leq N$, we have
\begin{align}
 \frac{\sum_{i=1}^{N} \sigma_{i}}{\sum_{i=1}^{k} \sigma_{i}}
  & \geq \frac{\sum_{i=1}^{N} \sigma_{i}}{k\sigma_1} \nonumber\\
  & \geq  \frac{\sum_{i=1}^{N} \sigma_{i}^2}{k\sigma_1^2} \nonumber\\
  & \geq \frac{\sum_{i\in \{0, 1, ..., 2N-2\}\setminus \mathcal{M} } |a_{i+1}|^2 w_{i+1}}{k\sum_{j\in \{0, 1, ..., 2N-2\}\setminus \mathcal{M}} |a_{j+1}|^2 (2N-1-M)} \nonumber\\
  &   \geq  \frac{\min_{i \in \{0, 1, ..., 2N-2\}\setminus \mathcal{M}} w_{i+1} }{k(2N-1-M)} \nonumber\\
  & =\frac{w_{min}}{k(2N-1-M)}.
\end{align}

So if
\begin{align}
\frac{\min_{i \in \{0, 1, ..., 2N-2\}\setminus \mathcal{M}} w_{i+1} }{R(2N-1-M)}>2,
\end{align}
then for ever nonzero vector $\bm{z}$ in the null space of $\mathcal{A}$, and the corresponding Hankel matrix $\bm{Q}=\bm{H}(\bm{z})$,
\begin{align}
\sum_{i=1}^{N} \sigma_{i}>2{\sum_{i=1}^{R} \sigma_{i}}.
\end{align}

It follows that, for any superposition of $R < \frac{w_{min}}{2(2N-1-M)} $ complex exponentials, we can correctly recover $\bm{x}$ over the whole set $\{0, 1, ..., 2N-2\}$ using the incomplete sampling set $\mathcal{M}$,
regardless of the separations between different frequencies (or between the continuously-valued parameters $z_{k}$'s for damped complex exponentials).
\end{proof}

On the one hand,  the performance guarantees given in Theorem \ref{thm:HankelMC} can be conservative: for average-case performance guarantees,even when the number of complex exponentials $R$ is bigger than predicted by Theorem \ref{thm:HankelMC}, the Hankel matrix recovery can still recover the missing data, even though the sinusoids can be very close to each other. On the other hand, the bounds on recoverable sparsity level $R$ given in Theorem \ref{thm:HankelMC} is tight for worst-case performance guarantees, as shown in the next section.
\subsection{Tightness of recoverable sparsity guaranteed by Theorem \ref{thm:HankelMC}}
We will show the tightness of recoverable sparsity guaranteed by Theorem \ref{thm:HankelMC} in Section \ref{sec:tightness}, which is built on the developments in Section \ref{sec:orthogonal}.

\section{Average-case performance guarantees}
\label{sec:average}
In this section, we study the performance guarantees for Hankel matrix recovery, when the phases of the coefficients of the $R$ sinusoids are iid and uniformly distributed over $[0, 2\pi)$.
For average-case performance guarantees, we show that we can recover the superposition of a larger number of complex exponentials than Theorem \ref{thm:HankelMC} offers.

\subsection{Average-case performance guarantees for orthogonal frequency atoms and the tightness of worst-case performance guarantees}
\label{sec:orthogonal}
\begin{theorem}
\label{thm:OrthogonalAtoms}
Let us consider the signal model of the superposition of $R$ complex exponentials (\ref{Def:SignalModel}) with $\tau_k=0$.  We assume that the
$R$ frequencies $f_1$, $f_2$,..., and $f_R$ are such that the atoms $(e^{\imath {2 \pi f_i 0}}, e^{\imath {2 \pi f_i 1}},..., e^{\imath {2 \pi f_i (N-1)}} )^T$, $1\leq i \leq R$, are orthogonal to each other. We let the observation set
be $\mathcal{M} = \{0,1,2,..., 2N-2\}\setminus \{N-1\}$. We assume the phases of coefficients $c_1,\cdots,c_R$ in signal model (\ref{Def:SignalModel}) are independent and uniformly distributed over $[0, 2\pi)$.
Then the nuclear norm minimization (\ref{eq:min}) will successfully and uniquely recover
$\bm{H}(\bm{\x})$ and ${\bm x}$, with probability approaching $1$ as $N \rightarrow \infty$ if
\begin{align}
R = N-c \sqrt{\log(N)N},
\end{align}
where $c>0$ is a constant.
\end{theorem}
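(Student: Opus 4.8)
The plan is to exploit the extreme structure of the sampling set. Since $\mathcal{M}=\{0,1,\ldots,2N-2\}\setminus\{N-1\}$ omits a single index, the null space $\mathcal{N}(\mathcal{A})$ is the single complex direction $\mathrm{span}\{\bm{e}_{N-1}\}$, so the only Hankel perturbation I must rule out is $\bm{J}:=\bm{H}(\bm{e}_{N-1})$, which is exactly the $N\times N$ reversal (anti-identity) matrix and hence unitary. The feasible set in matrix space is the line $\{\bm{H}(\bm{x})+t\bm{J}:t\in\mathbb{C}\}$, so by convexity, exact and unique recovery is equivalent to $\|\bm{H}(\bm{x})+t\bm{J}\|_*>\|\bm{H}(\bm{x})\|_*$ for every nonzero $t$. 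First I would write the tangent space $T$ of the rank-$R$ matrix $\bm{H}(\bm{x})$ and apply the subdifferential/directional-derivative formula for the nuclear norm: the derivative at $\bm{H}(\bm{x})$ along $e^{\imath\alpha}\bm{J}$ equals $\mathrm{Re}\bigl(e^{\imath\alpha}\langle\bm{U}_1\bm{V}_1^*,\bm{J}\rangle\bigr)+\|\mathcal{P}_{T^\perp}\bm{J}\|_*$, where $\mathcal{P}_{T^\perp}(\cdot)=(\bm{I}-\bm{U}_1\bm{U}_1^*)(\cdot)(\bm{I}-\bm{V}_1\bm{V}_1^*)$. Demanding positivity over all phases $\alpha$ collapses the whole theorem to the single scalar inequality
\[
\|\mathcal{P}_{T^\perp}\bm{J}\|_*>\bigl|\langle\bm{U}_1\bm{V}_1^*,\bm{J}\rangle\bigr|.
\]

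Next I would make both sides explicit using the orthogonality of the atoms. Writing $\bm{\Phi}=[\bm{\phi}^N(f_1),\ldots,\bm{\phi}^N(f_R)]$ with $\bm{\Phi}^*\bm{\Phi}=N\bm{I}_R$, the factorization $\bm{H}(\bm{x})=\bm{\Phi}\,\mathrm{diag}(c_k)\,\bm{\Phi}^T$ yields an explicit SVD: $\bm{U}_1=\bm{\Phi}/\sqrt{N}$, $\bm{V}_1=\overline{\bm{\Phi}}/\sqrt{N}$ up to the coefficient phases, singular values $N|c_k|$, and sign matrix $\bm{U}_1\bm{V}_1^*=\frac{1}{N}\sum_k e^{\imath\psi_k}\bm{\phi}^N(f_k)\bm{\phi}^N(f_k)^T$ with $\psi_k=\arg c_k$. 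The computational engine is the reflection identity $\bm{J}\bm{\phi}^N(f)=z^{N-1}\overline{\bm{\phi}^N(f)}$, valid because $|z|=1$. Substituting it gives $\langle\bm{U}_1\bm{V}_1^*,\bm{J}\rangle=\sum_{k=1}^{R}e^{-\imath\psi_k}\bar z_k^{\,N-1}$, a sum of $R$ unit-modulus terms with independent uniform phases. A Hoeffding bound on its real and imaginary parts then gives $\bigl|\langle\bm{U}_1\bm{V}_1^*,\bm{J}\rangle\bigr|<c\sqrt{N\log N}$ with probability $\to 1$, since the exceedance probability is at most $N^{-\Theta(c^2)}$.

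For the left-hand side I would compute the Frobenius energy exactly rather than estimate it. Using the same reflection identity and $\bm{\Phi}^*\bm{\Phi}=N\bm{I}_R$, one finds $\mathrm{Tr}(\bm{V}_1\bm{V}_1^*\bm{J}^*\bm{U}_1\bm{U}_1^*\bm{J})=R$, so expanding $\mathcal{P}_{T^\perp}\bm{J}=(\bm{I}-\bm{U}_1\bm{U}_1^*)\bm{J}(\bm{I}-\bm{V}_1\bm{V}_1^*)$ yields $\|\mathcal{P}_{T^\perp}\bm{J}\|_F^2=N-2R+R=N-R$. The crucial move — and the reason the average-case threshold beats the worst-case $R<N/2$ of Theorem \ref{thm:HankelMC} — is that I must \emph{not} bound the nuclear norm by the Frobenius norm, which gives only $\sqrt{N-R}=O\bigl((N\log N)^{1/4}\bigr)$ and is far too small. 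Instead I would invoke $\|\bm{A}\|_*\ge\|\bm{A}\|_F^2/\|\bm{A}\|_2$ together with $\|\mathcal{P}_{T^\perp}\bm{J}\|_2\le\|\bm{J}\|_2=1$ to get $\|\mathcal{P}_{T^\perp}\bm{J}\|_*\ge N-R=c\sqrt{N\log N}$. Combining the two sides produces $\|\mathcal{P}_{T^\perp}\bm{J}\|_*\ge c\sqrt{N\log N}>\bigl|\langle\bm{U}_1\bm{V}_1^*,\bm{J}\rangle\bigr|$ with probability approaching $1$, which is precisely the recovery inequality.

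The main obstacle is the tension between the \emph{tiny} Frobenius energy $N-R=c\sqrt{N\log N}$ and the \emph{typical} inner-product magnitude $\sqrt{R}\approx\sqrt{N}$: the naive Frobenius lower bound fails outright, and everything hinges on the singular values of $\mathcal{P}_{T^\perp}\bm{J}$ being spread out so that the spectral norm stays $\le 1$ and the $\ell_1$-versus-$\ell_2$ gap of the spectrum converts the small energy into a nuclear norm of order $N-R$. I would therefore concentrate the effort on verifying the two exact algebraic identities (the reflection rule and $\mathrm{Tr}(\bm{V}_1\bm{V}_1^*\bm{J}^*\bm{U}_1\bm{U}_1^*\bm{J})=R$) and on confirming $\|\mathcal{P}_{T^\perp}\bm{J}\|_2\le 1$; after these are in place the randomness enters only through the one-dimensional concentration bound on $\sum_k e^{-\imath\psi_k}\bar z_k^{\,N-1}$, and the constant $c>0$ may be taken arbitrary.
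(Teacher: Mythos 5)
Your proposal is correct and follows essentially the same route as the paper: it identifies the one-dimensional null space spanned by the reversal matrix $\bm{J}$, reduces recovery to the sufficient condition $\|\bar{\bm{U}}^*\bm{Q}\bar{\bm{V}}\|_* > |{\rm Tr}(\bm{V}\bm{U}^*\bm{Q})|$, evaluates the trace term as a sum of $R$ independent unit-modulus phases and concentrates it at $O(\sqrt{N\log N})$, and shows the complementary term is at least $N-R$. The only local difference is that you obtain $\|\mathcal{P}_{T^\perp}\bm{J}\|_*\ge N-R$ via the bound $\|\bm{A}\|_*\ge\|\bm{A}\|_F^2/\|\bm{A}\|_2$ with $\|\mathcal{P}_{T^\perp}\bm{J}\|_2\le 1$, whereas the paper notes that $\bar{\bm{U}}^*\bm{J}\bar{\bm{V}}$ is a diagonal unitary matrix and hence has nuclear norm exactly $N-R$; both yield the same conclusion.
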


\begin{proof}
We use the following Lemma \ref{lem:WeakNullSpaceCondition} about the condition for successful signal recovery through nuclear norm minimization. This lemma is an extension of Lemma 13 in \cite{oymak_new_2010}.
The key difference is Lemma \ref{lem:WeakNullSpaceCondition} deals with complex-numbered matrices. Moreover, Lemma \ref{lem:WeakNullSpaceCondition} gets rid of the ``iff'' claim for the null space condition in Lemma 13 of \cite{oymak_new_2010},
because we find that the condition in Lemma 13 of \cite{oymak_new_2010} is a sufficient condition for the success of nuclear norm minimization,
but not a necessary condition for the success of nuclear norm minimization. We give the proof of Lemma \ref{lem:WeakNullSpaceCondition}, and prove
the null space condition is only a sufficient condition for nuclear norm minimization in Appendix \ref{ProofofLem:WeakNullSpaceCondition} and Appendix \ref{OnlySufficient} respectively.

\begin{lem}\label{lem:WeakNullSpaceCondition}
Let $\bm{X}_0$ be any $M \times N$ matrix of rank $R$ in $\mathbb{C}^{M \times N}$, and we observe it through a linear mapping $\mathcal{A}(\bm{X}_0)=\bm{b}$. We also assume that $\bm{X}_0$ has a singular value decomposition (SVD)
 $\bm{X}_0= \bm{U} \bm{\Sigma} \bm{V}^*$, where $\bm{U} \in \mathbb{C}^{M \times R}$, $\bm{V}\in \mathbb{C}^{N \times R}$, and $\bm{\Sigma} \in \mathbb{C}^{R \times R}$ is a diagonal matrix.
Then the nuclear norm minimization (\ref{eq:newNNM2})
\begin{equation}\label{eq:newNNM2}
\min_{\bm{X}}\|\bm{X}\|_*,\qquad\mbox{subject to}\quad \mathcal{A}(\bm{X})=\mathcal{A}(\bm{X}_0),
\end{equation}
correctly and uniquely recovers $\bm{X}_0$ if, for every nonzero element $\bm{Q} \in \mathcal{N}(\mathcal{A})$,
\begin{align}
-|{\rm Tr}(\bm{U}^*\bm{Q}\bm{V})|+\|\bar{\bm{U}}^*\bm{Q}\bar{\bm{V}}\|_*> 0,
\end{align}
\noindent where $\bar{\bm{U}}$ and $\bar{\bm{V}}$ are such that $[\bm{U}\ \bar{\bm{U}}]$ and $[\bm{V}\ \bar{\bm{V}}]$ are unitary.
\end{lem}

We can change (\ref{eq:min}) to the following optimization problem
\begin{equation}\label{eq:newrevisedmin}
\min_{\bm{B}, \bm{x}}\|\bm{B}\|_*,\qquad\mbox{subject to}\quad \bm{B}=\bm{H}(\bm{x}),~~\mathcal{A}(\bm{x})=\bm{b}.
\end{equation}
We can think of (\ref{eq:newrevisedmin}) as a nuclear norm minimization, where he null space of the linear operator (applied to $\bm{B}$ and $\bm{x}$) in the constraints of (\ref{eq:revisedmin}) is
given by $(\bm{H}(\bm{z}), \bm{z})$ such that $\mathcal{A}(\bm{z})=0$.

Then we can see that (\ref{eq:newrevisedmin}) or (\ref{eq:min}) can correctly recover $\bm{x}$ as a superposition of $R$ complex exponentials if
\begin{align}\label{eq:WeakNullSpaceCondition}
|{ {\rm Tr}(\bm{V}\bm{U}^* \bm{H}(\bm{z}))}|< \|\bar{\bm{U}}^* \bm{H}(\bm{z}) \bar{\bm{V}} \|_{*},
\end{align}
hold true for any $\bm{z}$ which is a nonzero vector from the null space of $\mathcal{A}$, where $\bm{H}(\bm{x})=\bm{U}\bm{{\Sigma}}\bm{V}^*$ is the singular value decomposition (SVD) of $\bm{H}(\bm{x})$
with $\bm{U} \in \mathbb{C}^{N \times R}$ and $\bm{V} \in \mathbb{C}^{N \times R}$, and $\bar{\bm{U}}$ and $\bar{\bm{V}}$ are such that $[\bm{U}, \bar{\bm{U}}]$ and $[\bm{V}, \bar{\bm{V}}]$ are unitary.

Without loss of generality, let $f_k=\frac{s_k}{N}$ for $1\leq k \leq R$, where $s_k$'s are distinct integers between $0$ and $N-1$. Then
\begin{align}\label{Def:SVD-I}
\bm{U}
= \frac{1}{\sqrt{N}}
\begin{array}{l}
\left[\begin{array}{*{20}{c}}
e^{\imath {2 \pi \frac{s_1}{N} 0}} & e^{\imath {2 \pi \frac{s_2}{N} 0}} &\cdots &e^{\imath {2 \pi \frac{s_R}{N} 0}}  \\
e^{\imath {2 \pi \frac{s_1}{N} 1}} & e^{\imath {2 \pi \frac{s_2}{N} 1}} &\cdots &e^{\imath {2 \pi \frac{s_R}{N} 1}}  \\
\vdots &\vdots &\ddots &\vdots \\
e^{\imath {2 \pi \frac{s_1}{N} (N-1)}} & e^{\imath {2 \pi \frac{s_2}{N} (N-1)}} & \cdots  &e^{\imath {2 \pi \frac{s_R}{N} (N-1)}}
\end{array}\right]
\left[\begin{array}{*{20}{c}}
e^{-\imath\theta_1} &0 &\cdots &0  \\
0 & e^{-\imath \theta_2} &\cdots &0  \\
\vdots  &\vdots & \ddots &\vdots \\
0 & 0 &\cdots&e^{-\imath\theta_R}
\end{array}\right]
\end{array},
\end{align}
and
\begin{align}\label{Def:SVD-II}
\bm{V}=\frac{1}{\sqrt{N}}
\begin{array}{l}
\left[\begin{array}{*{20}{c}}
e^{-\imath {2 \pi \frac{s_1}{N} 0}} & e^{-\imath {2 \pi \frac{s_2}{N} 0}}  &\cdots &e^{-\imath {2 \pi \frac{s_R}{N} 0}}  \\
e^{-\imath {2 \pi \frac{s_1}{N} 1}} & e^{-\imath {2 \pi \frac{s_2}{N} 1}}  &\cdots &e^{-\imath {2 \pi \frac{s_R}{N} 1}}  \\
\vdots &\vdots & \ddots &\vdots \\
e^{\imath {-2 \pi \frac{s_1}{N} (N-1)}} & e^{-\imath {2 \pi \frac{s_2}{N} (N-1)}} & \cdots &e^{-\imath {2 \pi \frac{s_R}{N} (N-1)}}
\end{array}\right]
\end{array},
\end{align}
and
\begin{align}\label{Def:SVD-III}
\bm{\Sigma}=N
\begin{array}{l}
\left[\begin{array}{*{20}{c}}
|c_1| &0 &\cdots &0\\
0 & |c_2| &\cdots &0\\
\vdots &\vdots &\ddots &\vdots\\
0 &0 &\cdots &|c_R|
\end{array}\right]
\end{array},
\end{align}
where $\theta_k$'s ($1\leq k \leq R$) are iid random variables uniformly distributed over $[0, 2\pi)$.

When the observation set $\mathcal{M} = \{0,1,2,..., 2N-2\}\setminus \{N-1\}$, any $\bm{Q}=\bm{H}(\bm{z})$ with $\bm{z}$ from the null space of $\mathcal{A}$ takes the following form:
\begin{align}\label{Def:NullSpaceElement}
\bm{Q}=a
\begin{array}{l}
\left[\begin{array}{*{20}{c}}
0 &\cdots &1\\
\vdots &\reflectbox{$\ddots$} &\vdots\\
1 &\cdots &0
\end{array}\right],
\end{array}
a\in\mathbb{C}.
\end{align}
Thus
\begin{align}\label{RandomVariableSum}
   |{ {\rm Tr}(\bm{V}\bm{U}^* \bm{Q})}|
   & = \left|a{ {\rm Tr}\left(\bm{U}^*
   \begin{array}{l}
  \left[\begin{array}{*{20}{c}}
  0 &\cdots &1\\
  \vdots &\reflectbox{$\ddots$} &\vdots\\
  1 &\cdots &0
\end{array}\right]
\end{array}
\bm{V}\right)}\right| \nonumber\\
&=|a|\left(\frac{1}{\sqrt{N}}\right)^2\sum_{i=1}^{R} \sum_{t=0}^{N-1}e^{\imath\theta_i} e^{-\imath {2 \pi \frac{s_i (N-1-t)}{N}}} \times  e^{-\imath {2 \pi \frac{s_i (t)}{N} }} \\
   &=\frac{|a|}{N}\sum_{i=1}^{R} \sum_{t=0}^{N-1} e^{\imath\theta_i}e^{-\imath {2 \pi \frac{s_i (N-1)}{N}}}\\
   &={|a|}\sum_{i=1}^{R} e^{\imath\theta_i} e^{-\imath {2 \pi \frac{s_i (N-1)}{N}}}.
\end{align}

Notice that random variable $e^{\imath\theta_i} e^{-\imath {2 \pi \frac{s_i (N-1)}{N}}}$ are mutually independent random variables uniformly distributed over the complex unit circle. We will use the following lemma (see for example, \cite{tropp_introduction_2015}) to provide a concentration of measure result for the summation of $e^{\imath\theta_i} e^{-\imath {2 \pi \frac{s_i (N-1)}{N}}}$:
\begin{lem}\label{lem:BernsteinConcentration}
(see for example, \cite{tropp_introduction_2015}) For a sequence of i.i.d. random matrix matrices $\bm{M}_1,\cdots,\bm{M}_K$ with dimension $d_1\times d_2$ and their sum $\bm{M}=\sum_{i=1}^K \bm{M}_i$, if $\bm{M}_i$ satisfies
\begin{align}
\mathbb{E}[\bm{M}_i]=0, \|\bm{M}_i\|\leq L, \forall i=1,\cdots,K,
\end{align}
and $\bm{M}$ satisfies
\begin{align}
\nu(\bm{M})=\max\left\{\left|\left|\sum_{i=1}^K \mathbb{E}[\bm{M}_i \bm{M}_i^*]\right|\right|, \left|\left|\sum_{i=1}^K \mathbb{E}[\bm{M}_i^* \bm{M}_i]\right|\right|\right\},
\end{align}
then
\begin{align}
\mathbb{P}(\|\bm{M}\|\geq t) \leq (d_1+d_2)\cdot \exp\left(\frac{-t^2/2}{v(\bm{M})+Lt/3}\right), \forall t\geq 0.
\end{align}
\end{lem}

Applying Lemma \ref{lem:BernsteinConcentration} to the $1\times 2$ matrix composed of the real and imaginary parts of $e^{\imath\theta_i} e^{-\imath {2 \pi \frac{s_i (N-1)}{N}}}$, with $\nu=max(R,R/2)=R$, $L=1$, $d_1+d_2=3$, we have
\begin{align}\label{Concentration:UnperturbedCase}
\mathbb{P}\left(|{ {\rm Tr}(\bm{V}\bm{U}^* \bm{Q})}|\geq |a|t\right) \leq 3 e^{-\frac{t^2/2}{\nu+Lt/3}}=3 e^{-\frac{t^2/2}{R+t/3}}, \forall t>0.
\end{align}
We further notice that, $\bar{\bm{U}}$'s ( $\bar{\bm{V}}$' s) columns are normalized orthogonal frequency atoms (or their complex conjugates) with frequencies $\frac{l_i}{N}$,
with integer $l_i$'s different from the integers $s_k$'s of those $R$ complex exponentials. Thus
 \begin{align}\label{eq:NuclearNormBound}
\|\overline{\bm{U}}^* \bm{Q} \overline{\bm{V}} \|_*=|a|(N-R).
\end{align}
Pick $t=N-R$, and let $ \frac{(N-R)^2/2}{R+(N-R)/3}=c\log(N)$ with $c$ being a positive constant. Solving for $R$,
we obtain $R=N-\left (\frac{2}{3} \log(N)+ \sqrt{\frac{4}{9} c^2 \log^2(N)+2c \log(N)N } \right)$, which implies (\ref{eq:WeakNullSpaceCondition}) holds with probability
approaching 1 if $N\rightarrow\infty$. This proves our claims.
\end{proof}

\subsection{Tightness of recoverable sparsity guaranteed by Theorem \ref{thm:OrthogonalAtoms}}
\label{sec:tightness}
We show that the recoverable sparsity $R$ provided by Theorem \ref{thm:OrthogonalAtoms} is tight for $\mathcal{M}=\{0,1,2,..., 2N-2\} \setminus \{ N-1\}$.  For such a sampling set, $w_{{min}}=N$, and Theorem
\ref{thm:OrthogonalAtoms} provides a bound $R<\frac{w_{min}}{2}=\frac{N}{2}$.

In fact, we can show that if $R\geq \frac{N}{2}$, we can construct signal examples where the Hankel matrix recovery approach cannot recover the original signal $\bm{x}$.
 Consider the signal in Theorem \ref{thm:OrthogonalAtoms}. We choose the coefficients $c_i$'s such that  $e^{\imath\theta_i} e^{-\imath {2 \pi \frac{s_i (N-1)}{N}}}=1$, then
\begin{align}\label{Concentra1}
|{ {\rm Tr}(\bm{V}\bm{U}^* \bm{Q})}| ={|a|}\sum_{i=1}^{R} e^{\imath\theta_i} e^{-\imath {2 \pi \frac{s_i (N-1)}{N}}}=|a|R,
\end{align}
and we also have
 \begin{align}\label{eq:NuclearNormBound1}
\|\overline{\bm{U}}^* \bm{Q} \overline{\bm{V}} \|_*=|a|(N-R).
\end{align}

Thus $|{ {\rm Tr}(\bm{V}\bm{U}^* \bm{Q})}|\geq \|\overline{\bm{U}}^* \bm{Q} \overline{\bm{V}} \|_*$ for every $\bm{Q}=\bm{H}(\bm{z})$ with $\bm{z}$ in the null space of the sampling operator.
Thus the Hankel matrix recovery cannot recover the ground truth $\bm{x}$.  This shows that the prediction by Theorem \ref{thm:OrthogonalAtoms} is tight.

\subsection{Average-Case performance analysis with arbitrarily close frequency atoms}\label{sec:PerturbationAnalysis}

In this section, we further show that the Hankel matrix recovery can successfully recover the superposition of complex exponentials with arbitrarily close frequency atoms. In particular, we give average performance guarantees on recoverable sparsity $R$ when frequency atoms are arbitrarily close, and show that the Hankel matrix recovery can deal with much larger recoverable sparsity $R$ for average-case signals than predicted by Theorem \ref{thm:HankelMC}.

We consider a signal $\bm{x}$ composed of $R$ complex exponentials. Among them, $R-1$ orthogonal complex exponentials
(without loss of generality, we assume that these $R-1$ frequencies take values $\frac{l_i}{N}$, where $l_i$'s are integers). The other complex exponential $c_{cl}e^{\imath 2\pi f_{cl} j}$ has a frequency arbitrarily close to one of the $R-1$ frequencies, i.e.,
\begin{align}\label{Def:PerturbedSignal}
\bm{x}_j=\sum_{k=1}^{R-1} c_k e^{\imath 2\pi f_{k} j} + c_{cl} e^{\imath 2\pi f_{cl} j},~ j\in\{0,1,\cdots,2N-2\},
\end{align}
where $f_{cl}$ is arbitrarily close to one of the first $(R-1)$ frequencies. For this setup with arbitrarily close atoms, we have the following average-case performance guarantee.
\begin{theorem}\label{thm:PerturbationGuarantees}
Consider the signal model (\ref{Def:PerturbedSignal}) with $R$ complex exponentials, where the coefficients of these complex exponentials are iid uniformly distributed random over $[0,2\pi)$, and the first $(R-1)$ of the complex exponentials are such that the corresponding atom vectors in (\ref{Def:AtomVector}) are mutually orthogonal, while
the $R$-th exponential has frequency arbitrarily close to one of the first $(R-1)$ frequencies (in wrap-around distance). Let $c_{min}=\min \{|c_1|, |c_2|, ..., |c_{R-1}|\}$, and define
\begin{align}\label{Defn:dRelative2}
d_{rel}=\frac{|c_{cl}|}{c_{min}}+1.
\end{align}

If we have the observation set $\mathcal{M}=\{0,1,2,..., 2N-2\} \setminus \{ N-1\}$, and, for any constant $c>0$, if
\begin{align}\label{Def:PerturbedK}
R = \frac{(2N-8\sqrt{N}d_{rel}+2c\log(N))+\sqrt{12cN\log(N)-48c\sqrt{N}d_{rel}\log(N)+4c^2\log^2(N)}}{2},
\end{align}
then we can recover the true signal ${\bm{x}}$ via Hankel matrix recovery with high probability as $N \rightarrow \infty$, regardless of frequency separations.
\end{theorem}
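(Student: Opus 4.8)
The plan is to apply the sufficient null-space condition of Lemma \ref{lem:WeakNullSpaceCondition} in exactly the same framework as the proof of Theorem \ref{thm:OrthogonalAtoms}, since the sampling set $\mathcal{M}=\{0,\dots,2N-2\}\setminus\{N-1\}$ is identical. Here the null space of $\mathcal{A}$ is one-dimensional: the only free coordinate is $x_{N-1}$, so every $\bm{Q}=\bm{H}(\bm{z})$ with $\bm{z}\in\mathcal{N}(\mathcal{A})$ equals $a\bm{J}$, where $\bm{J}$ is the $N\times N$ exchange (anti-identity) matrix and $a\in\mathbb{C}$, as in (\ref{Def:NullSpaceElement}). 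By homogeneity I may set $|a|=1$. Denoting by $f_{j_0}$ the frequency among the first $R-1$ that $f_{cl}$ approaches, it then suffices to show that, with probability tending to $1$, the inequality (\ref{eq:WeakNullSpaceCondition}), namely $|\mathrm{Tr}(\bm{V}\bm{U}^*\bm{J})|<\|\bar{\bm{U}}^*\bm{J}\bar{\bm{V}}\|_*$, holds, where $\bm{U}\bm{\Sigma}\bm{V}^*$ is the rank-$R$ SVD of $\bm{H}(\bm{x})$; note $\bm{H}(\bm{x})$ genuinely has rank $R$ because the $R$ frequencies stay distinct no matter how close $f_{cl}$ is to $f_{j_0}$.

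First I would compute the right-hand side exactly. Writing $P_{\bar U}=\bm{I}-\bm{U}\bm{U}^*$ and $P_{\bar V}=\bm{I}-\bm{V}\bm{V}^*$ for the projectors onto the orthogonal complements of the column and row spaces, unitary invariance gives $\|\bar{\bm{U}}^*\bm{J}\bar{\bm{V}}\|_*=\|P_{\bar U}\bm{J}P_{\bar V}\|_*$. The key observation is that $\bm{J}$ carries the row space onto the column space: since $\bm{J}\,\overline{\bm{\phi}^N(f)}=e^{-\imath 2\pi f(N-1)}\bm{\phi}^N(f)$ and the column (resp. row) space is spanned by the atoms $\bm{\phi}^N(f_k)$ (resp. their conjugates), one has $\bm{J}\,\mathrm{range}(\bm{V})=\mathrm{range}(\bm{U})$, hence $\bm{J}\,\mathrm{range}(P_{\bar V})=\mathrm{range}(P_{\bar U})$. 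Consequently $P_{\bar U}\bm{J}P_{\bar V}=\bm{J}P_{\bar V}$, and $\|\bm{J}P_{\bar V}\|_*=\|P_{\bar V}\|_*=N-R$. Thus the right-hand side equals $N-R$ exactly, just as in (\ref{eq:NuclearNormBound1}), with no dependence on the frequency separation.

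The heart of the proof is to bound the left-hand side $|\mathrm{Tr}(\bm{V}\bm{U}^*\bm{J})|=|\langle\bm{U}\bm{V}^*,\bm{J}\rangle|$, i.e. the pairing of $\bm{J}$ with the polar factor $\bm{U}\bm{V}^*$ of $\bm{H}(\bm{x})$. I would split the column/row space into the span of the $R-1$ mutually orthogonal atoms and the one extra near-degenerate direction contributed by $\bm{\phi}^N(f_{cl})$, and correspondingly decompose the pairing into a main term plus a perturbation term. The main term reproduces the orthogonal computation of Theorem \ref{thm:OrthogonalAtoms}: a sum of $R-1$ mutually independent unit-modulus random variables whose phases are iid uniform (inherited from the phases of $c_1,\dots,c_{R-1}$ together with $c_{cl}$) and whose modulus concentrates via Lemma \ref{lem:BernsteinConcentration} with variance proxy $\nu\approx R$ and $L=1$. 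The perturbation term measures how the off-grid atom tilts the polar factor. The essential difficulty — and the reason the answer depends on $d_{rel}$ of (\ref{Defn:dRelative2}) rather than on the separation — is that as $f_{cl}\to f_{j_0}$ the relevant singular-value gap collapses, so any naive singular-vector perturbation bound of Davis--Kahan or Weyl type diverges. I would instead exploit the continuity of the polar factor through the merge: in the coincident limit the two atoms fuse into a single orthogonal atom with coefficient $c_{j_0}+c_{cl}$, the matrix stays rank $R-1$, and the pairing stays finite; bounding the deviation from this limit by the coefficient ratio yields a deterministic estimate of the form $|\langle\bm{U}\bm{V}^*,\bm{J}\rangle-(\text{main term})|\le 4\sqrt{N}\,d_{rel}$, uniform in the separation. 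Establishing this uniform perturbation estimate is the main obstacle.

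Finally I would combine the two estimates. Writing the perturbation bound as $P=4\sqrt{N}\,d_{rel}$, recovery holds whenever the random main term satisfies $|\text{sum}|<(N-R)-P=:t$. Applying the Bernstein bound of Lemma \ref{lem:BernsteinConcentration} gives failure probability at most $3\exp\!\big(-\tfrac{t^2/2}{R+t/3}\big)$. Setting this exponent equal to $c\log N$ with $t=N-R-P$ and solving the resulting quadratic for $R$ produces exactly the threshold (\ref{Def:PerturbedK}) (up to an immaterial rescaling of the free constant $c$), and makes the failure probability $O(N^{-c})\to 0$. This shows (\ref{eq:WeakNullSpaceCondition}) holds with probability approaching $1$, so Hankel matrix recovery succeeds regardless of how close $f_{cl}$ is to $f_{j_0}$, completing the proof.
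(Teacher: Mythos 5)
Your outer skeleton coincides with the paper's: the one-dimensional null space, the exact evaluation $\|\bar{\bm{U}}^*\bm{Q}\bar{\bm{V}}\|_*=|a|(N-R)$, the decomposition of $|\mathrm{Tr}(\bm{V}\bm{U}^*\bm{Q})|$ into an orthogonal main term plus a perturbation of size $4\sqrt{N}d_{rel}$, and the final Bernstein-plus-quadratic step all match. But the one step you yourself flag as ``the main obstacle'' --- the separation-uniform bound on how far the unitary polar factor $\bm{U}\bm{V}^*$ drifts from the orthogonal reference --- is exactly the content of the theorem, and the route you sketch for it does not work as stated. You propose to compare $\bm{U}\bm{V}^*$ to the coincident limit $f_{cl}\to f_{j_0}$, where the two atoms fuse and the Hankel matrix drops to rank $R-1$. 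The unitary polar factor is \emph{not} continuous through that rank drop: the limit object is a rank-$(R-1)$ partial isometry while $\bm{U}\bm{V}^*$ remains a rank-$R$ partial isometry for every $f_{cl}\neq f_{j_0}$, so $\|\bm{U}\bm{V}^*-\tilde{\bm{P}}_{R-1}\|_F\geq 1$ no matter how small the separation, and ``bounding the deviation from this limit by the coefficient ratio'' has no justification. Moreover the near-degeneracy rotates the singular vectors inside the two-dimensional subspace spanned by $\bm{\phi}^N(f_{j_0})$ and $\bm{\phi}^N(f_{cl})$ in a way that depends on the separation, which is precisely what a Davis--Kahan type argument cannot control here.

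The paper's device, which you are missing, is to compare against a \emph{rank-preserving} orthogonal reference: define $\tilde{\bm{x}}$ by replacing the term $c_{cl}e^{\imath 2\pi f_{cl}j}$ with $c_{rm}e^{\imath 2\pi f_{rm}j}$, where $f_{rm}$ is a freely chosen frequency whose atom is orthogonal to the other $R-1$ and $|c_{rm}|=c_{min}$. Then $\tilde{\bm{X}}=\bm{H}(\tilde{\bm{x}})$ has all $R$ singular values equal to $N$ times the coefficient moduli, so $\tilde{\sigma}_R=Nd_{min}$, and the error matrix satisfies $\|\bm{X}-\tilde{\bm{X}}\|_F\leq N(|c_{cl}|+|c_{rm}|)$. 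Li's perturbation theorem for the unitary polar factor (Lemma \ref{Perturbation:PolarFactor}) bounds $\|\bm{P}-\tilde{\bm{P}}\|_F$ by $\bigl(\tfrac{2}{\sigma_R+\tilde{\sigma}_R}+\tfrac{2}{\max\{\sigma_R,\tilde{\sigma}_R\}}\bigr)\|\bm{X}-\tilde{\bm{X}}\|_F\leq \tfrac{4}{\tilde{\sigma}_R}\|\bm{X}-\tilde{\bm{X}}\|_F$; crucially this depends only on the \emph{larger} of the two smallest singular values, so the collapse of $\sigma_R(\bm{X})$ as $f_{cl}\to f_{j_0}$ is harmless. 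This yields the deterministic bound $|\mathrm{Tr}(\bm{V}_1\bm{U}_1^*\bm{Q})|\leq|\mathrm{Tr}(\tilde{\bm{V}}_1\tilde{\bm{U}}_1^*\bm{Q})|+4|a|\sqrt{N}d_{rel}$ uniformly in the separation, after which the concentration bound of Theorem \ref{thm:OrthogonalAtoms} applies to the $R$ genuinely orthogonal atoms of $\tilde{\bm{x}}$ (note the main term has $R$ summands, not $R-1$, since $c_{rm}$ carries its own random phase) and the rest of your calculation goes through. Without this rank-preserving comparison and the polar-factor perturbation lemma, your proof is incomplete at its central step.
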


\textbf{Remarks:} 1. We can extend this result to cases where neither of the two arbitrarily close frequencies has on-the-grid frequencies, and the Hankel matrix recovery method can recover a similar sparsity; 2.
Under a similar number of complex exponentials, with high probability, the Hankel matrix recovery can correctly recover the signal $\bm{x}$, uniformly over every possible phases of the two complex exponentials with arbitrarily close frequencies; 3. We can extend our results to other sampling sets, but for clarity of presentations, we choose  $\mathcal{M}=\{0,1,2,..., 2N-2\} \setminus \{ N-1\}$

\begin{proof}
To prove this theorem, we consider a perturbed signal $\bm{\tilde{x}}$. The original signal $\bm{x}$ and the perturbed $\bm{\tilde{x}}$ satisfy
\begin{align}\label{Def:PerturbaedSignal}
\bm{x}_j = \tilde{\bm{x}}_j+ c_{cl}e^{\imath2\pi f_{cl} j}- c_{rm} e^{\imath 2 \pi f_{rm} j}, j\in\{0,1,\cdots,2N-2\},
\end{align}
where $f_{rm}$ is a frequency such that its corresponding frequency atom is mutually orthogonal with the atoms corresponding to $f_1$, ..., and $f_{R-1}$. We further define
\begin{align}\label{Eq:MinMag}
d_{min}=\min\{|c_1|,\cdots,|c_{R-1}|, |c_{rm}|\}.
\end{align}

Let us define $\tilde{\bm{X}}=\bm{H}(\tilde{\bm{{x}}})$, ${\bm{X}}=\bm{H}(\bm{{x}})$, and the error matrix $\bm{E}$ such that:
\begin{align}
\bm{X}=\tilde{\bm{X}}+\bm{E},
\end{align}
where
\begin{align}
\bm{E}=
\begin{array}{l}
\left[\begin{array}{*{20}{c}}
c_{cl}e^{\imath2\pi f_{cl} \cdot 0}- c_{rm}e^{\imath 2 \pi f_{rm} \cdot 0} & \cdots & c_{cl}e^{\imath2\pi f_{cl} \cdot (N-1)}- c_{rm}e^{\imath 2 \pi f_{rm} \cdot (N-1)} \\
\vdots &\ddots &\vdots\\
c_{cl}e^{\imath2\pi f_{cl} \cdot (N-1)}- c_{rm}e^{\imath 2 \pi f_{rm} \cdot (N-1)} &\cdots & c_{cl}e^{\imath2\pi f_{cl} \cdot (2N-2)}- c_{rm}e^{\imath 2 \pi f_{rm} \cdot (2N-2)}
\end{array}\right]
\end{array}.
\end{align}
Both $\bm{X}$ and $\tilde{\bm{X}}$ are rank-$R$ matrices. For the error matrix $\bm{E}$, we have
\begin{align}\label{Eq:EBound}
\|\bm{E}\|_F
& = \left(
\sum_{i,k\in\{0,\cdots,N-1\}} |c_{cl}e^{\imath2\pi f_{cl} \cdot (i+k)}- c_{rm}e^{\imath 2 \pi f_{rm} \cdot (i+k)}|^2
\right)^{1/2} \nonumber\\
& \leq \left(
\sum_{i,k\in\{0,\cdots,N-1\}} \left(|c_{cl}|+|c_{rm}|\right)^2
\right)^{1/2}\nonumber\\
& = \left(N^2 \left(|c_{cl}|+|c_{rm}|\right)^2\right)^{1/2} \nonumber\\
& = N\left(|c_{cl}|+|c_{rm}|\right).
\end{align}

Following the derivations in Theorem \ref{thm:OrthogonalAtoms}, $\bm{\tilde{X}}$ has the following SVD
\begin{align}\label{SVD:UnperturbedMatrix}
\bm{\tilde{X}}=\bm{\tilde{\bm{U}}}\bm{\tilde{\bm{\Sigma}}} \bm{\tilde{\bm{V}}}^H,
\bm{\tilde{\bm{U}}}=[\bm{\tilde{\bm{U}}}_1\ \bm{\tilde{\bm{U}}}_2]\in\mathbb{C}^{N\times N},
\bm{\tilde{\bm{V}}}=[\bm{\tilde{\bm{V}}}_1\ \bm{\tilde{\bm{V}}}_2]\in\mathbb{C}^{N\times N},
\bm{\tilde{\bm{\Sigma}}}=
\begin{array}{l}
\left[\begin{array}{*{20}{c}}
\bm{\tilde{\bm{\Sigma}}}_1 &0\\
0 &0
\end{array}\right]
\end{array}\in\mathbb{C}^{N\times N}
\end{align}
where $\bm{\tilde{\bm{U}}}_1 \in \mathbb{C}^{N \times R}, \bm{\tilde{\bm{\Sigma}}}_1 \in \mathbb{C}^{R \times R}$ and $\bm{\tilde{\bm{V}}}_1\in \mathbb{C}^{N \times R}$ are defined as in
(\ref{Def:SVD-I}), (\ref{Def:SVD-II}) and (\ref{Def:SVD-III}). The polar decomposition of $\bm{\tilde{X}}$ using its SVD is given by
\begin{align}
\bm{\tilde{X}}=\bm{\tilde{P}}\bm{\tilde{H}},
\bm{\tilde{P}}=\bm{\tilde{\bm{U}}}_1\bm{\tilde{V}}_1^*,
\bm{\tilde{H}}=\bm{\tilde{\bm{V}}}_1\bm{\tilde{\bm{\Sigma}}}_1 \bm{\tilde{\bm{V}}}_1^*,
\end{align}
and the matrix $\tilde{\bm{P}}$ is called the unitary polar factor. Similarly, for $\bm{X}$, we have
\begin{align}\label{SVD:PerturbedMatrix}
{\bm{X}}={\bm{U}}{\bm{\Sigma}} {\bm{V}},
{\bm{U}}=[{\bm{U}}_1\ {\bm{U}}_2]\in\mathbb{C}^{N\times N},
{\bm{V}}=[{\bm{V}}_1\ {\bm{V}}_2]\in\mathbb{C}^{N\times N},
\bm{\bm{\Sigma}}=
\begin{array}{l}
\left[\begin{array}{*{20}{c}}
\bm{\bm{\Sigma}}_1 &0\\
0 &0
\end{array}\right]
\end{array},
\end{align}
and its polar decomposition for ${\bm{X}}$,
\begin{align}
{\bm{X}}={\bm{P}}{\bm{H}},
{\bm{P}}={\bm{U}}_1{\bm{V}}_1^*,
\bm{H}={\bm{V}}_1{\bm{\bm{\Sigma}}}_1 {\bm{V}}_1^*.
\end{align}

Suppose that we try to recover $\bm{x}$ through the following nuclear norm minimization:
\begin{align}\label{PerturbatedNNM}
& \min_{\bm{x}} \|{\bm{H}(\bm{x})}\|_* \nonumber\\
& {\rm s.t.\ }\mathcal{A}({\bm{x}})=\bm{b}.
\end{align}
As in the proof of Theorem \ref{thm:OrthogonalAtoms}, we analyze the null space condition for successful recovery using Hankel matrix recovery.
Towards this, we first bound the unitary polar factor of $\bm{X}$ through
the perturbation theory for polar decomposition.
\begin{lem}\label{Perturbation:PolarFactor}
(\cite{li_new_2003}, Theorem 3.4) For matrices $\bm{X}\in\mathbb{C}_r^{m\times n}$ and $\tilde{\bm{X}}\in\mathbb{C}_r^{m\times n}$ with SVD defined as (\ref{SVD:UnperturbedMatrix}) and (\ref{SVD:PerturbedMatrix}), let $\sigma_r$ and $\tilde{\sigma}_r$ be the smallest nonzero singular values of $\bm{X}$ and $\tilde{\bm{X}}$, respectively, then
\
\begin{align}
|||\bm{P}-\tilde{\bm{P}}|||\leq \left(\frac{2}{\sigma_r+\tilde{\sigma}_r}+\frac{2}{\max\{\sigma_r,\tilde{\sigma}_r\}}\right)|||\bm{X}-\tilde{\bm{X}}|||,
\end{align}
where $|||\cdot|||$ is any unitary invariant norm.
\end{lem}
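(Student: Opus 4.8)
The plan is to establish the bound by splitting $\bm{P}-\tilde{\bm{P}}$ into a part that is aligned with the leading singular subspaces of both matrices and two off-diagonal parts that record how those subspaces rotate, bounding each piece in the unitarily invariant norm $|||\cdot|||$ and recombining by the triangle inequality. Write the thin SVDs as $\bm{X}=\bm{U}_1\bm{\Sigma}_1\bm{V}_1^*$ and $\tilde{\bm{X}}=\tilde{\bm{U}}_1\tilde{\bm{\Sigma}}_1\tilde{\bm{V}}_1^*$ with $\bm{\Sigma}_1,\tilde{\bm{\Sigma}}_1\in\mathbb{C}^{r\times r}$ positive diagonal, so that the subunitary polar factors are $\bm{P}=\bm{U}_1\bm{V}_1^*$ and $\tilde{\bm{P}}=\tilde{\bm{U}}_1\tilde{\bm{V}}_1^*$, which are genuine partial isometries. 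Let $\bm{\Pi}=\bm{U}_1\bm{U}_1^*$ and $\tilde{\bm{\Pi}}'=\tilde{\bm{V}}_1\tilde{\bm{V}}_1^*$ be the orthogonal projectors onto the column space of $\bm{X}$ and the row space of $\tilde{\bm{X}}$. Using $\bm{\Pi}\bm{P}=\bm{P}$ and $\tilde{\bm{P}}\tilde{\bm{\Pi}}'=\tilde{\bm{P}}$, I obtain the exact identity
\[
\bm{P}-\tilde{\bm{P}}=\bm{\Pi}(\bm{P}-\tilde{\bm{P}})\tilde{\bm{\Pi}}'+\bm{P}(\bm{I}-\tilde{\bm{\Pi}}')-(\bm{I}-\bm{\Pi})\tilde{\bm{P}},
\]
which I would bound term by term.

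For the two off-diagonal parts I would reduce each to sines of canonical angles between two $r$-dimensional subspaces. Since $\bm{U}_1$ is an isometry, $|||\bm{P}(\bm{I}-\tilde{\bm{\Pi}}')|||=|||\bm{V}_1^*(\bm{I}-\tilde{\bm{\Pi}}')|||$. Writing $\bm{\Delta}=\tilde{\bm{X}}-\bm{X}$ and using $(\bm{I}-\tilde{\bm{\Pi}}')\tilde{\bm{X}}^*=\bm{0}$, I get $(\bm{I}-\tilde{\bm{\Pi}}')\bm{X}^*=-(\bm{I}-\tilde{\bm{\Pi}}')\bm{\Delta}^*$, and since $\bm{V}_1=\bm{X}^*\bm{U}_1\bm{\Sigma}_1^{-1}$ this yields $|||\bm{V}_1^*(\bm{I}-\tilde{\bm{\Pi}}')|||\le|||\bm{\Delta}|||/\sigma_r$. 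The canonical angles between the two row spaces are symmetric, so the same quantity also equals the angle measured from $\tilde{\bm{V}}_1$, which the mirror-image estimate bounds by $|||\bm{\Delta}|||/\tilde{\sigma}_r$; taking the smaller bound gives $|||\bm{\Delta}|||/\max\{\sigma_r,\tilde{\sigma}_r\}$. The identical argument on the column spaces bounds $(\bm{I}-\bm{\Pi})\tilde{\bm{P}}$, so the two off parts together contribute at most $\tfrac{2}{\max\{\sigma_r,\tilde{\sigma}_r\}}|||\bm{\Delta}|||$.

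For the aligned part I would pass to $r\times r$ coordinates. A direct computation gives $\bm{U}_1^*(\bm{P}-\tilde{\bm{P}})\tilde{\bm{V}}_1=\bm{V}_1^*\tilde{\bm{V}}_1-\bm{U}_1^*\tilde{\bm{U}}_1=:\bm{S}$ and $|||\bm{\Pi}(\bm{P}-\tilde{\bm{P}})\tilde{\bm{\Pi}}'|||=|||\bm{S}|||$. Projecting $\bm{\Delta}$ onto the leading subspaces, the relations $\bm{U}_1^*\tilde{\bm{X}}\tilde{\bm{V}}_1=(\bm{U}_1^*\tilde{\bm{U}}_1)\tilde{\bm{\Sigma}}_1$ and $\bm{U}_1^*\bm{X}\tilde{\bm{V}}_1=\bm{\Sigma}_1(\bm{V}_1^*\tilde{\bm{V}}_1)$, combined with their counterparts obtained by interchanging $\bm{X}$ and $\tilde{\bm{X}}$, give the Sylvester equation
\[
\bm{\Sigma}_1\bm{S}+\bm{S}\tilde{\bm{\Sigma}}_1=\bm{V}_1^*\bm{\Delta}^*\tilde{\bm{U}}_1-\bm{U}_1^*\bm{\Delta}\tilde{\bm{V}}_1=:\bm{R},
\]
with $|||\bm{R}|||\le 2|||\bm{\Delta}|||$ because left and right multiplication by matrices with orthonormal columns does not increase a unitarily invariant norm. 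Since $\bm{\Sigma}_1,\tilde{\bm{\Sigma}}_1$ are positive with smallest entries $\sigma_r,\tilde{\sigma}_r$, the entrywise solution $\bm{S}_{ij}=\bm{R}_{ij}/(\sigma_i+\tilde{\sigma}_j)$ together with the separation bound for positive-definite Sylvester operators gives $|||\bm{S}|||\le\tfrac{1}{\sigma_r+\tilde{\sigma}_r}|||\bm{R}|||\le\tfrac{2}{\sigma_r+\tilde{\sigma}_r}|||\bm{\Delta}|||$. Adding the three bounds proves the stated inequality.

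The step I expect to be the main obstacle is controlling the Sylvester inversion in an \emph{arbitrary} unitarily invariant norm rather than only in the Frobenius or spectral norm. The entrywise division is transparent in the fixed singular bases, but passing to a general $|||\cdot|||$ requires the Bhatia--Davis--Kahan type fact that for positive diagonal $\bm{\Sigma}_1,\tilde{\bm{\Sigma}}_1$ the operator $\bm{S}\mapsto\bm{\Sigma}_1\bm{S}+\bm{S}\tilde{\bm{\Sigma}}_1$ has inverse of $|||\cdot|||$-norm at most $1/(\sigma_r+\tilde{\sigma}_r)$, which I would justify by a Schur-multiplier (pinching) argument. A secondary subtlety is the rank-deficient bookkeeping: I must verify that the projector identities $\bm{\Pi}\bm{P}=\bm{P}$ and $\tilde{\bm{P}}\tilde{\bm{\Pi}}'=\tilde{\bm{P}}$ hold exactly for the partial isometries, and that the symmetry of canonical angles used in the off-part estimate is valid for every $|||\cdot|||$ and not merely for the spectral norm.
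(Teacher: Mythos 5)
Your proposal is correct, but there is nothing in the paper to compare it against: the paper does not prove this lemma at all, it imports it verbatim as Theorem 3.4 of the cited reference on perturbation of (sub)unitary polar factors. So your argument serves as a self-contained reconstruction of the outsourced result, and it follows the strategy standard in that literature: split $\bm{P}-\tilde{\bm{P}}$ along the leading singular subspaces, bound the two rotation parts by sines of canonical angles, and bound the aligned part through a Sylvester equation. The key identities all check out: the three-term splitting is exact because $\bm{\Pi}\bm{P}=\bm{P}$ and $\tilde{\bm{P}}\tilde{\bm{\Pi}}'=\tilde{\bm{P}}$ (these hold exactly since $\bm{U}_1^*\bm{U}_1=\tilde{\bm{V}}_1^*\tilde{\bm{V}}_1=\bm{I}_r$, so your rank-deficiency worry is unfounded); direct computation confirms $\bm{U}_1^*(\bm{P}-\tilde{\bm{P}})\tilde{\bm{V}}_1=\bm{V}_1^*\tilde{\bm{V}}_1-\bm{U}_1^*\tilde{\bm{U}}_1=\bm{S}$ and $\bm{\Sigma}_1\bm{S}+\bm{S}\tilde{\bm{\Sigma}}_1=\bm{V}_1^*\bm{\Delta}^*\tilde{\bm{U}}_1-\bm{U}_1^*\bm{\Delta}\tilde{\bm{V}}_1$ with $|||\bm{R}|||\leq 2|||\bm{\Delta}|||$. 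The two steps you flagged as obstacles are both true and standard. For the Sylvester inversion in an arbitrary unitarily invariant norm, no Schur-multiplier machinery is needed: since $\bm{\Sigma}_1,\tilde{\bm{\Sigma}}_1$ are positive diagonal, the unique solution admits the integral representation $\bm{S}=\int_0^\infty e^{-t\bm{\Sigma}_1}\bm{R}\,e^{-t\tilde{\bm{\Sigma}}_1}\,dt$, and the inequality $|||\bm{A}\bm{B}\bm{C}|||\leq\|\bm{A}\|_2\,|||\bm{B}|||\,\|\bm{C}\|_2$ together with $\|e^{-t\bm{\Sigma}_1}\|_2=e^{-t\sigma_r}$ and $\|e^{-t\tilde{\bm{\Sigma}}_1}\|_2=e^{-t\tilde{\sigma}_r}$ gives $|||\bm{S}|||\leq|||\bm{R}|||/(\sigma_r+\tilde{\sigma}_r)$ upon integrating. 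For the symmetry of the off-diagonal bounds, the matrices $\bm{V}_1^*(\bm{I}-\tilde{\bm{\Pi}}')$ and $\tilde{\bm{V}}_1^*(\bm{I}-\bm{V}_1\bm{V}_1^*)$ have identical singular values, since with $\bm{A}=\bm{V}_1^*\tilde{\bm{V}}_1$ their squared singular values are the eigenvalues of $\bm{I}_r-\bm{A}\bm{A}^*$ and $\bm{I}_r-\bm{A}^*\bm{A}$ respectively, and $\bm{A}\bm{A}^*$, $\bm{A}^*\bm{A}$ share the same spectrum; hence every unitarily invariant norm agrees on the two, and taking the better of the mirror estimates $|||\bm{\Delta}|||/\sigma_r$ and $|||\bm{\Delta}|||/\tilde{\sigma}_r$ is legitimate. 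With these two facts in place your proof is complete and yields exactly the stated constant $\frac{2}{\sigma_r+\tilde{\sigma}_r}+\frac{2}{\max\{\sigma_r,\tilde{\sigma}_r\}}$.
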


For our problem, we have $m=n=N, r=R$. Let $\sigma_{R}$ be the $R$-th singular value of $\bm{X}$. From (\ref{Def:SVD-II}) and (\ref{Eq:MinMag}), an explicit form for $\sigma_R$ is
\begin{align}
\sigma_{R}= Nd_{min}.
\end{align}

Then  Lemma \ref{Perturbation:PolarFactor}  implies that
\begin{align}
\|\bm{P}-\tilde{\bm{P}}\|_F
\leq \frac{4}{\sigma_R}\|\bm{X}-\tilde{\bm{X}}\|_F.
\end{align}

From the null space condition for nuclear norm minimization (\ref{PerturbatedNNM}), we can correctly recover ${\bm{x}}$ if
\begin{align}
|{\rm Tr}({\bm{V}}_1{\bm{U}}_1^*\bm{Q})|
<
\|\bar{{\bm{U}}}_1^*\bm{Q}\bar{{\bm{V}}}_1\|_*,
\end{align}
for every nonzero $\bm{Q}=\bm{H}(\bm{z})$ with $ \bm{z}\in\mathcal{N}(\mathcal{A})$, where $\bar{{\bm{U}}}_1$ and $\bar{{\bm{V}}}_1$ are such that $[{\bm{U}}_1\ \bar{{\bm{U}}}_1]$ and $[{\bm{V}}_1\ \bar{{\bm{V}}}_1]$ are unitary,
i.e., $\bar{{\bm{U}}}_1={{\bm{U}}}_2$ and $\bar{{\bm{V}}}_1={{\bm{V}}}_2$.  Since the observation set $\mathcal{M}=\{0,1,2,..., 2N-2\} \setminus \{ N-1\}$, $\bm{Q}$ takes the form
in (\ref{Def:NullSpaceElement}) with $a\in \mathbb{C}$, and
\begin{align}
\|\bar{{\bm{U}}}_1^*\bm{Q}\bar{{\bm{V}}}_1\|_* = |a|(N-R).
\end{align}
Let us define $\Delta \bm{P}=\tilde{\bm{P}}-\bm{P}=\tilde{\bm{U}}_1\tilde{\bm{V}}^{*}_1-{\bm{U}}_1{\bm{V}}^{*}_1$. Then it follows from Lemma \ref{Perturbation:PolarFactor} and (\ref{Eq:EBound}) that
\begin{align}
|{\rm Tr}({\bm{V}}_1{\bm{U}}_1^*\bm{Q})|
& =|<\bm{Q},{\bm{U}}_1{\bm{V}}_1^*>| \nonumber\\
& \leq |<\bm{Q},\tilde{\bm{P}}>|+|<\bm{Q},\Delta \bm{P}>| \nonumber\\
& = |{\rm Tr}(\bm{\tilde{\bm{V}}}_1\bm{\tilde{\bm{U}}}_1^*\bm{Q})| + |<\bm{Q},\Delta \bm{P}>| \nonumber\\
& \leq |{\rm Tr}(\bm{\tilde{\bm{V}}}_1\bm{\tilde{\bm{U}}}_1^*\bm{Q})| + (\|\bm{Q}\|_F^2\|
\|\Delta \bm{P}\|_F^2)^{1/2} \nonumber\\
& \leq |{\rm Tr}(\bm{\tilde{\bm{V}}}_1\bm{\tilde{\bm{U}}}_1^*\bm{Q})| + |a|\sqrt{N}\|\Delta \bm{P}\|_F \nonumber\\
& \leq |{\rm Tr}(\bm{\tilde{\bm{V}}}_1\bm{\tilde{\bm{U}}}_1^*\bm{Q})| + |a|\frac{4 \sqrt{N} }{\tilde{\sigma}_R}\|\bm{E}\|_F\nonumber\\
& \leq |{\rm Tr}(\bm{\tilde{\bm{V}}}_1 \bm{\tilde{\bm{U}}}_1^*\bm{Q})| + |a|\frac{ 4\sqrt{N} } {N d_{min}} \cdot N\left(|c_{cl}|+|c_{rm}|\right) \nonumber\\
& \leq |{\rm Tr}(\bm{\tilde{\bm{V}}}_1\bm{\tilde{\bm{U}}}_1^*\bm{Q})| + 4|a|\sqrt{N}\cdot\frac{|c_{cl}|+|c_{rm}|}{d_{min}}.
\end{align}
Thus if
\begin{align}
|{\rm Tr}(\bm{\tilde{\bm{V}}}_1\bm{\tilde{\bm{U}}}_1^*\bm{Q})| + 4|a|\sqrt{N}\cdot\frac{|c_{cl}|+|c_{rm}|}{d_{min}}
< |a|(N-R),
\end{align}
namely,
\begin{align}
|{\rm Tr}(\bm{\tilde{\bm{V}}}_1\bm{\tilde{\bm{U}}}_1^*\bm{Q})|
< |a|\left(N-R-4\sqrt{N}\cdot d_{rel}\right),
\end{align}
where $d_{rel}$ is defined as $\frac{|c_{cl}|+|c_{rm}|}{d_{min}}$,
then solving (\ref{PerturbatedNNM}) will correctly recover ${\bm{x}}$.

The proof of Theorem \ref{thm:OrthogonalAtoms} leads to the concentration inequality (\ref{Concentration:UnperturbedCase}):
\begin{align*}
\mathbb{P}\left(|{ {\rm Tr}(\bm{\tilde{\bm{V}}}\bm{\tilde{\bm{U}}}^* \bm{Q})}|\geq |a|t\right) \leq 3 e^{-\frac{t^2/2}{R+t/3}}, \forall~~t>0.
\end{align*}

Taking $t=N-R-4\sqrt{N}d_{rel}$, we have
\begin{align}
\frac{t^2/2}{R+t/3}
& = \frac{(N-R-4\sqrt{N}d_{rel})^2/2}{R+(N-R-4\sqrt{N}d_{rel})/3} \nonumber\\
& = \frac{3}{2} \cdot
\frac{N^2+R^2+16Nd_{rel}^2-2NR-8N\sqrt{N}d_{rel}+8R\sqrt{N}d_{rel}}{3R+N-R-4\sqrt{N}d_{rel}}.
\end{align}
To have successful signal recovery with high probability, we let
\begin{align}\label{Bound:K}
\frac{3}{2} \cdot
\frac{N^2+R^2+16Nd_{rel}^2-2NR-8N\sqrt{N}d_{rel}+8R\sqrt{N}d_{rel}}{3R+N-R-4\sqrt{N}d_{rel}}=c\log(N),
\end{align}
where $c>0$ is a constant.
So we have
\begin{align}
R^2+sR+r=0,
\end{align}
where
\begin{align}
s= 8\sqrt{N}d_{rel}-2N-2c\log(N),
r=16Nd_{rel}^2-8\sqrt{N}Nd_{rel}-cN\log(N)+4c\sqrt{N}d_{rel}\log(N).
\end{align}
Solving this leads to
\begin{align}
R=\frac{(2N-8\sqrt{N}d_{rel}+2c\log(N))+\sqrt{12cN\log(N)-48c\sqrt{N}d_{rel}\log(N)+4c^2\log^2(N)}}{2}.
\end{align}

Since we can freely choose the coefficient $c_{rm}$, we choose $c_{rm}$ such that $|c_{rm}|=c_{min}=\min \{|c_1|, |c_2|, ..., |c_{R-1}|\}$. Under such a choice for $|c_{rm}|$,
$d_{min}=|c_{rm}|=c_{min}$, leading to $d_{rel}=\frac{|c_{cl}|+|c_{rm}|}{d_{min}}=\frac{|c_{cl}|+c_{min}}{c_{min}}$. This concludes the proof of Theorem \ref{thm:PerturbationGuarantees}.
\end{proof}


\section{Separation is always necessary for the success of atomic norm minimization}
\label{sec:separation}
In the previous sections, we have shown that the Hankel matrix recovery can recover the superposition of complex exponentials, even though the frequencies of the complex exponentials can be arbitrarily close. In this
section, we show that, broadly, the atomic norm minimization must obey a non-trivial resolution limit. This is very different from the behavior of Hankel matrix recovery. Our results in this section also greatly generalize the
the necessity of resolution limit results in \cite{resolutionlimit}, to general continuously-parametered dictionary, beyond the dictionary of frequency atoms.  Moreover, our analysis is very different from the derivations in \cite{resolutionlimit}, which
used the Markov-Bernstein type inequalities for finite-degree polynomials.

\begin{theorem}
Let us consider a dictionary with its atoms parameterized by a continuous-valued parameter $\tau \in \mathbb{C}$. We also assume that each atom $\bm{a}(\tau)$ belongs to $\mathbb{C}^N$,
where $N$ is a positive integer. We assume that the set of all the atoms span a $Q$-dimensional subspace in $\mathbb{C}^N$.

Suppose the signal is the superposition of several atoms:
\begin{align}\label{Defn:atomdef}
\bm{x}=\sum_{k=1}^{R} c_{k} \bm{a}(\tau_{k}),
\end{align}
where the nonzero $c_{k} \in \mathbb{C}$ for each $k$, and $R$ is a positive integer representing the number of active atoms.

Consider any active atoms $\bm{a}(\tau_{k_1})$ and $\bm{a}(\tau_{k_2})$. With the other $(R-2)$ active atoms and their coefficients fixed (this includes the case $R=2$, namely there are only two active atoms
in total), if the atomic norm minimization can always identify the two active atoms, and correctly recover their coefficients for
$\bm{a}(\tau_{k_1})$ and $\bm{a}(\tau_{k_2})$, then the two atoms $\bm{a}(\tau_{k_1})$ and $\bm{a}(\tau_{k_2})$ must be well separated such that
\begin{align}
\label{def:sep}
\| \bm{a}(\tau_{k_1})-\bm{a}(\tau_{k_2})  \|_2 \geq 2 \max_{S\geq Q}\max_{\bm{A} \in \bm{M}^{S} }\frac{\sigma_{min}(\bm{A}) }{\sqrt{S}},
\end{align}
where $S$ is a positive integer, $\bm{M}^{S}$ is the set of matrices with $S$ columns and with each of these $S$ columns corresponding to an atom, and $\sigma_{min} (\cdot)$
is the smallest singular value of a matrix.
\end{theorem}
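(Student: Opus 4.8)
The plan is to read the required separation off a dual certificate, whose existence is forced whenever atomic norm minimization succeeds. Write $\bm{a}_1=\bm{a}(\tau_{k_1})$ and $\bm{a}_2=\bm{a}(\tau_{k_2})$, and recall that the dual norm of the atomic norm is $\|\bm{q}\|_{\mathcal{A}}^*=\sup_{\tau}|\langle \bm{a}(\tau),\bm{q}\rangle|$. If atomic norm minimization identifies the two atoms and recovers their coefficients for \emph{every} choice of $c_{k_1},c_{k_2}$, then in particular the true decomposition must be the minimal–atomic–norm decomposition of the signal; by the subdifferential characterization of the atomic norm this produces a vector $\bm{q}$ with $\|\bm{q}\|_{\mathcal{A}}^*\le 1$, i.e. $|\langle \bm{a}(\tau),\bm{q}\rangle|\le 1$ for every atom, that interpolates the signs $\langle \bm{a}(\tau_k),\bm{q}\rangle=\mathrm{sign}(c_k)$ at every active atom. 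I would then exploit the freedom in the coefficients by choosing the opposite phases $c_{k_1}=1$ and $c_{k_2}=-1$ (leaving the other $R-2$ coefficients fixed), which forces $\langle \bm{a}_1,\bm{q}\rangle=1$ and $\langle \bm{a}_2,\bm{q}\rangle=-1$.

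With such a $\bm{q}$ in hand the factor of $2$ appears immediately: the fixed atoms impose no constraint on these two equations, so subtracting them gives $\langle \bm{a}_1-\bm{a}_2,\bm{q}\rangle=2$, and Cauchy--Schwarz yields
\begin{align}
2=|\langle \bm{a}_1-\bm{a}_2,\bm{q}\rangle|\le \|\bm{a}_1-\bm{a}_2\|_2\,\|\bm{q}\|_2,
\end{align}
so that $\|\bm{a}_1-\bm{a}_2\|_2\ge 2/\|\bm{q}\|_2$. It then remains only to upper bound $\|\bm{q}\|_2$, which is exactly where the constraint $|\langle \bm{a}(\tau),\bm{q}\rangle|\le 1$ over the whole continuum of atoms enters.

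To bound $\|\bm{q}\|_2$ I would first note that $\bm{q}$ enters only through inner products with atoms and with $\bm{a}_1-\bm{a}_2$, so we may replace $\bm{q}$ by its projection onto the $Q$-dimensional span of the atoms, which leaves the certificate valid and does not increase $\|\bm{q}\|_2$. Now fix any $S\ge Q$ atoms and let $\bm{A}\in\bm{M}^S$ have these atoms as columns. The entries of $\bm{A}^*\bm{q}$ are the values $\langle \bm{a}(\xi_i),\bm{q}\rangle$, each of modulus at most $1$, hence $\|\bm{A}^*\bm{q}\|_2\le\sqrt{S}$. When the columns of $\bm{A}$ span the atom subspace, $\bm{q}$ lies in $\mathrm{col}(\bm{A})$, so $\|\bm{A}^*\bm{q}\|_2\ge\sigma_{min}(\bm{A})\,\|\bm{q}\|_2$; combining gives $\|\bm{q}\|_2\le\sqrt{S}/\sigma_{min}(\bm{A})$ and therefore $\|\bm{a}_1-\bm{a}_2\|_2\ge 2\sigma_{min}(\bm{A})/\sqrt{S}$ (for configurations that fail to span, $\sigma_{min}(\bm{A})=0$ and the inequality is vacuous). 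Since this holds for every admissible $\bm{A}$ and every $S\ge Q$, maximizing over both recovers exactly the bound \eqref{def:sep}.

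The step I expect to require the most care is the first one: justifying that success of atomic norm minimization genuinely forces the existence of a bounded dual interpolant $\bm{q}$. This is the necessity (rather than sufficiency) direction of the dual-certificate characterization, so I would phrase ``can always identify the two atoms and recover their coefficients'' precisely as ``the true decomposition is the unique minimal-atomic-norm decomposition,'' and then invoke convex duality (Fenchel's inequality) to produce $\bm{q}$, being careful with complex phases and with the fact that the supremum defining $\|\bm{q}\|_{\mathcal{A}}^*$ ranges over a continuum of atoms. A secondary technical point is to fix the meaning of $\sigma_{min}(\bm{A})$ when $S>Q$ (smallest nonzero singular value, equivalently the smallest singular value of $\bm{A}$ restricted to the atom subspace), so that the inequality $\|\bm{A}^*\bm{q}\|_2\ge\sigma_{min}(\bm{A})\|\bm{q}\|_2$ on $\mathrm{col}(\bm{A})$ is legitimate; everything after $\bm{q}$ is obtained is elementary linear algebra.
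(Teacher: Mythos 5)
Your proposal follows essentially the same route as the paper's proof: extract a dual certificate $\bm{q}$ that interpolates the signs at the active atoms and satisfies $|\bm{a}(\tau)^*\bm{q}|\le 1$ on all atoms, choose opposite signs for $c_{k_1},c_{k_2}$ so that $\langle \bm{a}(\tau_{k_1})-\bm{a}(\tau_{k_2}),\bm{q}\rangle=2$, lower-bound $\|\bm{q}\|_2$ by Cauchy--Schwarz, and upper-bound it via $\sigma_{min}(\bm{A})\|\bm{q}\|_2\le\|\bm{A}^*\bm{q}\|_2\le\sqrt{S}$. Your additional step of projecting $\bm{q}$ onto the span of the atoms is a small refinement the paper omits (it is what legitimizes $\sigma_{min}(\bm{A})\|\bm{q}\|_2\le\|\bm{A}^*\bm{q}\|_2$ when the atoms do not span all of $\mathbb{C}^N$), but otherwise the two arguments are the same.
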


\begin{proof}

Define the sign of a coefficient $c_k$ as
\begin{align}\label{Defn:CoefficientSign}
{\rm sign}(c_k)=\frac{c_k}{|c_k|}.
\end{align}
Then according to \cite{candes_towards_2014} and \cite{resolutionlimit}, a necessary condition for the atomic norm to  identify the two active atoms, and correctly recover their coefficients is that, there exists
a dual vector $\bm{q} \in \mathbb{C}^N $ such that
\begin{align}
\label{reasoning}
\begin{cases}
\bm{a}(\tau_{k_1})^*\bm{q}={\rm sign}(c_{k_1}),~\bm{a}(\tau_{k_2})^*\bm{q}={\rm sign}(c_{k_2}) \\
|\bm{a (\tau)}^*\bm{q} |\leq 1, \forall \tau \notin \{\tau_{k_1}, \tau_{k_2}\}.
\end{cases}
\end{align}
We pick $c_{k_1}$ and $c_{k_2}$ such that $|{\rm sign}(c_k)-{\rm sign}(c_j)|=2$. Then

\begin{align}
\| {\rm sign}(c_k)-{\rm sign}(c_j)\|
= \|\bm{a}(\tau_{k_1})^*\bm{q}- \bm{a}(\tau_{k_2})^*\bm{q}\|_2
\leq \|\bm{q}\|_2\|\bm{a}(\tau_{k_1})-\bm{a}(\tau_{k_2})\|_2,
\end{align}
Thus
\begin{align}\label{Eq:DualNorm1}
\|\bm{q}\|_2 \geq \frac{2}{\|\bm{a}(\tau_{k_1})-\bm{a}(\tau_{k_2})\|_2}.
\end{align}

Now we take a group of $S$ atoms, denoted by $\bm{a}_{select,1}$, ... , and $\bm{a}_{select,S}$, and use them to form the $S$ columns of a matrix $\bm{A}$.
Then
\begin{align}
\sigma_{min}(\bm{A})\|\bm{q}\|_2
& \leq \|\bm{A}^*\bm{q}\|_2 \nonumber\\
& = \sqrt{\sum_{j=1}^S |\bm{a}_{select,j}^*\bm{q}|^2} \nonumber\\
& \leq \sqrt{S},
\end{align}
where the last inequality comes form (\ref{reasoning}). It follows that
\begin{align}\label{Eq:DualNorm2}
\|\bm{q}\|_2 \leq \frac{\sqrt{S}}{\sigma_{min}(\bm{A})}.
\end{align}
Define $\beta$ as the maximal value of $\frac{\sigma_{min}(\bm{A})}{\sqrt{S}}$ among all the choices for $\bm{A}$ and  $S$, namely,
\begin{align}
\beta=\max_{S\geq Q}\max_{\bm{A} \in \bm{M}^{S} }\frac{\sigma_{min}(\bm{A}) }{\sqrt{S}}.
\end{align}
Combining (\ref{Eq:DualNorm1}) and (\ref{Eq:DualNorm2}), we have
\begin{align}\label{Eq:AtomSeparation}
\| \bm{a}(\tau_{k_1})-\bm{a}(\tau_{k_2})  \|_2\geq 2\beta,
\end{align}
proving this theorem.
\end{proof}

\section{Orthonormal Atomic Norm Minimization: Hankel matrix recovery can be immune from atom separation requirements}
\label{sec:OANM}
 Our results in the previous sections naturally raise the following question: why can Hankel matrix recovery work without requiring separations between the underlying atoms while it is necessary for the atomic norm minimization to require separations between the underlying atoms? In this section, we introduce the concept of orthonormal atomic norm and its minimization, which explains the success of Hankel matrix recovery in recovering the superposition of complex
exponentials regardless of the separations between frequency atoms.

Let us consider a vector $\bm{w} \in \mathbb{C}^N$, where $N$ is a positive integer. We denote the set of atoms by $\mathcal{ATOMSET}$, and assume that each atom $\bm{a} (\tau)$ (parameterized by $\tau$) belongs to $\mathbb{C}^N$. Then the atomic norm of $\|{\bm w}\|_\mathcal{ATOMIC}$ is given by  \cite{candes_towards_2014,tang_compressed_2013}
\begin{align}\label{Defn:AtomicNorm}
\|{\bm w}\|_\mathcal{ATOMIC} = \inf_{s, \tau_k, c_k} \left\{\sum_{k=1}^s |c_k|: \bm{w}=\sum_{k=1}^s c_k\bm{a}(\tau_k)\right\}.
\end{align}
We say the atomic norm $\|{\bm x}\|_\mathcal{ATOMIC}$ is an orthonormal atomic norm if, for every $\bm{x}$,
\begin{align}\label{Defn:OrthoAtomicNorm}
\|{\bm w}\|_\mathcal{ATOMIC} =  \sum_{k=1}^s |c_k| ,
\end{align}
where $\bm{w}=\sum_{k=1}^s c_k\bm{a}(\tau_k)$, $\|\bm{a}(\tau_k)\|_2=1$ for every $k$, and $\bm{a}(\tau_k)$'s are mutually orthogonal to each other.

In the Hankel matrix recovery, the atom set $\mathcal{ATOMSET}$ is composed of all the rank-1 matrices in the form $\bm{u} \bm{v}^* $, where $\bm{u}$ and $\bm{v}$ are unit-norm vectors in $\mathbb{C}^N$.
Let us assume $\bm{x} \in \mathbb{C}^{2N-1}$. We can see the nuclear norm of a Hankel matrix is an orthonormal atomic norm of $\bm{H}(\bm{x})$:
  \begin{align}\label{HankelOrthoAtomicNorm}
\|\bm{H}(\bm{x})\|_{*} =  \sum_{k=1}^R |c_k|,
\end{align}
where $\bm{H}(\bm{x})= \sum_{k=1}^s c_k \bm{u}_{k} \bm{v}_{k}^*$,  $\bm{H}(\bm{x})=\bm{U}\bm{\Sigma}\bm{V}^*$ is the singular value decomposition of $\bm{H}(\bm{x})$, $\bm{u}_{k}$ is the $k$-th column of $\bm{U}$, and $\bm{v}_{k}$ is the $k$-th column of $\bm{V}$.
This is because the matrices $\bm{u}_{k} \bm{v}_{k}^*$'s are orthogonal to each other and each of these rank-1 matrices has unit energy.

Let us now further assume that $\bm{x} \in \mathbb{C}^{2N-1}$ is the superposition of $R$ complex exponentials with $R\leq N$, as defined in (\ref{Def:SignalModel}). Then $\bm{H}(\bm{x})$ is a rank-R matrix, and can be written as
$\bm{H}(\bm{x})=\sum_{k=1}^R c_k \bm{u}_{k} \bm{v}_{k}^*$,
where $\bm{H}(\bm{x})=\bm{U}\bm{\Sigma}\bm{V}^*$ is the singular value decomposition of $\bm{H}(\bm{x})$, $\bm{u}_{k}$ is the $k$-th column of $\bm{U}$, and $\bm{v}_{k}$ is the $k$-th column of $\bm{V}$.
Even though the original $R$ frequency atoms $\bm{a}(\tau_k)$'s for $\bm{x}$ can be arbitrarily close, we can always write $\bm{H}(\bm{x})$ as a superposition of $R$ orthonormal atoms $\bm{u}_{k} \bm{v}_{k}^*$'s
from the singular value decomposition of $\bm{H}(\bm{x})$.  Because the original $R$ frequency atoms $\bm{a}(\tau_k)$'s for $\bm{x}$ can be arbitrarily close, they can violate the necessary separation condition set
forth in (\ref{def:sep}). However, for $\bm{H}(\bm{x})$, its composing atoms can be $R$ orthonormal atoms $\bm{u}_{k} \bm{v}_{k}^*$'s from the singular value decomposition of $\bm{H}(\bm{x})$. These
atoms $\bm{u}_{k} \bm{v}_{k}^*$'s are of unit energy, and are orthogonal to each other. Thus these atoms are well separated and have the opportunity of not violating the necessary separation condition set
forth in (\ref{def:sep}). This explains why the Hankel matrix recovery approach can break free from the separation condition which is required for traditional atomic norm minimizations.

\section{A matrix-theoretic inequality of nuclear norms and its proof from the theory of compressed sensing}
\label{sec:nuclearnorminequality}
In this section, we present a new matrix-theoretic inequality of nuclear norms, and give a proof of it from the theory of compressed sensing (using nuclear norm minimization). To the best of our knowledge, we have not seen the statement of this inequality of nuclear norms, or its proof elsewhere in the literature.

\begin{theorem}\label{SV4MatrixAndSubmatrix}
Let $\bm{A}\in \mathbb{C}^{m \times n}$, and $t=\min (m, n)$. Let $\sigma_1$, $\sigma_2$, ..., and $\sigma_{t}$ be the singular values of $\bm{A}$ arranged in descending order, namely
$$\sigma_1\geq \sigma_2\geq \cdots\geq \sigma_{t}.$$

Let $k$ be any integer such that
$$\sigma_1+\cdots+\sigma_{k} < \sigma_{k+1}+\cdots+\sigma_{t}.$$

Then for any orthogonal projector $P$ onto $k$-dimensional subspaces in $\mathbb{C}^m$, and any orthogonal projector $\bm{Q}$ onto $k$-dimensional subspaces in $\mathbb{C}^n$, we have
 \begin{align}\label{OperatorCaseNN}
 \|\bm{P} \bm{A} \bm{Q}^* \|_* \leq \|(\bm{I}-\bm{P}) \bm{A} (\bm{I}-\bm{Q})^* \|_*.
 \end{align}

In particular,
 \begin{align}\label{NuclearNorm4Submatrix}
 \|\bm{A}_{1:k,1:k}\|_* \leq \|\bm{A}_{(k+1):m,(k+1):n}\|_*,
 \end{align}
where $\bm{A}_{1:k,1:k}$ is the submatrix of $A$ with row indices between 1 and $k$ and column indices between $1$ and $k$, and $\bm{A}_{(k+1):m,(k+1):n}$ is the submatrix of $\bm{A}$ with row indices between $k+1$ and $m$ and column indices between $k+1$ and $n$.
\end{theorem}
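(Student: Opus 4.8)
The plan is to deduce the inequality from the strong null space property for nuclear norm minimization, Lemma~\ref{lem:StrongNullSpaceCondition}, after observing that the hypothesis $\sigma_1+\cdots+\sigma_k<\sigma_{k+1}+\cdots+\sigma_t$ is precisely the null space condition (\ref{eq:NullSpaceUniqueness}) with $R=k$ applied to $\bm{A}$ itself, i.e. $2\|\bm{A}\|_{*k}<\|\bm{A}\|_*$. First I would reduce the operator form (\ref{OperatorCaseNN}) to the submatrix form (\ref{NuclearNorm4Submatrix}). Since the nuclear norm is unitarily invariant, choosing a unitary $\bm{\Phi}$ whose first $k$ columns span $\mathrm{range}(\bm{P})$ and the rest span $\mathrm{range}(\bm{I}-\bm{P})$, and likewise a $\bm{\Psi}$ adapted to $\bm{Q}$, turns $\bm{\Phi}^*\bm{P}\bm{A}\bm{Q}^*\bm{\Psi}$ and $\bm{\Phi}^*(\bm{I}-\bm{P})\bm{A}(\bm{I}-\bm{Q})^*\bm{\Psi}$ into the top-left and bottom-right blocks of $\bm{A}':=\bm{\Phi}^*\bm{A}\bm{\Psi}$, which has the same singular values as $\bm{A}$. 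Hence it suffices to prove $\|\bm{A}_{1:k,1:k}\|_*\le\|\bm{A}_{(k+1):m,(k+1):n}\|_*$ under the spread hypothesis.

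Next I would manufacture a compressed sensing instance whose null space is exactly $\mathcal{N}(\mathcal{A})=\mathrm{span}\{\bm{A}\}$ (for example, let $\mathcal{A}$ be the orthogonal projection of the matrix space onto the orthogonal complement of $\bm{A}$). Because the lone generator $\bm{A}$ satisfies $2\|\bm{A}\|_{*k}<\|\bm{A}\|_*$ and this condition is scale invariant, (\ref{eq:NullSpaceUniqueness}) holds for every nonzero element of $\mathcal{N}(\mathcal{A})$, so by Lemma~\ref{lem:StrongNullSpaceCondition} the minimization (\ref{eq:NNM}) recovers every matrix of rank at most $k$. I would then feed it the rank-$\le k$ ground truth $\bm{X}_0=\bm{P}\bm{A}\bm{Q}^*$, the top-left corner. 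Since $\bm{X}_0-\bm{A}$ lies in $\bm{X}_0+\mathcal{N}(\mathcal{A})$ and differs from $\bm{X}_0$, uniqueness of recovery forces $\|\bm{X}_0\|_*<\|\bm{X}_0-\bm{A}\|_*$, i.e. $\|\bm{P}\bm{A}\bm{Q}^*\|_*<\|\bm{A}-\bm{P}\bm{A}\bm{Q}^*\|_*$.

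The main obstacle is that the competitor $\bm{A}-\bm{P}\bm{A}\bm{Q}^*$ still carries the two cross blocks $\bm{P}\bm{A}(\bm{I}-\bm{Q})^*$ and $(\bm{I}-\bm{P})\bm{A}\bm{Q}^*$, so the argument only yields $\|\bm{P}\bm{A}\bm{Q}^*\|_*<\|\bm{P}\bm{A}(\bm{I}-\bm{Q})^*+(\bm{I}-\bm{P})\bm{A}\bm{Q}^*+(\bm{I}-\bm{P})\bm{A}(\bm{I}-\bm{Q})^*\|_*$, which is strictly weaker than (\ref{NuclearNorm4Submatrix}); deleting the cross blocks can only lower the nuclear norm, so they cannot simply be discarded. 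Closing this gap is the crux. The natural device is to annihilate the cross blocks by pinching, writing the block-diagonal part of $\bm{A}$ as the phase average $\frac{1}{2\pi}\int_0^{2\pi}\bm{U}_\theta\bm{A}\bm{V}_\theta^*\,d\theta$ with $\bm{U}_\theta=\mathrm{diag}(\bm{I}_k,e^{\imath\theta}\bm{I}_{m-k})$ and $\bm{V}_\theta=\mathrm{diag}(\bm{I}_k,e^{\imath\theta}\bm{I}_{n-k})$, and then rerunning the recovery argument on the pinched matrix, which has no cross blocks and would give the clean bound $\|\bm{A}_{11}\|_*<\|\bm{A}_{22}\|_*$. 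The delicate point, and the reason the statement is nontrivial, is that pinching shrinks $\|\bm{A}\|_*$, so one must still certify that the pinched matrix obeys $2\|\cdot\|_{*k}<\|\cdot\|_*$; the off-diagonal energy is exactly what inflates $\|\bm{A}\|_*$ enough for the hypothesis to hold, so this verification (not the recovery step) is where the real work lies, and a careless pinching can be circular. I would attempt it by combining the Ky Fan $k$-norm monotonicity of pinching, which controls $\|\cdot\|_{*k}$, with the dual description $\|\bm{A}_{22}\|_*=\max_{\|\bm{C}\|\le1}\mathrm{Re}\,\mathrm{Tr}(\bm{C}^*\bm{A})$ over contractions $\bm{C}$ supported on the bottom-right block, using the von Neumann trace inequality to build a bottom-right certificate dominating the top-left sign matrix of $\bm{A}_{11}$; if the pinching reduction resists, this dual construction (in the spirit of the weak certificate in Lemma~\ref{lem:WeakNullSpaceCondition}) is the fallback route to (\ref{NuclearNorm4Submatrix}).
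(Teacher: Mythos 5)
Your setup is sound and matches the paper's own strategy up to the crux: the reduction of (\ref{OperatorCaseNN}) to (\ref{NuclearNorm4Submatrix}) by unitary invariance, and the observation that the hypothesis is exactly the null space condition $2\|\bm{A}\|_{*k}<\|\bm{A}\|_*$ for the one-dimensional null space $\mathrm{span}\{\bm{A}\}$, which yields $\|\bm{X}+l\bm{A}\|_*>\|\bm{X}\|_*$ for every matrix $\bm{X}$ of rank at most $k$ and every $l>0$ --- this is precisely the paper's inequality (\ref{nu_increase}). You also correctly diagnose that taking the corner $\bm{P}\bm{A}\bm{Q}^*$ as the ground truth only bounds $\|\bm{A}_{11}\|_*$ by the nuclear norm of the full complement, cross blocks included, which is strictly weaker than (\ref{NuclearNorm4Submatrix}). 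But neither of your proposed repairs closes this gap. The pinching route founders exactly where you fear: pinching is monotone for the Ky Fan $k$-norm, but it also shrinks the full nuclear norm, so the block-diagonal matrix $\mathrm{diag}(\bm{A}_{11},\bm{A}_{22})$ need not inherit the spread hypothesis $2\|\cdot\|_{*k}<\|\cdot\|_*$ (take $\bm{A}$ whose nuclear norm is carried mostly by the off-diagonal blocks, e.g.\ a unitary whose diagonal corners are nearly rank-deficient). Since the hypothesis can genuinely be destroyed by pinching, the recovery argument cannot be rerun on the pinched matrix; this route is dead, not merely delicate. The ``fallback'' dual-certificate sentence points in the right direction but is not a proof as written.

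The missing construction, which is how the paper actually argues, is to apply (\ref{nu_increase}) not at the corner of $\bm{A}$ but at the negated rank-$k$ \emph{partial isometry} of the corner. Assume for contradiction $\|\bm{A}_{11}\|_*>\|\bm{A}_{22}\|_*$, write the SVD $\bm{A}_{11}=\bm{U}_1\bm{\Sigma}\bm{V}_1^*$, and set $\bm{X}=-\bm{U}_2\bm{V}_2^*$, where $\bm{U}_2,\bm{V}_2$ are $\bm{U}_1,\bm{V}_1$ padded with zero rows. Every subgradient of $\|\cdot\|_*$ at this $\bm{X}$ has the form $\bm{Z}=-\bm{U}_2\bm{V}_2^*+\bm{M}$ with $\bm{M}$ forced, by $\bm{M}^*\bm{X}=\bm{0}$, $\bm{X}\bm{M}^*=\bm{0}$, $\|\bm{M}\|_2\leq 1$, to be supported entirely on the bottom-right block; hence $\langle\bm{Z},\bm{A}\rangle=-\|\bm{A}_{11}\|_*+\mathrm{Re}\,\mathrm{Tr}(\bm{M}_{22}^*\bm{A}_{22})\leq-\|\bm{A}_{11}\|_*+\|\bm{A}_{22}\|_*<0$. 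The cross blocks of $\bm{A}$ drop out of the directional derivative automatically because the certificate is block-diagonal --- exactly the cancellation your first route lacked. Therefore $\bm{A}$ lies in the descent cone of $\|\cdot\|_*$ at $\bm{X}$, so $\|\bm{X}+l\bm{A}\|_*\leq\|\bm{X}\|_*$ for some $l>0$, contradicting (\ref{nu_increase}). (One must also treat the degenerate case $\mathrm{rank}(\bm{A}_{11})<k$ and use the complex subdifferential of Lemma \ref{lemma:complexdifferential}, but these are routine.) Your proposal identifies the crux and gestures at this certificate in its final sentence, but it does not supply the construction, so as it stands the proof is incomplete.
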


 \begin{proof}
We first consider the case where all the elements of $\bm{A}$ are real numbers. Without loss of generality, we consider
\begin{align}
\bm{P}=
\begin{array}{l}
\left[\begin{array}{*{20}{c}}
\bm{I}_{k\times k} & \bm{0}_{k\times(m-k)} \\
\bm{0}_{(m-k) \times k} & \bm{0}_{(m-k) \times (m-k)}
\end{array}\right],
\end{array}
\end{align}
and
\begin{align}
\bm{Q}=
\begin{array}{l}
\left[\begin{array}{*{20}{c}}
\bm{I}_{k\times k} & \bm{0}_{k\times(n-k)} \\
\bm{0}_{(n-k) \times k} & \bm{0}_{(n-k) \times (n-k)}
\end{array}\right].
\end{array}
\end{align}

Then
\begin{align}\label{ColumnSelector}
\bm{P}\bm{A}\bm{Q}^{*}
=
\begin{array}{l}
\left[\begin{array}{*{20}{c}}
\bm{A}_{1,1} &\cdots & \bm{A}_{1,k} &0 &\cdots &0  \\
\vdots &\vdots &\ddots &\vdots & \ddots &\vdots \\
\bm{A}_{k,1} &\cdots &\bm{A}_{k,k} &0 &\cdots &0 \\
\ &0 &\ &\ &0 &\
\end{array}\right]
\end{array},
\end{align}
and
\begin{align}\label{SubmatrixOperator2}
(\bm{I}-\bm{P})\bm{A}(\bm{I}-\bm{Q})^{*}
=\bm{A}-\bm{P}\bm{A}-\bm{A}\bm{Q}^{*}+\bm{P}\bm{A}\bm{Q}^{*}
=
\begin{array}{l}
\left[\begin{array}{*{20}{c}}
\ &0 &\ &\ &0 &\ \\
0 &\cdots &0 &\bm{A}_{k+1,k+1} &\cdots &\bm{A}_{k+1,n}\\
\vdots &\ddots& \vdots  &\vdots &\ddots& \vdots\\
0 &\cdots &0 &\bm{A}_{m,k+1} &\cdots &\bm{A}_{m,n}
\end{array}\right]
\end{array}.
\end{align}

Thus
\begin{align}
\|\bm{P}\bm{A}\bm{Q}^{*}\|_{*} = \|\bm{A}_{1:k,1:k}\|_{*},~
\| (\bm{I}-\bm{P})\bm{A}(\bm{I}-\bm{Q})^{*}\|_{*}=\|\bm{A}_{k+1:m,k+1:n}\|_{*}.
\end{align}

To prove this theorem, we first show that if $\sigma_1+\cdots+\sigma_{k} < \sigma_{k+1}+\cdots+\sigma_{t}$, for any matrix $\bm{X} \in \mathbb{C}^{m \times n}$ with rank no more than $k$, for any positive number $l>0$, we have
\begin{align}
\|\bm{X}+l\bm{A}\|_*> \|\bm{X}\|_{*}
\label{nu_increase}
\end{align}

The proof of (\ref{nu_increase}) follows similar arguments as in \cite{oymak_new_2010}.
\begin{align}
&\|\bm{X}+l\bm{A}\|_*\\
&\geq \sum_{i=1}^{t} |\sigma_i(\bm{X})-\sigma_i(l\bm{A})|\\
&\geq \sum_{i=1}^{k} (\sigma_i(\bm{X})-\sigma_i(l\bm{A}))+\sum_{i=k+1}^{t} |\sigma_i(\bm{X})-\sigma_i(l\bm{A})|\\
&\geq \sum_{i=1}^{k} \sigma_i(\bm{X})+(\sum_{i=k+1}^{t} \sigma_i(l\bm{A})-\sum_{i=1}^{k}\sigma_i(l\bm{A}))\\
&>\sum_{i=1}^{k} \sigma_i(\bm{X}) = \|\bm{X}\|_{*},
\end{align}
where, for the first inequality, we used the following lemma, which instead follows from Lemma \ref{NuclearNorm4MatrixDiffandNuclearNormforSingularValueDiff}.
\begin{lem}
  Let $\bm{G}$ and $\bm{H}$ be two matrices of the same dimension. Then $\sum_{i=1}^{t} |\sigma_i(\bm{G})-\sigma_i(\bm{H})| \leq \|\bm{G}-\bm{H}\|_*$.
\end{lem}

\begin{lem}\label{NuclearNorm4MatrixDiffandNuclearNormforSingularValueDiff}
( \cite{horn_matrix_2012,bhatia_matrix_2013})  For arbitrary matrices $\bm{X}, \bm{Y}$, and $\bm{Z}=\bm{X}-\bm{Y}\in \mathbb{C}^{m\times n}$. Let Let $\sigma_1$, $\sigma_2$, ..., and $\sigma_{t}$ ($t=\min\{m,n\}$) be the singular values of $\bm{A}$ arranged in descending order, namely $\sigma_1\geq \sigma_2\geq \cdots\geq \sigma_{t}.$ Let $s_{i}(\bm{X},\bm{Y})$ be the distance between the $i$-th singular value of $\bm{X}$ and $\bm{Y}$, namely,
\begin{align}
s_{i}(\bm{X},\bm{Y})=|\sigma_{i}(\bm{X})-\sigma_{i}(\bm{Y})|,i=1,2,\cdots,k.
\end{align}

Let $s_{[i]}(\bm{X},\bm{Y})$ be the $i$-th largest value of sequence $s_{1}(\bm{X},\bm{Y}), s_2(\bm{X},\bm{Y}), \cdots, s_t(\bm{X},\bm{Y})$, then
\begin{align}
\sum_{i=1}^{k} s_{[i]}(\bm{X},\bm{Y}) \leq \|\bm{Z}\|_{k}, \forall k=1,2, \cdots, t,
\end{align}
where $\|\bm{Z}\|_k$ is defined as $\sum_{i=1}^k \sigma_i(\bm{Z})$.
\end{lem}

We next show that if $\|\bm{A}_{1:k,1:k}\|_* > \|\bm{A}_{(k+1):m,(k+1):n}\|_*$, one can construct a matrix $\bm{X}$ with rank at most $k$ such that
$$ \|\bm{X}+l\bm{A}\|_*\leq \|\bm{X}\|_{*}$$
for a certain $l>0$. We divide this construction into two cases: when $\bm{A}_{1:k,1:k}$ has rank equal to $k$, and when $\bm{A}_{1:k,1:k}$ has rank smaller than $k$.

When $\bm{A}_{1:k,1:k}$ has rank $k$, we denote its SVD as
$$\bm{A}_{1:k,1:k}={\bm{U}_1} {\bm{\Sigma}} {\bm{V}_1}^{*}. $$
Then the SVD of $\begin{array}{l}
\left[\begin{array}{*{20}{c}}
 {\bm{A}_{1:k,1:k}} & 0\\
 0 & 0
\end{array}\right]
\end{array}$ is given by

\begin{align}
\begin{array}{l}
\left[\begin{array}{*{20}{c}}
 \bm{A}_{1:k,1:k} & 0\\
 0 & 0
\end{array}\right]
\end{array}
=
\begin{array}{l}
\left[\begin{array}{*{20}{c}}
{\bm{U}}_{1} \\
0
\end{array}\right]
\end{array}
{\bm{\Sigma}}
\begin{array}{l}
\left[\begin{array}{*{20}{c}}
{\bm{V}}_{1} \\
0
\end{array}\right]^{*}
\end{array}
\end{align}

We now construct
$$\bm{X}=-\begin{array}{l}
\left[\begin{array}{*{20}{c}}
{\bm{U}}_{1} \\
0
\end{array}\right]
\end{array}
\begin{array}{l}
\left[\begin{array}{*{20}{c}}
{\bm{V}}_{1} \\
0
\end{array}\right]^{*}
\end{array}.$$

Let us denote
$$\bm{U}_2=\begin{array}{l}
\left[\begin{array}{*{20}{c}}
{\bm{U}}_{1} \\
0
\end{array}\right]
\end{array}$$
and
$$\bm{V}_2=\begin{array}{l}
\left[\begin{array}{*{20}{c}}
{\bm{V}}_{1} \\
0
\end{array}\right]
\end{array},$$
then the subdifferential of $\| \cdot \|_{*}$ at $\bm{X}$ is given by
\begin{align}\label{Subdifferential4NN}
\partial \| \bm{X} \|_{*}
=\{{\bm{Z}}: {\bm{Z}}=
-\bm{U}_2 \bm{V}_2^*      + \bm{M},
{\rm where\ }
\|\bm{M}\|_2\leq 1,
\bm{M}^{*} \bm{X}={0},
\bm{X}\bm{M}^{*}=0\}.
\end{align}
For any $\bm{Z}\in \partial \| \bm{X} \|_{*}$,
\begin{align}\label{Nega1}
    \langle \bm{Z},\bm{A} \rangle
 =-I_{1}+I_{2},
\end{align}
where
\begin{align}
I_{1} = {\rm Tr}( \bm{V}_2 \bm{U}_2^* \bm{A}), I_{2}={\rm Tr}(\bm{M}^{*} \bm{A}).
\end{align}
Let us partition the matrix $\bm{A}$ into four blocks:
\begin{align}
     \begin{array}{l}
\left[\begin{array}{*{20}{c}}
\bm{A}_{11} & \bm{A}_{12}\\
\bm{A}_{21} & \bm{A}_{22}
\end{array}\right],
\end{array}
\end{align}
where $\bm{A}_{11} \in \mathbb{R}^{k \times k}$, $\bm{A}_{12} \in \mathbb{R}^{k \times (n-k)}$, $\bm{A}_{21} \in \mathbb{R}^{(m-k) \times (n-k)}$, and $\bm{A}_{21} \in \mathbb{R}^{(m-k) \times (n-k)}$. Then we have

\begin{align}\label{I1Estimate}
     I_{1}
     &= {\rm Tr}({\bm{V}_2}{\bm{U}_2}^{*}\bm{A})
     ={\rm Tr}\left(
     \begin{array}{l}
     \left[\begin{array}{*{20}{c}}
     {\bm{V}}_{1}\\
     0
     \end{array}\right]
     \end{array}
     \begin{array}{l}
\left[\begin{array}{*{20}{c}}
{\bm{U}}_{1}^{*} & 0
\end{array}\right]
\end{array}
     \begin{array}{l}
\left[\begin{array}{*{20}{c}}
\bm{A}_{11} & \bm{A}_{12}\\
\bm{A}_{21} & \bm{A}_{22}
\end{array}\right]
\end{array}
     \right) \nonumber\\
&     ={\rm Tr}\left(
     \begin{array}{l}
     \left[\begin{array}{*{20}{c}}
     {\bm{V}}_{1}{\bm{U}}_{1}^{*} & 0\\
     0 &0
     \end{array}\right]
     \end{array}
     \begin{array}{l}
\left[\begin{array}{*{20}{c}}
\bm{A}_{11} & \bm{A}_{12}\\
\bm{A}_{21} & \bm{A}_{22}
\end{array}\right]
\end{array}
     \right) \nonumber\\
     & ={\rm Tr}({\bm{V}}_{1}{\bm{U}}_{1}^{*} \bm{A}_{11})={\rm Tr}({\bm{V}}_{1}{\bm{U}}_{1}^{*} {\bm{U}}_{1} {\bm{\Sigma}} {\bm{V}}_{1}^{*})
     =\sum_{i=1}^{k} \sigma_{i}({\bm{A}_{11}})
     =\|{\bm{A}_{11}}\|_{*}.
\end{align}

Since $\bm{M}^{*} \bm{X}={0}$, $\bm{X}\bm{M}^{*}=0$, $\|M\|_2\leq 1$ and $\bm{X}$ is a rank-$k$ left top corner matrix, we must have
$$\bm{M}=\left[\begin{array}{*{20}{c}}
0 & 0\\
0 & \bm{M}_{22}
\end{array}\right],$$
where $\bm{M}_{22}$ is of dimension $(m-k)\times (n-k)$, and $\|\bm{M}_{22}\|_2\leq 1$.

Then
\begin{align}\label{I2Estimate1}
    I_{2}
    &={\rm Tr}(\bm{M}^{*}\bm{A})\\
    &={\rm Tr}\left(
    \begin{array}{l}
\left[\begin{array}{*{20}{c}}
0 & 0\\
0 & \bm{M}_{22}^*
\end{array}\right]
\end{array}
\begin{array}{l}
\left[\begin{array}{*{20}{c}}
\bm{A}_{11} & \bm{A}_{12}\\
\bm{A}_{21} & \bm{A}_{22}
\end{array}\right]
\end{array}
\right)\\
&={\rm Tr}\left( \bm{M}_{22}^* \bm{A}_{22}\right) \leq \|\bm{A}_{22}\|_{*},
\end{align}
where the last inequality is from the fact that the nuclear norm is the dual norm of the spectral norm, and $\|\bm{M}_{22}\|_2\leq 1$.

Thus we have
\begin{align}
    \langle \bm{Z},\bm{A} \rangle
 =-I_{1}+I_{2} \leq - \|{\bm{A}_{11}}\|_{*}+\|\bm{A}_{22}\|_{*}<0,
\end{align}
because we assume that $\|\bm{A}_{1:k,1:k}\|_* > \|\bm{A}_{(k+1):m,(k+1):n}\|_*$.
Since $\langle \bm{Z},\bm{A} \rangle<0$ for every $\bm{Z} \in \partial \| \bm{X} \|_{*}$, $\bm{A}$ is in the normal cone of the convex cone generated by $\partial \|\cdot\|_*$ at the point
$\bm{X}$. By Theorem 23.7 in \cite{rockafellar_convex_2015},  we know that the normal cone of the convex cone generated by $\partial \|\cdot\|_*$ at the point
$\bm{X}$ is the cone of descent directions for $\|\cdot\|_{*}$ at the point of $\bm{X}$. Thus $\bm{A}$ is in the descent cone of $\| \cdot \|_{*}$ at the point $\bm{X}$.
This means that, when $\bm{A}_{11}$ has rank equal to $k$, there exists a positive number $l>0$, such that
$$\|\bm{X}+l\bm{A}\|_{*} \leq \|\bm{X}\|_*.$$

Let us suppose instead that $\bm{A}_{11}$ has rank $b<k$. We can write the SVD of $\bm{A}_{11}$ as
\begin{align}
 \bm{A}_{11}
=\begin{array}{l}
\left[\begin{array}{*{20}{c}}
{\bm{U}}_{1} & \bm{U}_{3}
\end{array}\right]
\end{array}
\begin{array}{l}
\left[\begin{array}{*{20}{c}}
\bm{\Sigma} & 0\\
0 & 0
\end{array}\right]
\end{array}
\begin{array}{l}
\left[\begin{array}{*{20}{c}}
{\bm{V}}_{1} & \bm{V}_{3}
\end{array}\right]^{*}
\end{array}
\end{align}

Then we construct $$\bm{X}=-\begin{array}{l}
\left[\begin{array}{*{20}{c}}
{\bm{U}}_{1} & \bm{U}_{3}\\
0 & 0
\end{array}\right]
\end{array}  \begin{array}{l}
\left[\begin{array}{*{20}{c}}
{\bm{V}}_{1} & \bm{V}_{3}\\
0 & 0
\end{array}\right]^{*}
\end{array}.
    $$
By going through similar arguments as above (except for taking care of extra terms involving $\bm{U}_3$ and $\bm{V}_3$), one can obtain that $\langle \bm{Z},\bm{A} \rangle<0$ for every $\bm{Z} \in \partial \| \bm{X} \|_{*}$.

In summary, no matter whether $\bm{A}_{11}$ has rank equal to $k$ or smaller to $k$, there always exists a positive number $l>0$, such that$\|\bm{X}+l\bm{A}\|_{*} \leq \|\bm{X}\|_*$.
However, this contradicts (\ref{nu_increase}), and we conclude $\|\bm{A}_{1:k,1:k}\|_* \leq \|\bm{A}_{(k+1):m,(k+1):n}\|_*$, when $\bm{A}$ has real-numbered elements.

We further consider the case when $\bm{A}$ is a complex-numbered matrix. We first derive the subdifferential of $\|\bm{X}\|_{*}$ for any complex-numbered matrix $m \times n$ $\bm{X}$.
For any $\bm{\alpha} \in \mathbb{R}^{m \times n}$ and any $\bm{\beta}\in \mathbb{R}^{m \times n}$, we define $\mathcal{F}~:~\mathbb{R}^{2m \times n}\mapsto\mathbb{R}$ as
\begin{equation}\label{def:F}
\mathcal{F}\left(\left[\begin{matrix}\bm{\alpha}\cr\bm{\beta}\end{matrix}\right]\right)=\|(\bm{\alpha}+\imath\bm{\beta})\|_*.
\end{equation}
To find the subdifferential of $\|\bm{X}\|_{*}$, we need to derive $\partial \mathcal{F}\left( \left[\begin{matrix}{\rm Re}(\bm{X})\cr {\rm Im}(\bm{X})\end{matrix}\right]   \right)$, for which we have the following lemma,
the proof of which is given in Appendix \ref{proofofcomplexsubdifferential}. (Note that in our earlier work \cite{cai_robust_2016} and the corresponding preprint on arxiv.org, we have already shown one direction of \ref{eq:GsubsetsubdiffF}, namely $\mathfrak{H} \subseteq \partial\mathcal{F}\left(\left[\begin{matrix}{\rm Re}(\bm{X})\cr {\rm Im}(\bm{X})\end{matrix}\right]\right)$. Now we show the two sets are indeed equal. )

\begin{lem}
\label{lemma:complexdifferential}
Suppose a rank-$R$ matrix $\bm{X}\in \mathbb{C}^{M \times N}$ admits a singular value decomposition  $\bm{X}=\bm{U}\bm{\bm{\Sigma}}\bm{V}^*$,
where $\bm{{\Sigma}}\in\mathbb{R}^{R\times R}$ is a diagonal matrix, and $\bm{U}\in\mathbb{C}^{M\times R}$ and
$\bm{V}\in\mathbb{C}^{N\times R}$ satisfy $\bm{U}^*\bm{U}=\bm{V}^*\bm{V}=\bm{I}$.

Define $\mathfrak{S} \subseteq \mathbb{C}^{M \times N}$ as:
\begin{equation}\label{eq:setS}
\mathfrak{S}=\left\{\bm{U}\bm{V}^*+\bm{W}~|~\bm{U}^*\bm{W}=\bm{0},~\bm{W}\bm{V}=\bm{0},
~\|\bm{W}\|_2\leq 1,~\bm{W} \in \mathbb{C}^{M \times N}\right\},
\end{equation}
and define
$$\mathcal{F}\left(\left[\begin{matrix}{\rm Re}(\bm{X})\cr {\rm Im}(\bm{X})\end{matrix}\right]\right)=\|\bm{X}\|_*.$$
Then we have
\begin{equation}\label{eq:GsubsetsubdiffF}
\mathfrak{H}\equiv \left\{\left[\begin{matrix}\bm{\alpha}\cr\bm{\beta}\end{matrix}\right]~\Big|~
\bm{\alpha}+\imath\bm{\beta}\in\mathfrak{S}\right\}
= \partial\mathcal{F}\left(\left[\begin{matrix}{\rm Re}(\bm{X})\cr {\rm Im}(\bm{X})\end{matrix}\right]\right).
\end{equation}
\end{lem}

For a complex-numbered matrix $\bm{A}$, we will similarly show that, if $\|\bm{A}_{1:k,1:k}\|_* > \|\bm{A}_{(k+1):m,(k+1):n}\|_*$, one can construct a matrix $\bm{X}$ with rank at most $k$ such that
$$ \|\bm{X}+l\bm{A}\|_*\leq \|\bm{X}\|_{*}$$
for a certain $l>0$. We divide this construction into two cases: when $\bm{A}_{1:k,1:k}$ has rank equal to $k$, and when $\bm{A}_{1:k,1:k}$ has rank smaller than $k$.

When $\bm{A}_{1:k,1:k}$ has rank $k$, we denote its SVD as
$$\bm{A}_{1:k,1:k}={\bm{U}_1} {\bm{\Sigma}} {\bm{V}_1}^{*}. $$
Then the SVD of $\begin{array}{l}
\left[\begin{array}{*{20}{c}}
 {\bm{A}_{1:k,1:k}} & 0\\
 0 & 0
\end{array}\right]
\end{array}$ is given by

\begin{align}
\begin{array}{l}
\left[\begin{array}{*{20}{c}}
 \bm{A}_{1:k,1:k} & 0\\
 0 & 0
\end{array}\right]
\end{array}
=
\begin{array}{l}
\left[\begin{array}{*{20}{c}}
{\bm{U}}_{1} \\
0
\end{array}\right]
\end{array}
{\bm{\Sigma}}
\begin{array}{l}
\left[\begin{array}{*{20}{c}}
{\bm{V}}_{1} \\
0
\end{array}\right]^{*}
\end{array}
\end{align}

We now construct
$$\bm{X}=-\begin{array}{l}
\left[\begin{array}{*{20}{c}}
{\bm{U}}_{1} \\
0
\end{array}\right]
\end{array}
\begin{array}{l}
\left[\begin{array}{*{20}{c}}
{\bm{V}}_{1} \\
0
\end{array}\right]^{*}
\end{array}.$$
We denote
$$\bm{U}_2=\begin{array}{l}
\left[\begin{array}{*{20}{c}}
{\bm{U}}_{1} \\
0
\end{array}\right]
\end{array},~\text{and}~~\bm{V}_2=\begin{array}{l}
\left[\begin{array}{*{20}{c}}
{\bm{V}}_{1} \\
0
\end{array}\right]
\end{array},$$
then by Lemma \ref{lemma:complexdifferential},  the subdifferential of $\| \cdot \|_{*}$ at $\bm{X}$ is given by
\begin{align}\label{Subdifferential4NN_Complex}
\partial \| \bm{X} \|_{*}
=\{{\bm{Z}}: {\bm{Z}}=
-\bm{U}_2 \bm{V}_2^*      + \bm{M}],
{\rm where\ }
\|\bm{M}\|_2\leq 1,
\bm{M}^{*} \bm{X}={0},
\bm{X}\bm{M}^{*}=0\}.
\end{align}
For any $\bm{Z}\in \partial \| \bm{X} \|_{*}$,
\begin{align}\label{Nega1_complex}
    \langle \bm{Z},\bm{A} \rangle
 =-I_{1}+I_{2},
\end{align}
where 
\begin{align}
I_{1} = {\rm Re} \left(  {\rm Tr}( \bm{V}_2 \bm{U}_2^* \bm{A}) \right),
I_{2}= {\rm Re} \left( {\rm Tr}(M^{*} \bm{A}) \right) .
\end{align}
Similar to the real-numbered case, let us partition the matrix $\bm{A}$ into four blocks:
\begin{align}
     \begin{array}{l}
\left[\begin{array}{*{20}{c}}
\bm{A}_{11} & \bm{A}_{12}\\
\bm{A}_{21} & \bm{A}_{22}
\end{array}\right],
\end{array}
\end{align}
where $\bm{A}_{11} \in \mathbb{C}^{k \times k}$, $\bm{A}_{12} \in \mathbb{C}^{k \times (n-k)}$, $\bm{A}_{21} \in \mathbb{C}^{(m-k) \times (n-k)}$, and $\bm{A}_{21} \in \mathbb{C}^{(m-k) \times (n-k)}$.
We still have
\begin{align}\label{I1Estimate_complex}
{\rm Tr}({\bm{V}_2}{\bm{U}_2}^{*}\bm{A})
     &={\rm Tr}\left(
     \begin{array}{l}
     \left[\begin{array}{*{20}{c}}
     {\bm{V}}_{1}\\
     0
     \end{array}\right]
     \end{array}
     \begin{array}{l}
\left[\begin{array}{*{20}{c}}
{\bm{U}}_{1}^{*} & 0
\end{array}\right]
\end{array}
     \begin{array}{l}
\left[\begin{array}{*{20}{c}}
\bm{A}_{11} & \bm{A}_{12}\\
\bm{A}_{21} & \bm{A}_{22}
\end{array}\right]
\end{array}
     \right) \nonumber\\
     &={\rm Tr}\left(
     \begin{array}{l}
     \left[\begin{array}{*{20}{c}}
     {\bm{V}}_{1}{\bm{U}}_{1}^{*} & 0\\
     0 &0
     \end{array}\right]
     \end{array}
     \begin{array}{l}
\left[\begin{array}{*{20}{c}}
\bm{A}_{11} & \bm{A}_{12}\\
\bm{A}_{21} & \bm{A}_{22}
\end{array}\right]
\end{array}
     \right) \nonumber\\
     &={\rm Tr}({\bm{V}}_{1}{\bm{U}}_{1}^{*} {\bm A}_{11})={\rm Tr}({\bm{V}}_{1}{\bm{U}}_{1}^{*} {\bm{U}}_{1} {\bm{\Sigma}} {\bm{V}}_{1}^{*})
     =\sum_{i=1}^{k} \sigma_{i}({\bm{A}_{11}})
     =\|{\bm{A}_{11}}\|_{*}.
\end{align}
So $$I_1= {\rm Re} \left( {\rm Tr}({\bm{V}_2}{\bm{U}_2}^{*}\bm{A})\right)=\|{\bm{A}_{11}}\|_{*}.$$

Since $\bm{M}^{*} \bm{X}={0}$, $\bm{X}\bm{M}^{*}=0$, $\|\bm{M}\|_2\leq 1$ and $\bm{X}$ is a rank-$k$ left top corner matrix, we must have
$$M=\left[\begin{array}{*{20}{c}}
0 & 0\\
0 & \bm{M}_{22}
\end{array}\right],$$
where $\bm{M}_{22}$ is of dimension $(m-k)\times (n-k)$, and $\|\bm{M}_{22}\|_2\leq 1$.
Then we have
\begin{align}\label{I2Estimate1_Complex}
    I_{2}
    &={\rm Re} \left( {\rm Tr}(\bm{M}^{*}\bm{A}) \right)    \\
    &={\rm Re} \left(    {\rm Tr}\left(
    \begin{array}{l}
\left[\begin{array}{*{20}{c}}
0 & 0\\
0 & \bm{M}_{22}^*
\end{array}\right]
\end{array}
\begin{array}{l}
\left[\begin{array}{*{20}{c}}
\bm{A}_{11} & \bm{A}_{12}\\
\bm{A}_{21} & \bm{A}_{22}
\end{array}\right]
\end{array}
\right)   \right)  \\
&= {\rm Re} \left( {\rm Tr}\left( \bm{M}_{22}^* \bm{A}_{22}\right) \right) \leq \|\bm{A}_{22}\|_{*},
\end{align}
where the last inequality is because the nuclear norm is the dual norm of the spectral norm, and $\|\bm{M}_{22}\|_2\leq 1$.

Thus we have
\begin{align}
    \langle \bm{Z},\bm{A} \rangle
 =-I_{1}+I_{2} \leq - \|{\bm{A}_{11}}\|_{*}+\|\bm{A}_{22}\|_{*}<0,
\end{align}
because we assume that $\|\bm{A}_{1:k,1:k}\|_* > \|\bm{A}_{(k+1):m,(k+1):n}\|_*$.
Since $\langle \bm{Z},\bm{A} \rangle<0$ for every $\bm{Z} \in \partial \| \bm{X} \|_{*}$, $\bm{A}$ is in the descent cone of $\| \cdot \|_{*}$ at the point $\bm{X}$.
This means that, when $\bm{A}_{11}$ has rank equal to $k$, there exists a positive number $l>0$, such that
$$\|\bm{X}+l\bm{A}\|_{*} \leq \|\bm{X}\|_*.$$

Let us suppose instead that $\bm{A}_{11}$ has rank $b<k$. We can write the SVD of $\bm{A}_{11}$ as

\begin{align}
 \bm{A}_{11}
=\begin{array}{l}
\left[\begin{array}{*{20}{c}}
{\bm{U}}_{1} & \bm{U}_{3}
\end{array}\right]
\end{array}
\begin{array}{l}
\left[\begin{array}{*{20}{c}}
\bm{\Sigma} & 0\\
0 & 0
\end{array}\right]
\end{array}
\begin{array}{l}
\left[\begin{array}{*{20}{c}}
{\bm{V}}_{1} & \bm{V}_{3}
\end{array}\right]^{*}
\end{array}
\end{align}

Then we construct $$\bm{X}=-\begin{array}{l}
\left[\begin{array}{*{20}{c}}
{\bm{U}}_{1} & \bm{U}_{3}\\
0 & 0
\end{array}\right]
\end{array}  \begin{array}{l}
\left[\begin{array}{*{20}{c}}
{\bm{V}}_{1} & \bm{V}_{3}\\
0 & 0
\end{array}\right]^{*}
\end{array}.
    $$
By going through similar arguments as above (except for taking care of extra terms involving $\bm{U}_3$ and $\bm{V}_3$), one can obtain that $\langle \bm{Z},\bm{A} \rangle<0$ for every $\bm{Z} \in \partial \| \bm{X} \|_{*}$.

In summary, no matter whether complex-numbered $\bm{A}_{11}$ has rank equal to $k$ or smaller to $k$, there always exists a positive number $l>0$, such that$\|\bm{X}+l\bm{A}\|_{*} \leq \|\bm{X}\|_*$. However, this contradicts (\ref{nu_increase}), and we conclude $\|\bm{A}_{1:k,1:k}\|_* \leq \|\bm{A}_{(k+1):m,(k+1):n}\|_*$.

\end{proof}

\section{Numerical results}
\label{sec:NumericalResult}

In this section, we perform numerical experiments to demonstrate the empirical performance of Hankel matrix recovery, and show its robustness to the separations between atoms.
We use superpositions of complex sinusoids as test signals. But we remark that Hankel matrix recovery can also work for superpositions of complex exponentials.
We consider the non-uniform sampling of entries studied in \cite{tang_compressed_2013,chen_robust_2014}, where we uniformly randomly observe $M$ entries (without replacement) of ${\bm{x}}$
from  $\{0,1,\ldots,2N-2\}$. We also consider two signal
(frequency) reconstruction algorithms: the Hankel nuclear norm minimization and the atomic norm minimization.

We fix $N=64$, i.e., the dimension of the ground truth signal ${\bm{x}}$ is $127$. We conduct experiments under different $M$ and $R$ for different approaches. For each approach
with a fixed $M$ and $R$, we test $100$ trials, where each trial is performed as follows. We first generate the true signal ${\bm{x}}=[{\bm{x}}_{0},{\bm{x}}_1,\ldots,{\bm{x}}_{126}]^T$ with
${\bm{x}}_{t}=\sum_{k=1}^{R}c_ke^{\imath 2\pi f_k t}$ for $t=0,1,\ldots,126$, where $f_k$ are frequencies drawn from the interval $[0,1]$ uniformly at random, and $c_k$ are complex
coefficients satisfying the model $c_k=(1+10^{0.5m_k})e^{i2\pi\theta_k}$ with $m_k$ and $\theta_k$ uniformly randomly drawn from the interval $[0,1]$. Let the reconstructed signal be represented by $\hat{\bm{x}}$. If $\frac{\|\hat{\bm{x}}-{\bm{x}}\|_2}{\|{\bm{x}}\|_2}\leq 10^{-3}$, then we regard it as a successful reconstruction. We also provide the simulation results under the Gaussian measurements of $\bm{x}$ as in
\cite{cai_robust_2016}.

We plot in Figure  \ref{fig:mean and std of nets} the rate of successful reconstruction with respect to different $M$ and $R$ for different approaches. The black and white region indicate a 0\% and 100\% of
successful reconstruction respectively, and a grey between 0\% and 100\%.  From the figure, we see that the atomic norm minimization still suffers from non-negligible failure probability even if the number of
measurements approach the full 127 samples. The reason is that, since the underlying frequencies are randomly chosen, there is a sizable probability that some frequencies are close to each other. When the
frequencies are close to each other violating the atom separation condition, the atomic norm minimization can still fail even if we observe the full 127 samples.
By comparison, the Hankel matrix recovery approach experiences a sharper phase transition, and is robust to the frequency separations. We also see that under both
the Gaussian projection and the non-uniform sampling models, both the atomic norm minimization and the Hankel matrix recovery approach have similar performance.
    \begin{figure}
        \centering
        \begin{subfigure}[b]{0.475\textwidth}
            \centering
            \includegraphics[width=\textwidth]{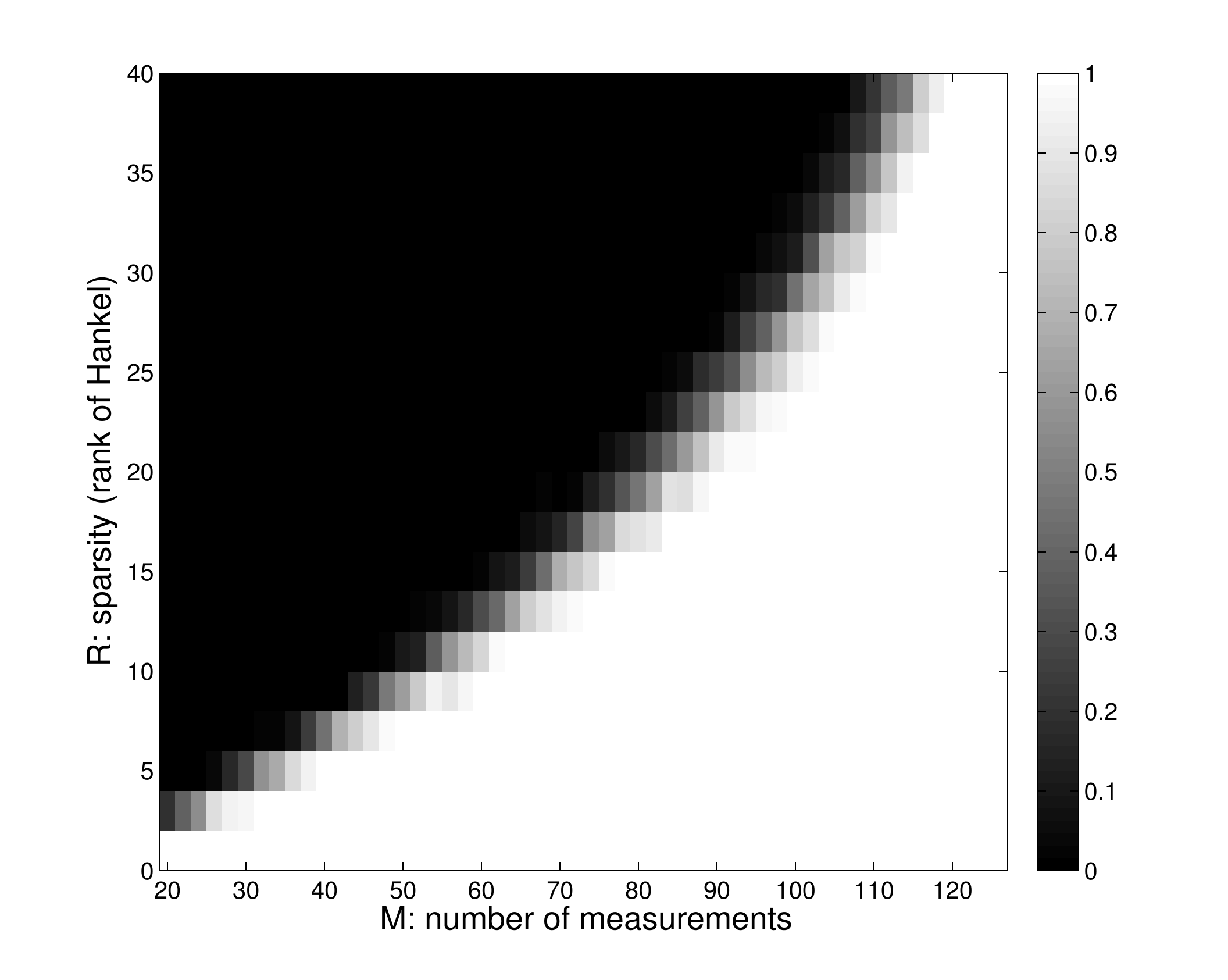}
            \caption[]%
            {{\small Hankel nuclear norm minimization with random Gaussian projections}}
        \end{subfigure}
        \hfill
        \begin{subfigure}[b]{0.475\textwidth}
            \centering
            \includegraphics[width=\textwidth]{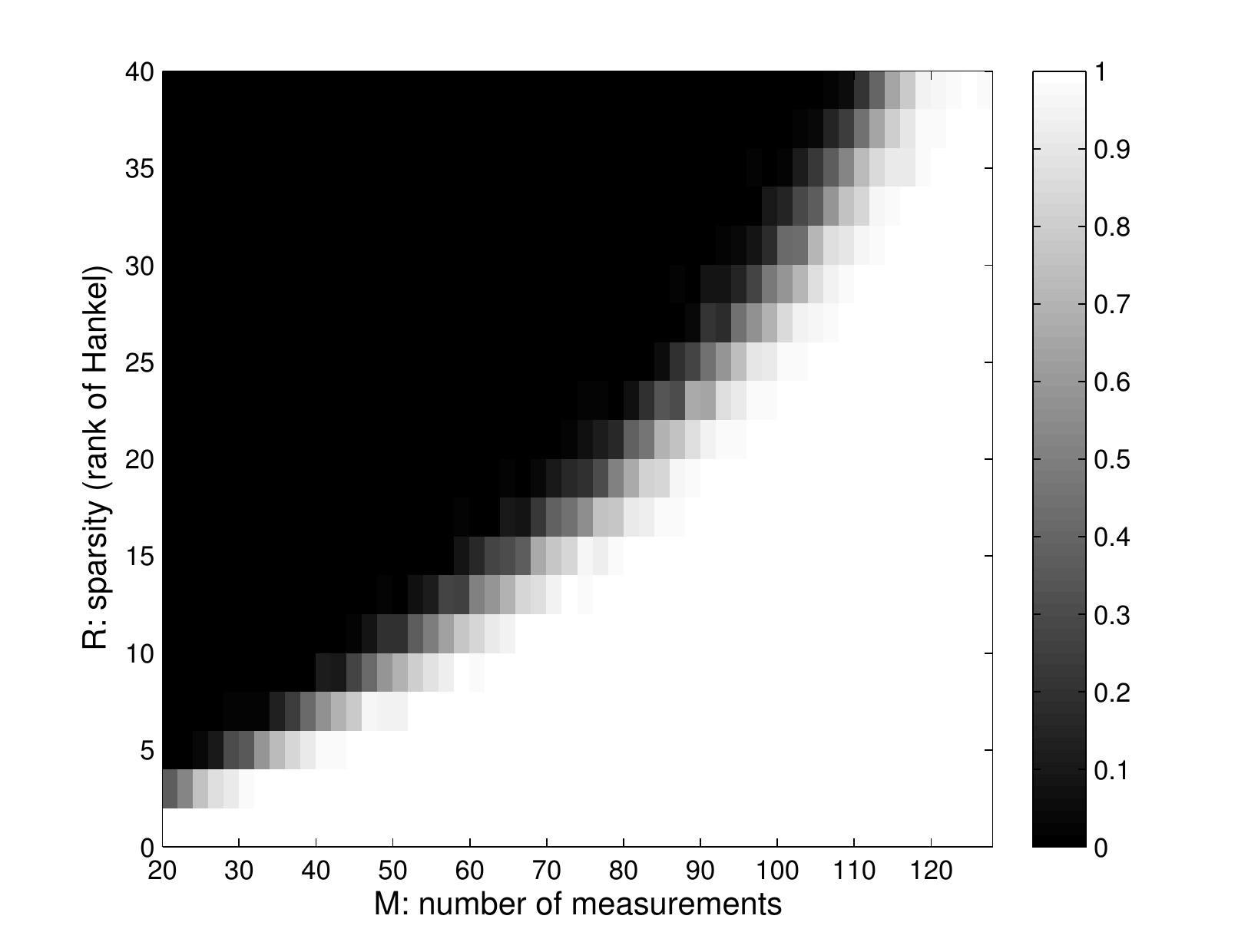}
            \caption[]%
            {{\small Hankel nuclear norm minimization with non-uniform sampling of entries}}
        \end{subfigure}
        \vskip\baselineskip
        \begin{subfigure}[b]{0.475\textwidth}
            \centering
            \includegraphics[width=\textwidth]{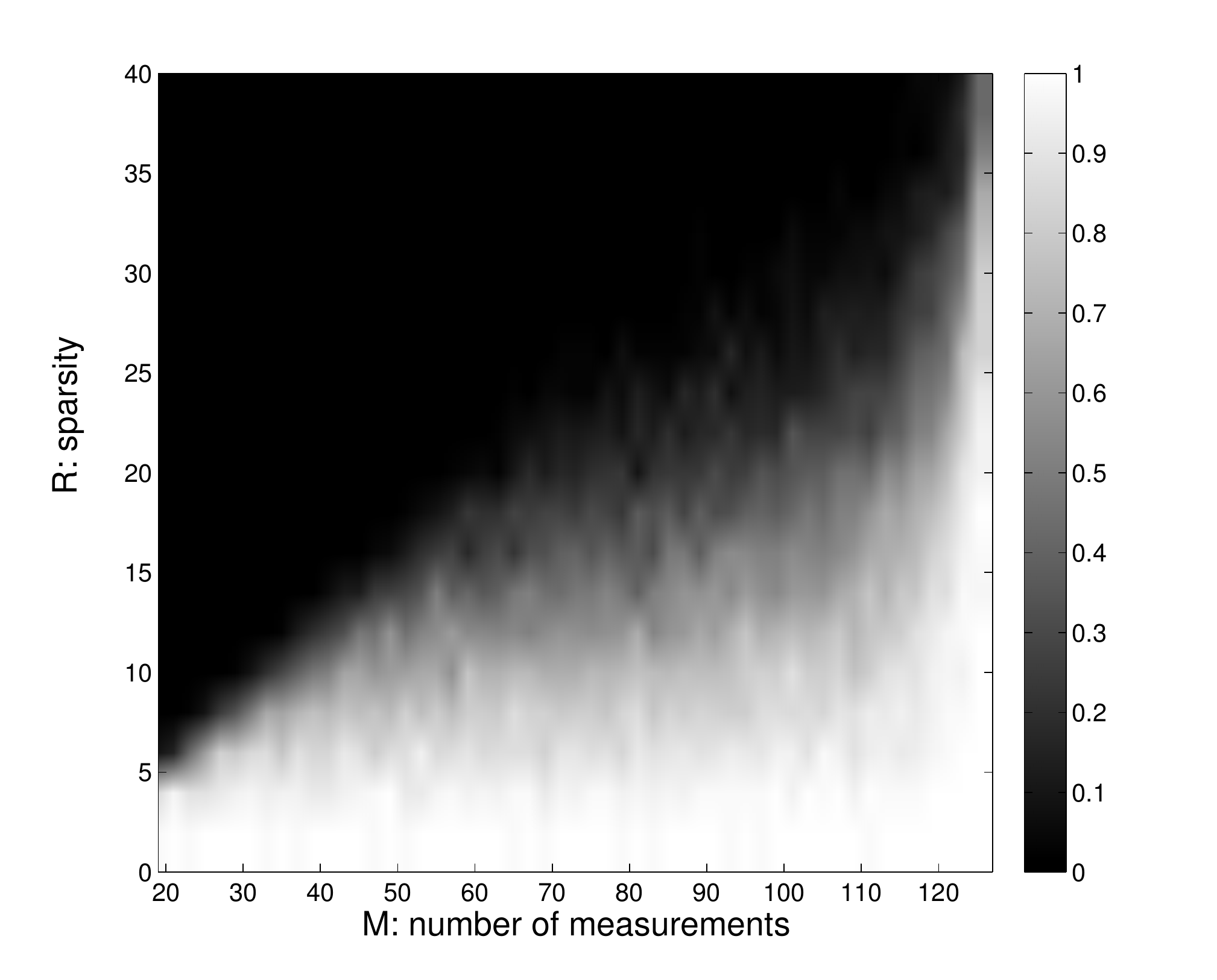}
            \caption[]%
{{\small Atomic norm minimization with random Gaussian projections}}
        \end{subfigure}
        \quad
        \begin{subfigure}[b]{0.475\textwidth}
            \centering
            \includegraphics[width=\textwidth]{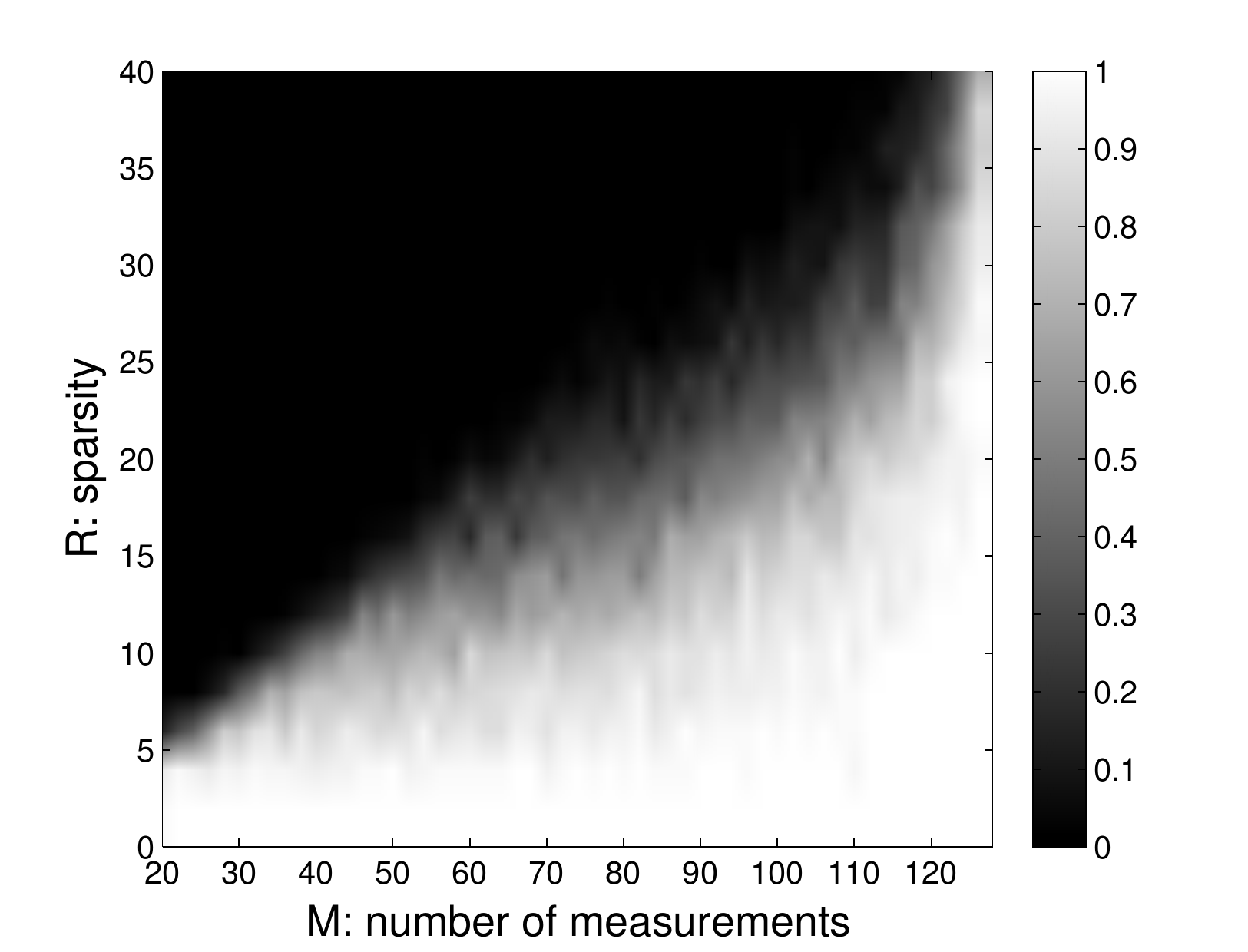}
            \caption[]%
            {{\small Atomic norm minimization with non-uniform sampling of entries}}
        \end{subfigure}
        \caption[]
        {\small Performance comparisons between atomic norm minimization and Hankel matrix recovery}
        \label{fig:mean and std of nets}
    \end{figure}

We further demonstrate the robustness of the Hankel matrix recovery approach to the separations between frequency atoms, as we vary the separations between frequencies. In our first set of experiments, we take $N=64, |\mathcal{M}|=M=65$ ($\approx 51\%$ sampling rate) and $R=8$, and consider noiseless measurements. Again we generate the magnitude of the coefficients as $1+10^{0.5 p}$ where $p$ is uniformly randomly
generated from $[0,1)$, and the realized magnitudes are 3.1800, 2.5894, 2.1941, 2.9080, 3.9831, 4.0175, 4.1259, 3.6182 in this experiment.
The corresponding phases of the coefficients are randomly generated as $2\pi s$, where $s$ is uniformly randomly generated from $[0,1)$. In this experiment, the realized phases
are 4.1097, 5.4612, 5.4272, 4.7873, 1.0384, 0.4994, 3.1975, and 0.5846. The first $R-1=7$ frequencies of exponentials are generated uniformly randomly over $[0,1)$, and
then the last frequency is added in the proximity of the 3rd frequency. In our 6 experiments, the 8th frequency is chosen such that the frequency separation between the 8th frequency
and the 3rd frequency is respectively $ 0.03, 0.01, 0.003, 0.001, 0.0003$, and $0.0001$.
Specifically, in our 6 experiments, the locations of the 8 frequencies are respectively
\{0.3923, 0.9988, 0.3437, 0.9086, 0.6977, 0.0298, 0.4813, 0.3743\},
\{0.3923, 0.9988, 0.3437, 0.9086, 0.6977, 0.0298, 0.4813, 0.3537\},
\{0.3923, 0.9988, 0.3437, 0.9086, 0.6977, 0.0298, 0.4813, 0.3467\},
\{0.3923, 0.9988, 0.3437, 0.9086, 0.6977, 0.0298, 0.4813, 0.3447\},
\{0.3923, 0.9988, 0.3437, 0.9086, 0.6977, 0.0298, 0.4813, 0.3440\},
 and
\{0.3923, 0.9988, 0.3437, 0.9086, 0.6977, 0.0298, 0.4813, 0.3438\}.
Hankel matrix recovery approach gives relative error $\frac{\|\hat{\bm{x}}-\bm{x}\|_2}{\|\bm{x}\|_2}=$ $3.899*10^{-9}$,
$5.3741*10^{-9}$,
$3.078*10^{-9}$,
$2.3399*10^{-9}$,
$9.8142*10^{-9}$, and
$8.1374*10^{-9}$,
 respectively.  With the recovered data $\hat{\bm{x}}$, we use the MUSIC algorithm to
identify the frequencies. The recovered frequencies for these 6 cases are respectively:
\{0.0298, 0.3437, 0.3737, 0.3923, 0.4813, 0.6977, 0.9086, 0.9988\},
\{0.0298, 0.3437, 0.3537, 0.3923, 0.4813, 0.6977, 0.9086, 0.9988\},
\{0.0298, 0.3437, 0.3467, 0.3923, 0.4813, 0.6977, 0.9086, 0.9988\},
\{0.0298, 0.3437, 0.3447, 0.3923, 0.4813, 0.6977, 0.9086, 0.9988\},
\{0.0298, 0.3437, 0.3440, 0.3923, 0.4813, 0.6977, 0.9086, 0.9988\},
and \{0.0298, 0.3437, 0.3438, 0.3923, 0.4813, 0.6977, 0.9086, 0.9988\}.  We illustrate these 6 cases in Figure \ref{Fig:NoiselessIdentification-3-100},
\ref{Fig:NoiselessIdentification-1-100} ,
\ref{Fig:NoiselessIdentification-3-1000} ,
\ref{Fig:NoiselessIdentification-1-1000} ,
\ref{Fig:NoiselessIdentification-3-10000} ,
\ref{Fig:NoiselessIdentification-1-10000},
respectively,  where the peaks of the imaging function $J(f)$ are the locations of the recovered frequencies. We can see the Hankel matrix recovery
successfully  recovers the missing data and correctly locate the frequencies.

\begin{figure}
\centering
\includegraphics[width=4in]{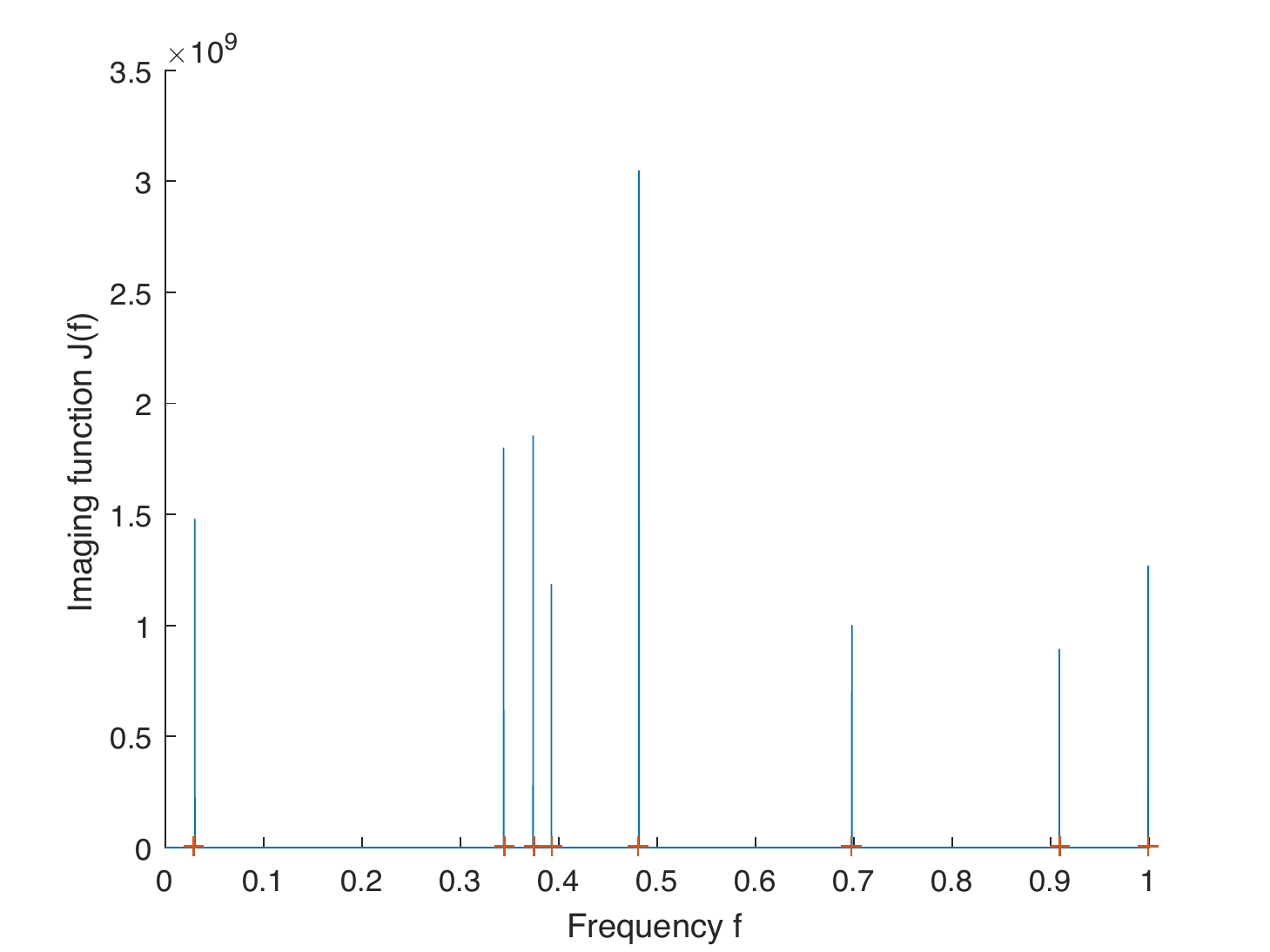}
\caption{Noiseless measurements, Hankel matrix recovery: frequency separation ${\rm dist}\mathcal{F}=0.03$}\label{Fig:NoiselessIdentification-3-100}
\end{figure}

\begin{figure}
\centering
\includegraphics[width=4in]{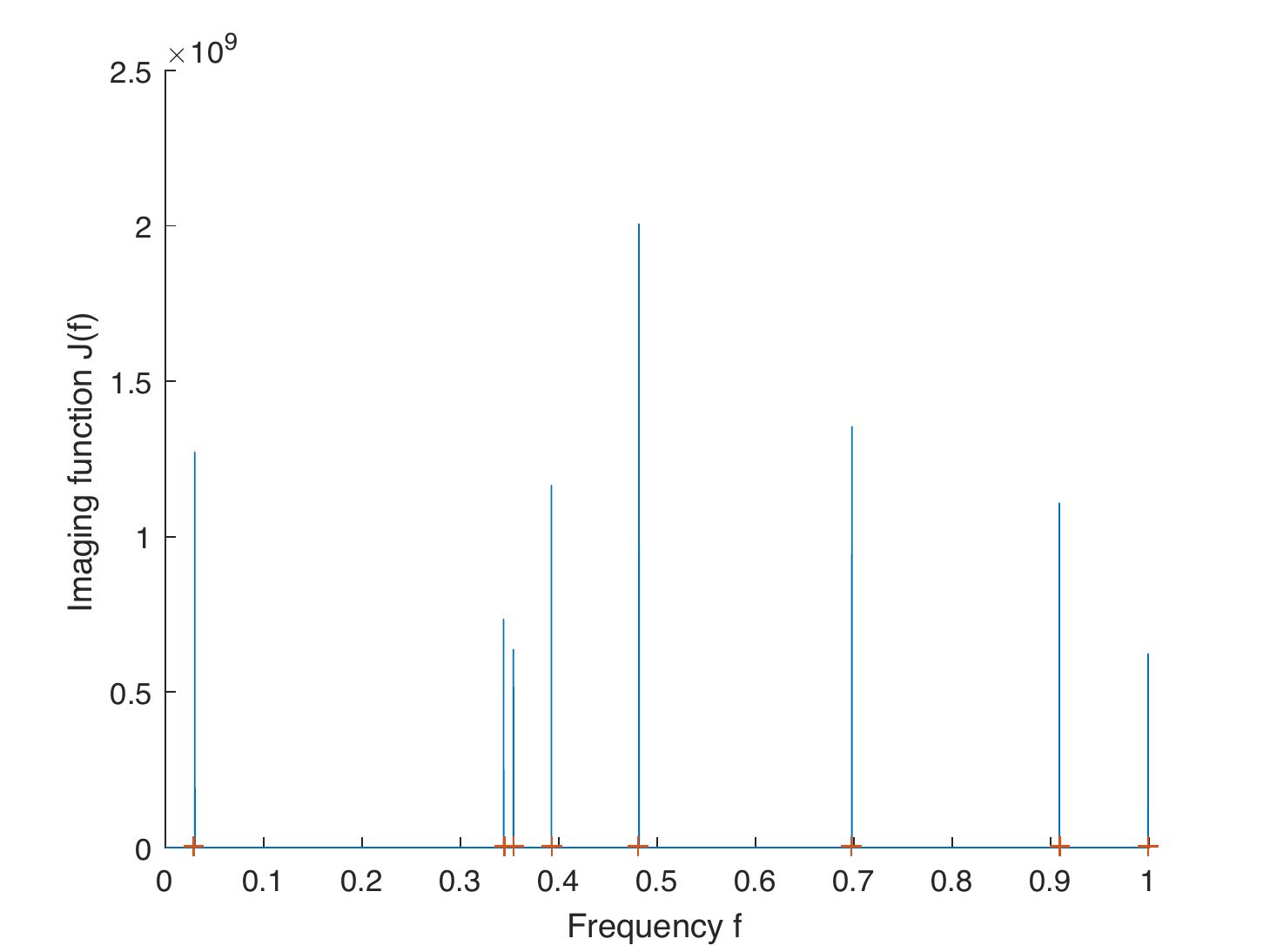}
\caption{Noiseless measurements, Hankel matrix recovery: frequency separation ${\rm dist}\mathcal{F}=0.01$}\label{Fig:NoiselessIdentification-1-100}
\end{figure}

\begin{figure}
\centering
\includegraphics[width=4in]{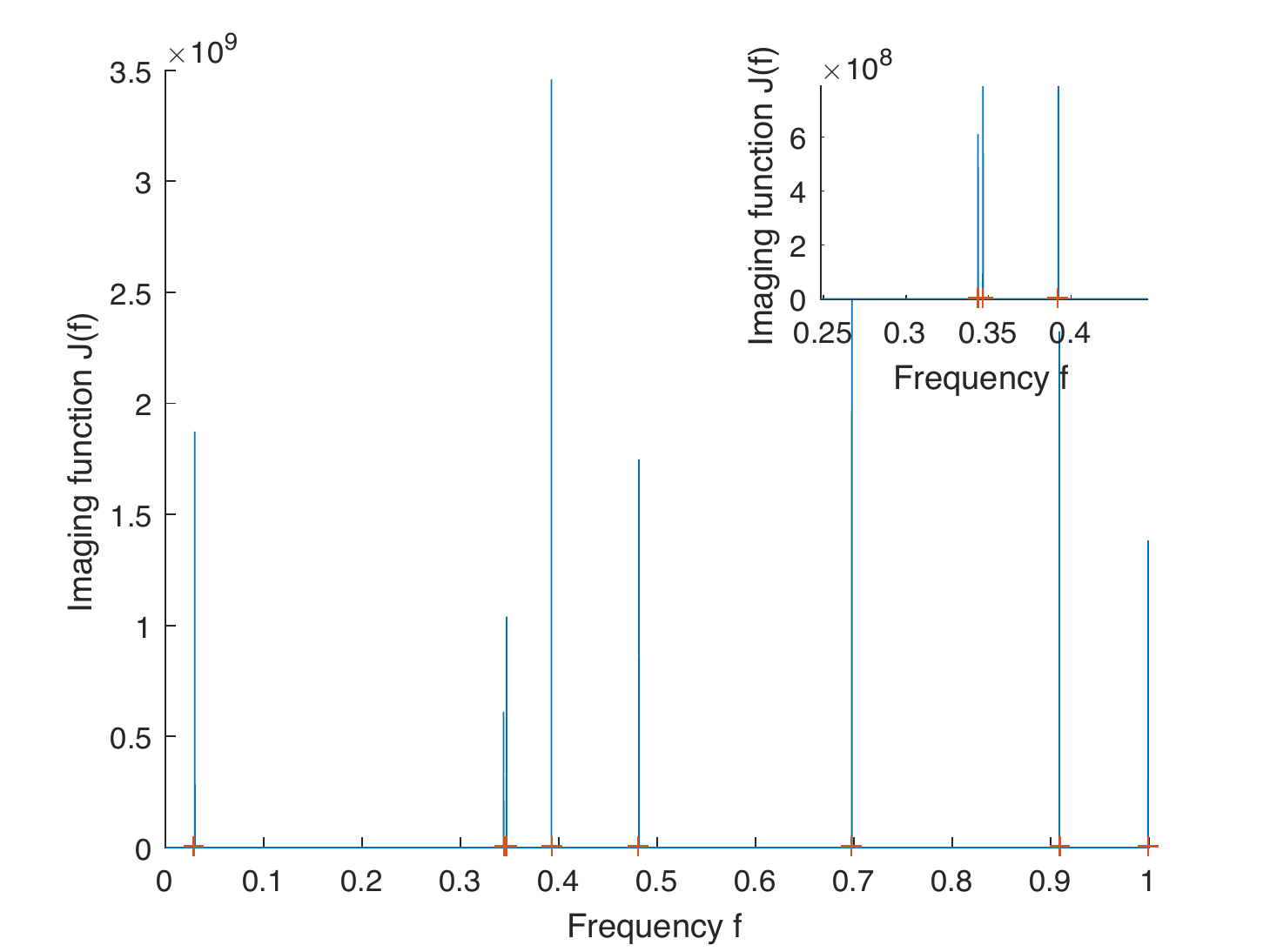}
\caption{Noiseless measurements, Hankel matrix recovery: frequency separation ${\rm dist}\mathcal{F}=0.003$}\label{Fig:NoiselessIdentification-3-1000}
\end{figure}

\begin{figure}
\centering
\includegraphics[width=4in]{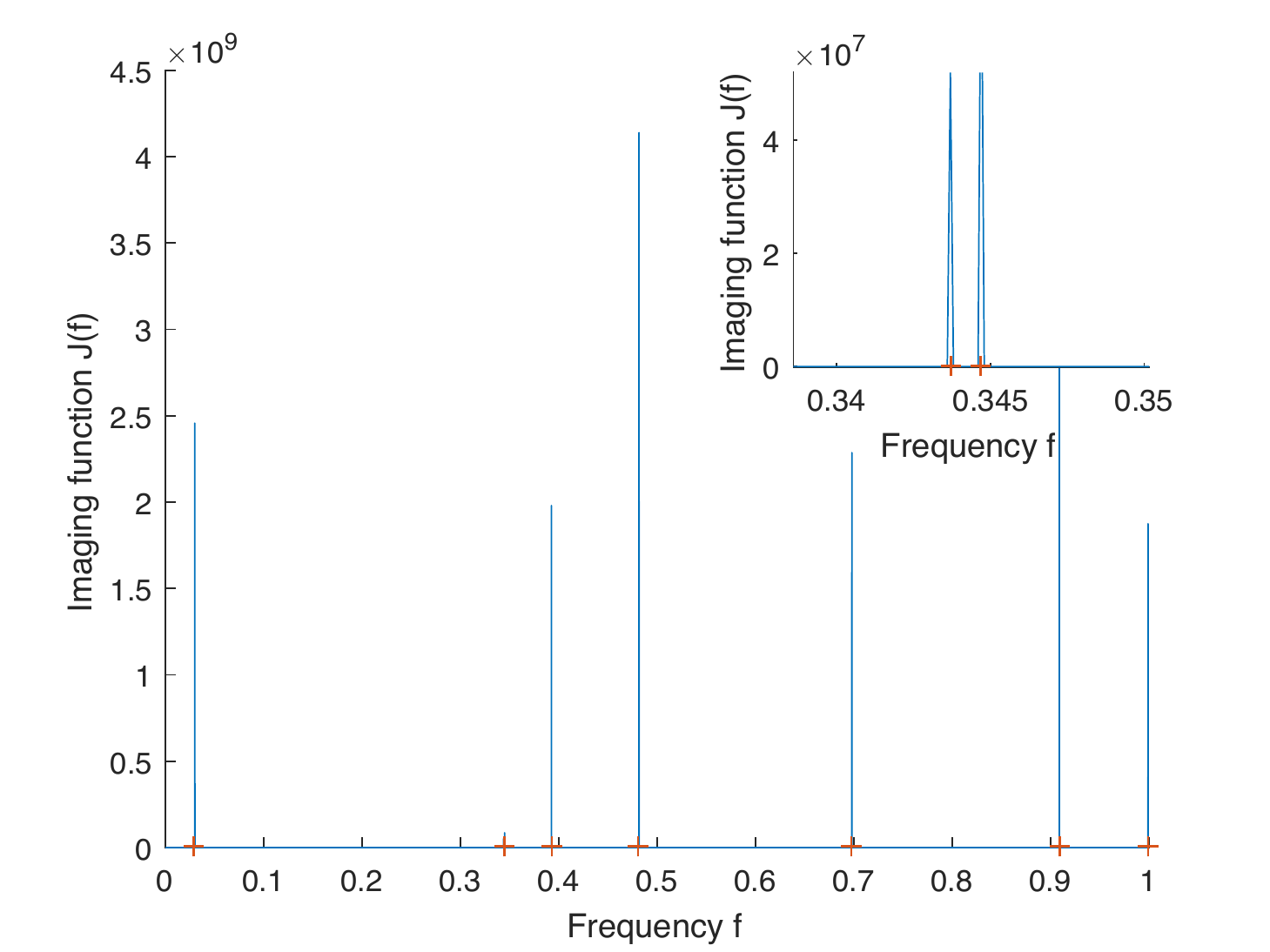}
\caption{Noiseless measurements, Hankel matrix recovery: frequency separation ${\rm dist}\mathcal{F}=0.001$}\label{Fig:NoiselessIdentification-1-1000}
\end{figure}

\begin{figure}
\centering
\includegraphics[width=4in]{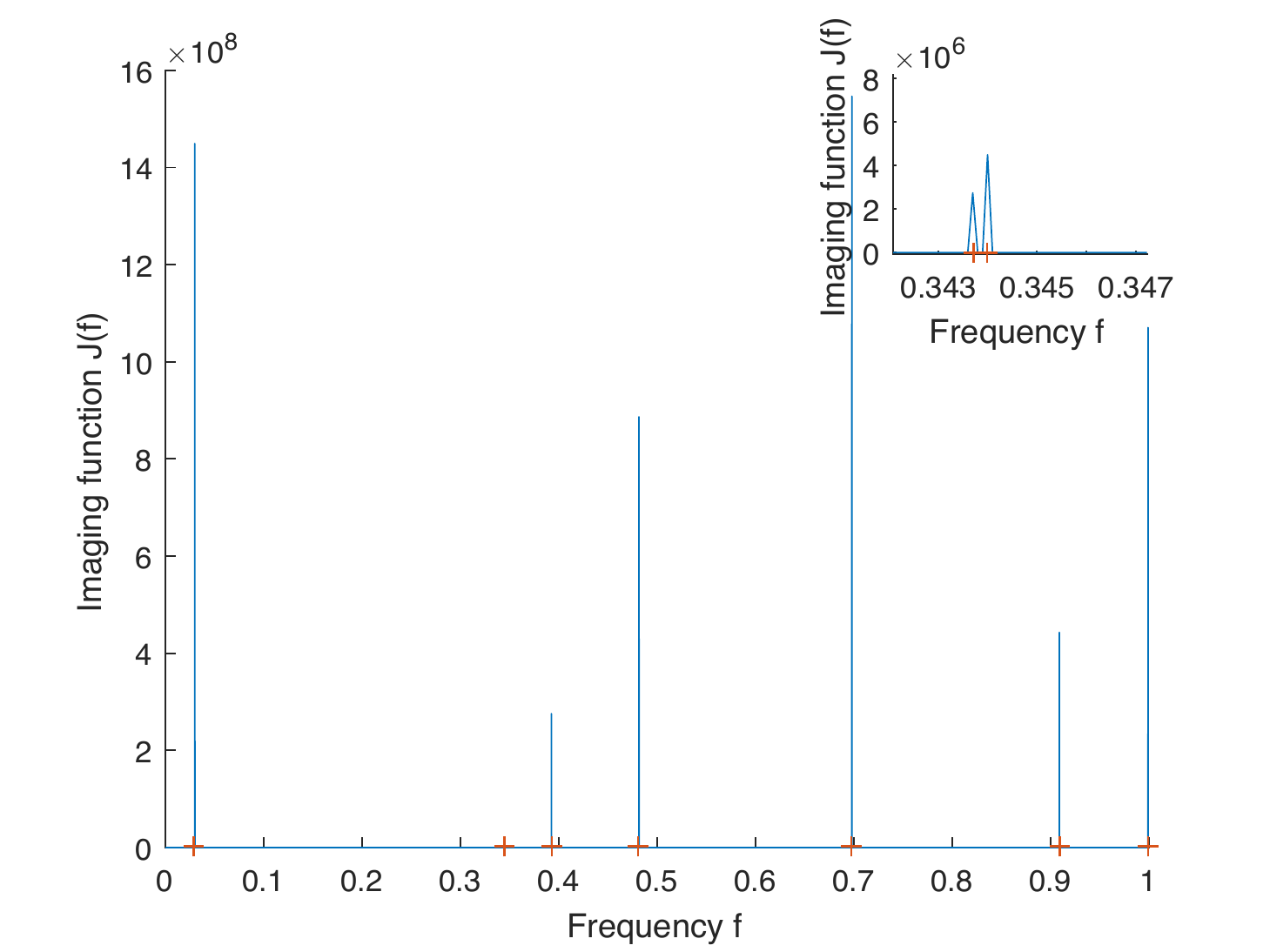}
\caption{Noiseless measurements, Hankel matrix recovery: frequency separation ${\rm dist}\mathcal{F}=0.0003$}\label{Fig:NoiselessIdentification-3-10000}
\end{figure}

\begin{figure}
\centering
\includegraphics[width=4in]{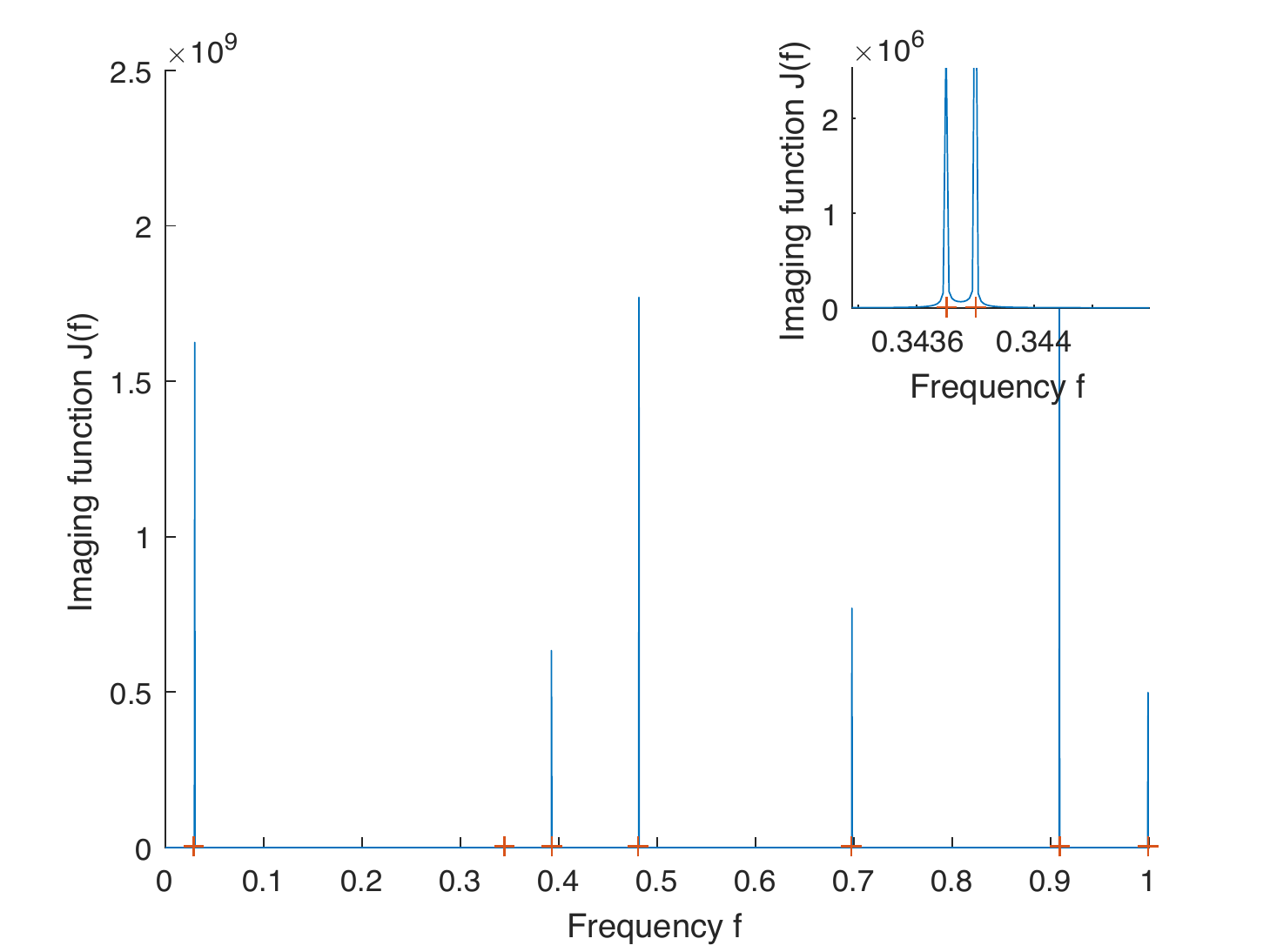}
\caption{Noiseless measurements, Hankel matrix recovery: frequency separation ${\rm dist}\mathcal{F}=0.0001$}\label{Fig:NoiselessIdentification-1-10000}
\end{figure}

We further demonstrate the performance of Hankel matrix recovery under noisy measurements. Again, we consider $N=64, |\mathcal{M}|=65$  and $R=8$.
The magnitude of the coefficients is obtained by $1+10^{0.5 p}$ where $p$ is uniformly randomly generated from $[0,1)$. The realized magnitudes
are 3.9891, 3.6159, 3.7868, 3.9261, 2.1606, 2.4933, 3.2741, and 3.0539 respectively in our experiment.
The phase of the coefficients are obtained by $2\pi s$ where $s$ is uniformly randomly generated from $[0,1)$.
In this example, the realized phases are 5.2378, 1.3855, 2.0064, 1.3784, 0.1762, 4.2739, 1.7979, and 0.1935 respectively.
The first $R-1=7$ frequencies of exponentials are generated uniformly randomly from $[0,1)$, and then the last frequency is added with frequency separation $5*10^{-3}$
from the third frequency. In this example, the ground truth frequencies are 0.8822, 0.0018, 0.6802, 0.2825, 0.8214, 0.2941, 0.3901, and 0.6852.

We generate the noise vector $\bm{v}\in\mathbb{C}^{2N-1}$ as $s_1+\imath s_2$ where each element of $s_1\in\mathbb{R}^{2N-1}$ and $s_2\in\mathbb{R}^{2N-1}$ is independently generated from the zero mean standard Gaussian distribution. And we further normalize $\bm{v}$ such that $\|\bm{v}\|_2=0.1$. In this noisy case, we solve the problem
\begin{align}\label{Defn:NoisyNuclearNormMinimization}
& \min_{{\bm{x}}} \|\bm{H}(\bm{{x}})\|_* \nonumber\\
& {\rm subject to\ }\mathcal{A}(\bm{{x}})=\bm{b},
\end{align}
to get the recovered signal $\hat{\bm{x}}$, and a relative error $\frac{\|\hat{\bm{x}}-\bm{x}^{}\|_2}{\|\bm{x}^{}\|_2}=1.2*10^{-3}$ is achieved, and the location of recovered
frequencies are 0.0018, 0.2825, 0.2941, 0.3901, 0.6802, 0.6852, 0.8214, and 0.8822. We illustrate the locations of the recovered frequencies in
Figure \ref{Fig:NoisyIdentification}. We can see that the Hankel matrix recovery can also provide robust data and frequency recovery under noisy measurements.

\begin{figure}
\centering
\includegraphics[width=4in]{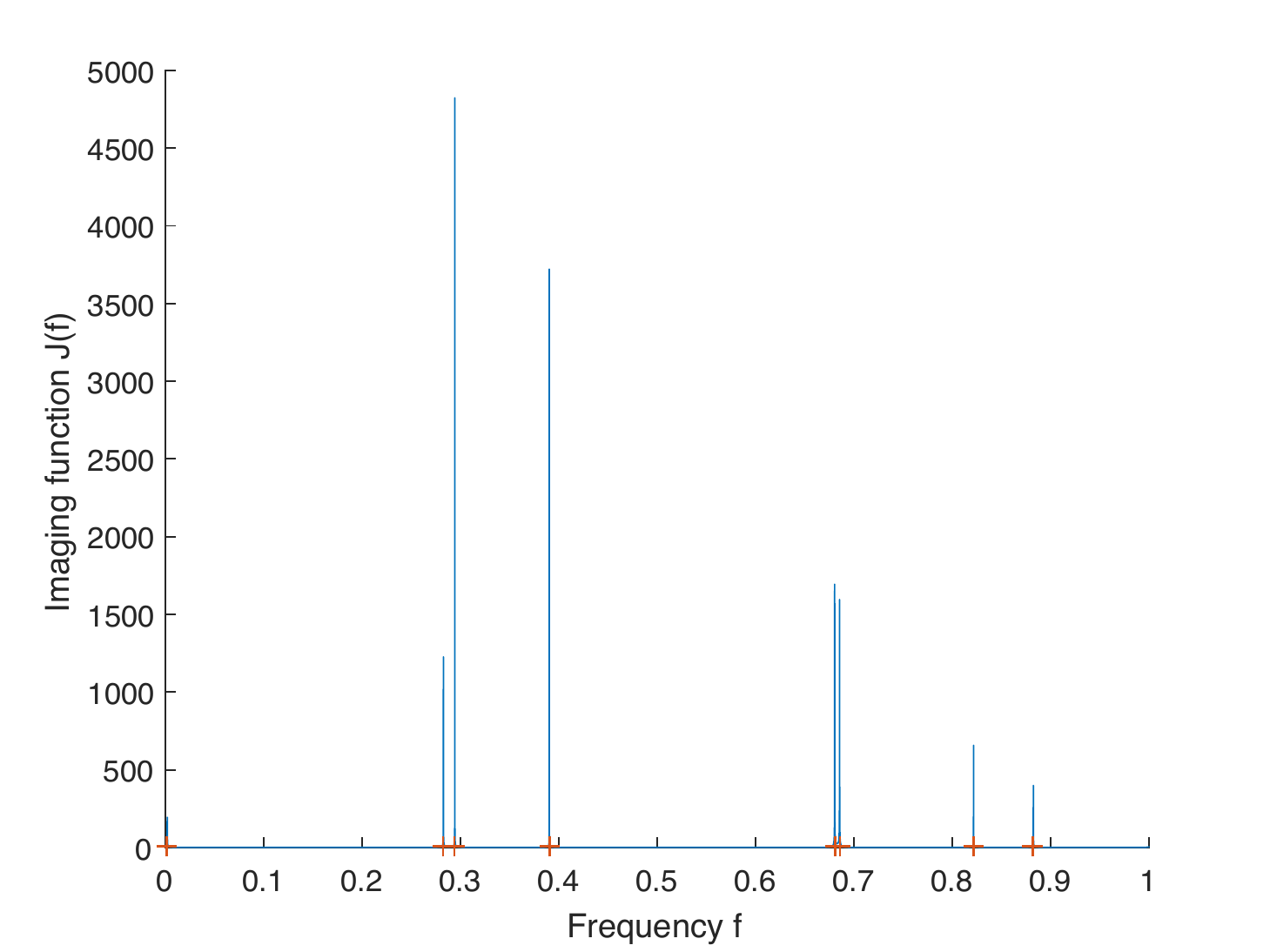}
\caption{Noisy measurements, Hankel matrix recovery}\label{Fig:NoisyIdentification}
\end{figure}

\section{Conclusions and future works}
\label{sec:OpenProblem}
In this paper, we have shown, theoretically and numerically, that the Hankel matrix recovery can be robust to frequency separations in super-resolving the superposition of complex exponentials.
By comparison, the TV minimization and atomic norm minimization require the underlying frequencies to be well-separated to super-resolve the superposition of complex exponentials even when
the measurements are noiseless.  In particular, we show that Hankel matrix recovery approach can super-resolve
the $R$ frequencies, regardless of how close the frequencies are to each other, from compressed non-uniform measurements. We presented a new concept of orthonormal atomic norm minimization (OANM), and showed that this concept helps
us understand the success of Hankel matrix recovery in separation-free super-resolution. We further show that,  in traditional atomic norm minimization,  the underlying parameters \emph{must} be well separated so
that the signal can be successfully recovered if the atoms are changing continuously with respect to the continuously-valued parameters; however, for OANM,
it is possible the atomic norm minimization are successful even though the original atoms can be arbitrarily close.
As a byproduct of this research, we also provide one matrix-theoretic inequality of nuclear norm, and give its proof from the theory of compressed sensing. In future works, it would be interesting to extend the results
in this paper to super-resolving the superposition of complex exponentials with higher dimensional frequency parameters \cite{xu_precise_2014,chen_robust_2014,2DChiChen}

\section{Appendix}\label{secProof}

%


\subsection{Proof of Lemma \ref{lem:WeakNullSpaceCondition}}
\label{ProofofLem:WeakNullSpaceCondition}
Any solution to the nuclear norm minimization must be $\bm{X}_0+\bm{Q}$, where $\bm{Q}$ is from the null space of $\mathcal{A}$. Suppose that the singular value decomposition of ${\bm{X}}_{0}$ is given by
$$
{\bm{X}}_{0}={\bm{U}}{\Lambda} {\bm{V}}^*
$$
where ${\bm{U}} \in \mathbb{C}^{M \times R}$, $\bm{\Lambda} \in \mathbb{C}^{R \times R}$, and ${\bm{V}}\in \mathbb{C}^{N \times R}$.

From Lemma \ref{lemma:complexdifferential}, we know the subdifferential of $\|\cdot\|_{*}$ at the point $\bm{X}_0$ is given by
$$
\{
{\bm{Z}}~|
{\bm{Z}}={\bm{U}}{\bm{V}}^{*}+\bar{{\bm{U}}}\bm{M} {\bar{{\bm{V}}}}^{*}, \mbox{~where~} \|\bm{M}\|_2\leq 1, {\bm{U}}^* {\bar{{\bm{U}}}}={0},{\bar{{\bm{U}}}}^* {\bar{{\bm{U}}}}={I}, {\bm{V}}^* {\bar{{\bm{V}}}}={0},{\bar{{\bm{V}}}}^* {\bar{{\bm{V}}}}={I}\}.
$$
Then from the property of subdifferential of a convex function, for any $\bm{Z}=\bm{U}\bm{V}+\bar{\bm{U}}\bm{M} \bar{\bm{V}}^*$ (with $\|\bm{M}\|_2\leq 1$) from the subdifferetial of $\|\cdot \|_{*}$ at the point $\bm{X}_0$, we have
\begin{align}
&\|\bm{X}_0+\bm{Q}\|_{*}\\
&\geq \|\bm{X}_0\|_{*}+\langle \bm{Z}, \bm{Q} \rangle\\
&=\|\bm{X}_0\|_{*}+\text{Re}\left({\rm Tr}\left(\bm{U}^*\bm{Q} \bm{V}\right)\right)+\text{Re}\left({\rm Tr}\left(\bar{\bm{U}}^*\bm{Q} \bar{\bm{V}}\bm{M}^*\right)\right)\\
&\geq \|\bm{X}_0\|_{*}-|{\rm Tr}\left(\bm{U}^*\bm{Q} \bm{V}\right)|+\text{Re}\left({\rm Tr}\left(\bar{\bm{U}}^*\bm{Q} \bar{\bm{V}}\bm{M}^*\right)\right).
\end{align}
Because we can take any $\bm{M}$ with $\|\bm{M}\|_2\leq 1$, when $\bm{Q} \neq 0$, we have
\begin{align}
\|\bm{X}_0+\bm{Q}\|_{*} &\geq \|\bm{X}_0\|_{*}-|{\rm Tr}\left(\bm{U}^*\bm{Q} \bm{V}\right)|+ \|\bar{\bm{U}}^*\bm{Q} \bar{\bm{V}}\|_{*}\\
&>\|\bm{X}_0\|_{*},
\end{align}
where the last step is because the dual norm of the spectral norm is the nuclear norm (which also holds for complex-numbered matrices). Thus $\bm{X}_{0}$ is the unique solution to the nuclear norm minimization.
%
%
%
%
%
%
%
%
%

\subsection{Strict inequality is not always necessary in the null space condition for successful recovery via nuclear norm minimization}
\label{OnlySufficient}
In this subsection, we show that the null space condition for Lemma \ref{lem:WeakNullSpaceCondition} is not a necessary condition for the success of nuclear norm minimization, which is in contrast to the null space condition
in Lemma 13 of \cite{oymak_new_2010}. Specifically, we show the following claim:
\begin{claim}\label{lem:notnecessary}
Let $\bm{X}_0$ be any $M \times N$ matrix of rank $R$ in $\mathbb{C}^{M \times N}$, and we observe it through a linear mapping $\mathcal{A}(\bm{X}_0)=\bm{b}$. We also assume that $\bm{X}_0$ has a singular value decomposition (SVD)
 $\bm{X}_0= \bm{U} \bm{\Sigma} \bm{V}^*$, where $\bm{U} \in \mathbb{C}^{M \times R}$, $\bm{V}\in \mathbb{C}^{N \times R}$, and $\bm{\Sigma} \in \mathbb{C}^{R \times R}$ is a diagonal matrix.
Consider the nuclear norm minimization (\ref{eq:newNNM3})
\begin{equation}\label{eq:newNNM3}
\min_{\bm{X}}\|\bm{X}\|_*,\qquad\mbox{subject to}\quad \mathcal{A}(\bm{X})=\mathcal{A}(\bm{X}_0),
\end{equation}
Then for the nuclear norm minimization to correctly and uniquely recovers $\bm{X}_0$, it is not always necessary
that ``for every nonzero element $\bm{Q} \in \mathcal{N}(\mathcal{A})$,
\begin{align}
-|{\rm Tr}(\bm{U}^*\bm{Q}\bm{V})|+\|\bar{\bm{U}}^*\bm{Q}\bar{\bm{V}}\|_*> 0,
\end{align}
\noindent where $\bar{\bm{U}}$ and $\bar{\bm{V}}$ are such that $[\bm{U}\ \bar{\bm{U}}]$ and $[\bm{V}\ \bar{\bm{V}}]$ are unitary.
''
\end{claim}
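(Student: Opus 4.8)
The plan is to establish the claim by exhibiting an explicit counterexample: a rank-$R$ ground-truth matrix $\bm{X}_0$, a sampling operator $\mathcal{A}$, and a null-space direction $\bm{Q}$ for which the nuclear norm minimization (\ref{eq:newNNM3}) recovers $\bm{X}_0$ uniquely, yet the displayed null-space condition fails (it holds only with equality). Since the claim only asserts that the condition is \emph{not always} necessary, a single such instance suffices.

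The guiding observation is structural. Writing any $\bm{Q}\in\mathcal{N}(\mathcal{A})$ in the bases $[\bm{U}\ \bar{\bm{U}}]$ and $[\bm{V}\ \bar{\bm{V}}]$ splits it into four blocks $\bm{Q}_{11}=\bm{U}^*\bm{Q}\bm{V}$, $\bm{Q}_{12}=\bm{U}^*\bm{Q}\bar{\bm{V}}$, $\bm{Q}_{21}=\bar{\bm{U}}^*\bm{Q}\bm{V}$, and $\bm{Q}_{22}=\bar{\bm{U}}^*\bm{Q}\bar{\bm{V}}$. The quantity tested by the null-space condition, $-|{\rm Tr}(\bm{Q}_{11})|+\|\bm{Q}_{22}\|_*$, depends only on the diagonal blocks $\bm{Q}_{11}$ and $\bm{Q}_{22}$; the off-diagonal blocks $\bm{Q}_{12}$ and $\bm{Q}_{21}$ do not enter this first-order test at all. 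Nevertheless, those off-diagonal blocks contribute \emph{second-order} curvature to $t\mapsto\|\bm{X}_0+t\bm{Q}\|_*$. Hence a perturbation living purely in an off-diagonal block leaves the first-order condition at equality while still strictly increasing the nuclear norm, and that is exactly the gap I would exploit.

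Concretely, I would take $M=N=2$, $R=1$, and $\bm{X}_0=\left[\begin{smallmatrix}1&0\\0&0\end{smallmatrix}\right]$, so that $\bm{U}=\bm{V}=(1,0)^T$ and $\bar{\bm{U}}=\bar{\bm{V}}=(0,1)^T$. I would pick $\bm{Q}=\left[\begin{smallmatrix}0&1\\0&0\end{smallmatrix}\right]$ and let $\mathcal{A}$ read off the $(1,1)$, $(2,1)$, and $(2,2)$ entries, so that $\mathcal{N}(\mathcal{A})=\mathrm{span}\{\bm{Q}\}$. Then $\bm{Q}_{11}={\rm Tr}(\bm{U}^*\bm{Q}\bm{V})=0$ and $\bm{Q}_{22}=\bar{\bm{U}}^*\bm{Q}\bar{\bm{V}}=0$, so $-|{\rm Tr}(\bm{Q}_{11})|+\|\bm{Q}_{22}\|_*=0$ and the strict null-space inequality fails. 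On the other hand, the only feasible matrices are $\bm{X}_0+t\bm{Q}=\left[\begin{smallmatrix}1&t\\0&0\end{smallmatrix}\right]$, each of rank one with single nonzero singular value $\sqrt{1+t^2}$, so $\|\bm{X}_0+t\bm{Q}\|_*=\sqrt{1+t^2}>1=\|\bm{X}_0\|_*$ for every $t\neq 0$; thus (\ref{eq:newNNM3}) recovers $\bm{X}_0$ uniquely.

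There is no serious technical obstacle here, so the only care needed is conceptual. One must confirm that $\bm{Q}$ genuinely produces no first-order growth (the directional derivative of $\|\cdot\|_*$ at $\bm{X}_0$ along $\pm\bm{Q}$ vanishes, matching the equality in the test), while the \emph{quadratic} growth $\sqrt{1+t^2}=1+t^2/2+O(t^4)$ nonetheless forces a strict increase. This is precisely why the condition is sufficient but not necessary: the null-space condition is a purely first-order certificate, blind to the positive curvature that off-diagonal perturbations supply. I would close by noting that the same mechanism persists for larger $M,N,R$: any nonzero $\bm{Q}$ supported on the off-diagonal blocks with $\bm{Q}_{11}=\bm{Q}_{22}=0$ yields an analogous instance.
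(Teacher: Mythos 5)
Your proof is correct, and it takes the same basic route as the paper — a $2\times 2$ example with a one-dimensional null space for which the strict inequality degenerates to equality yet the minimizer is still unique — but your specific counterexample is different and, in fact, cleaner. The paper takes $\bm{X}_0=\left[\begin{smallmatrix}-1&0\\0&0\end{smallmatrix}\right]$ and $\bm{Q}=\left[\begin{smallmatrix}1&1\\1&1\end{smallmatrix}\right]$, so that both terms of the certificate are nonzero and happen to cancel ($|{\rm Tr}(\bm{U}^*\bm{Q}\bm{V})|=\|\bar{\bm{U}}^*\bm{Q}\bar{\bm{V}}\|_*=1$); verifying $\|\bm{X}_0+t\bm{Q}\|_*>1$ then requires computing the characteristic polynomial of $(\bm{X}_0+t\bm{Q})(\bm{X}_0+t\bm{Q})^*$, some algebra to get the piecewise formula for $\sigma_1+\sigma_2$, and a separate second computation for complex $t=-ae^{-\imath\theta}$. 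Your $\bm{Q}=\left[\begin{smallmatrix}0&1\\0&0\end{smallmatrix}\right]$ is supported purely on the off-diagonal block, so both certificate terms vanish identically, $\bm{X}_0+t\bm{Q}$ stays rank one, and $\|\bm{X}_0+t\bm{Q}\|_*=\sqrt{1+|t|^2}$ handles real and complex $t$ in one line (do write $|t|^2$ rather than $t^2$ to cover the complex case). Your structural explanation — that the null-space test sees only $\bm{Q}_{11}$ and $\bm{Q}_{22}$ while off-diagonal perturbations contribute only second-order curvature, which is exactly what a first-order subgradient certificate cannot detect — is a genuine conceptual addition the paper does not offer, and it correctly identifies why such counterexamples exist in every dimension.
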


  For simple presentation, we first use an example in the field of real numbers (where every element in the null space is a real-numbered matrix
  ) to illustrate the calculations, and prove this claim. Building on this real-numbered example, we further give an example in the field of complex numbers to prove this claim.

Suppose \begin{align}
\bm{X}_0=
\begin{array}{l}
\left[\begin{array}{*{20}{c}}
-1 &0\\
0 &0
\end{array}\right]
\end{array},~
\bm{Q}=
\begin{array}{l}
\left[\begin{array}{*{20}{c}}
1 &1\\
1 &1
\end{array}\right].
\end{array}
\end{align}
We also assume that the linear mapping $\mathcal{A}$ is such that $\bm{Q}$ is the only nonzero element in the null space of $\mathcal{A}$.
Then the solution to (\ref{eq:newNNM3}) must be of the form $\bm{A}+t\bm{Q}$, where $t$ is any real number.

\begin{align}
\bm{U}=
\begin{array}{l}
\left[\begin{array}{*{20}{c}}
1\\ 0
\end{array}\right]
\end{array}
,
\bm{V}=
\begin{array}{l}
\left[\begin{array}{*{20}{c}}
-1\\ 0
\end{array}\right]
\end{array}
,
\bar{\bm{U}}=
\begin{array}{l}
\left[\begin{array}{*{20}{c}}
0\\ 1
\end{array}\right]
\end{array}
,
\bar{\bm{V}}=
\begin{array}{l}
\left[\begin{array}{*{20}{c}}
0\\ 1
\end{array}\right]
\end{array}.
\end{align}
One can check that, for this example,
\begin{align}
-|{\rm Tr}(\bm{U}^*\bm{Q}\bm{V})|+\|\bar{\bm{U}}^*\bm{Q}\bar{\bm{V}}\|_*= 0.
\end{align}
However, we will show that
\begin{align}
\left\vert \left\vert \bm{X}_0+t\bm{Q} \right\vert\right\vert_*>1, \forall t\neq0,
\end{align}
implying that $\bm{X}_0$ is the unique solution to (\ref{eq:newNNM3}).
In fact, we calculate
\begin{align}
\bm{B}
=(\bm{X}_0+t\bm{Q})(\bm{X}_0+t\bm{Q})^T
=
\begin{array}{l}
\left[\begin{array}{*{20}{c}}
(-1+t)^2+t^2 & (-1+t)t+t^2\\
(-1+t)t+t^2 & 2t^2
\end{array}\right]
\end{array},
\end{align}
and then the singular values of $\bm{X}_0+t\bm{Q}$ are the square roots of the eigenvalues of $\bm{B}$. The eigenvalues of $\bm{B}$ can be obtained by solving for $\lambda$ using
\begin{align}
{\rm det}(\bm{B}-\lambda \bm{I})=0.
\end{align}
This results in
\begin{align}
\lambda=\frac{a(t)+b(t)\pm\sqrt{(a(t)-b(t))^2+4c(t)}}{2},
\end{align}
where
\begin{align}
a(t)=(-1+t)^2+t^2,
b(t)=2t^2,
c(t)=((-1+t)t+t^2)^2.
\end{align}
Thus the two eigenvalues of $\bm{B}$ are
\begin{align}
&\lambda_1=\frac{4t^2-2t+1+\sqrt{16t^4-16t^3+8t^2-4t+1}}{2},\\
&\lambda_2=\frac{4t^2-2t+1-\sqrt{16t^4-16t^3+8t^2-4t+1}}{2},
\end{align}
and the singular values of $\bm{X}_0+t\bm{Q}$ are
\begin{align}
&\sigma_1=\sqrt{\lambda_1}=
\sqrt{\frac{4t^2-2t+1+\sqrt{16t^4-16t^3+8t^2-4t+1}}{2}},\\
&\sigma_2=\sqrt{\lambda_2}=
\sqrt{\frac{4t^2-2t+1-\sqrt{16t^4-16t^3+8t^2-4t+1}}{2}}.
\end{align}
After some algebra, we get
\begin{align}
\|\bm{X}_0+t\bm{Q}\|_*
=\sigma_1+\sigma_2
&=
\begin{cases}
\sqrt{4t^2+1}, t\geq0,\\
1-2t, t<0.
\end{cases}
\end{align}
This means $\|\bm{X}_0+t\bm{Q}\|_*$  is always greater than $1$ for $t\neq0$, showing $\bm{X}_0$ is the unique solution to the nuclear norm minimization. But $-|{\rm Tr}(\bm{U}^*\bm{Q}\bm{V})|+\|\bar{\bm{U}}^*\bm{Q}\bar{\bm{V}}\|_*\ngtr 0.$

Now we give an example in the field of complex examples, where the null space of $\mathcal{A}$ contains complex-numbered matrices. Suppose that we have the same
matrices $\bm{X}_0$ and $\bm{Q}$. Then the solution to (\ref{eq:newNNM3}) must be of the form $\bm{A}+t\bm{Q}$, where $t$ is any complex number. Without loss of generality, let us
take $t=-ae^{-\imath\theta}$, where $a\geq 0$ is a nonnegative real number, and $\theta$ is any real number between 0 and $2\pi$. We further denote $
\bm{B}
=(\bm{X}_0+t\bm{Q})(\bm{X}_0+t\bm{Q})^*$. Then by calculating the eigenvalues of $\bm{B}$, we obtain that
\begin{align}
&\|\bm{X}_0+t\bm{Q}\|_*
=\sigma_1+\sigma_2=\sqrt{4a^2+2a(1+\cos(\theta))+1}\\
&=\sqrt{4 \left (a+\frac{1+\cos(\theta)}{4}\right)^2+1-\frac{(1+\cos(\theta))^2}{4}}
\end{align}
where $t=-ae^{-\imath\theta}$ with $a\geq 0$ and $\theta\in [0, 2\pi)$. So $\|\bm{X}_0+t\bm{Q}\|_*>1$, if $a\neq0$ (namely $t\neq0$), implying that the nuclear norm minimization
can uniquely recovers $\bm{X}_{0}$ even though  $-|{\rm Tr}(\bm{U}^*\bm{Q}\bm{V})|+\|\bar{\bm{U}}^*\bm{Q}\bar{\bm{V}}\|_*\ngtr 0.$

\subsection{Proof of Lemma \ref{lemma:complexdifferential}}
\label{proofofcomplexsubdifferential}
\begin{proof}
We write
\begin{equation}\label{eq:SVDcomplexGy}
\bm{U} = \bm{\Theta}_1+\imath\bm{\Theta}_2,~~\bm{V}=\bm{\Xi}_1+\imath\bm{\Xi}_2,
\end{equation}
where $\bm{\Theta}_1\in\mathbb{R}^{M\times R},\bm{\Theta}_2\in\mathbb{R}^{M\times R},\bm{\Xi}_1\in\mathbb{R}^{N\times R}$, and $\bm{\Xi}_2\in\mathbb{R}^{N\times R}$. Then, by direct calculation,
\begin{equation}\label{eq:ThetaXi}
\bm{\Theta}\equiv\left[\begin{matrix}
\bm{\Theta}_1 & -\bm{\Theta}_2\cr \bm{\Theta}_2 & \bm{\Theta}_1
\end{matrix}\right]\in\mathbb{R}^{2M\times 2R}, \qquad
\bm{\Xi}\equiv\left[\begin{matrix}
\bm{\Xi}_1 & -\bm{\Xi}_2\cr \bm{\Xi}_2 & \bm{\Xi}_1
\end{matrix}\right]\in\mathbb{R}^{2N\times 2R}
\end{equation}
satisfy $\bm{\Theta}^T\bm{\Theta}=\bm{\Xi}^T\bm{\Xi}=\bm{I}$. Moreover, if we define
$\hat{\bm{\Omega}}=\left[\begin{matrix}
{\rm Re}(\bm{X}) & -{\rm Im}(\bm{X})\cr {\rm Im}(\bm{X}) & {\rm Re}(\bm{X})
\end{matrix}\right]$, then
\begin{equation}\label{eq:SVDrealOmega}
\hat{\bm{\Omega}}=
\bm{\Theta}\left[\begin{matrix}
\bm{\Sigma} & \cr & \bm{\Sigma}
\end{matrix}\right]
\bm{\Xi}^T
\end{equation}
is a singular value decomposition of the real-numbered matrix $\hat{\bm{\Omega}}$, and the singular values $\hat{\bm{\Omega}}$ are those of $\bm{X}$, each repeated twice. Therefore,
\begin{equation}\label{eq:Freal}
\mathcal{F}\left(\left[\begin{matrix}{\rm Re}(\bm{X})\cr {\rm Im}(\bm{X})\end{matrix}\right]\right)
=\|\bm{\Sigma}\|_*=\frac12\|\hat{\bm{\Omega}}\|_*.
\end{equation}
Define a linear operator $\mathcal{E}~:~\mathbb{R}^{2M \times N}\mapsto\mathbb{R}^{2M\times 2N}$ by
$$
\mathcal{E}\left(\left[\begin{matrix}\bm{\alpha}\cr\bm{\beta}\end{matrix}\right]\right)
=\left[\begin{matrix}\bm{\alpha}&-\bm{\beta}\cr
\bm{\beta}&\bm{\alpha}\end{matrix}\right],
\quad\mbox{with}\quad\bm{\alpha},\bm{\beta}\in\mathbb{R}^{M \times N}.
$$
By \eqref{eq:Freal} and the definition of $\hat{\bm{\Omega}}$, we obtain $$\mathcal{F}\left(\left[\begin{matrix}{\rm Re}(\bm{X})\cr {\rm Im}(\bm{X})\end{matrix}\right]\right)=
\frac12\left|\left|\mathcal{E}  \left(\left[\begin{matrix}{\rm Re}(\bm{X})\cr {\rm Im}(\bm{X})\end{matrix}\right]\right)               \right|\right|_*$$.

From convex analysis and $\hat{\Omega}=\mathcal{E} \left(\left[\begin{matrix}{\rm Re}(\bm{X})\cr {\rm Im}(\bm{X})\end{matrix}\right]\right)  $, the subdifferential of $\mathcal{F}$ is given by
\begin{equation}\label{eq:subdiffF}
\partial\mathcal{F}\left(\left[\begin{matrix}{\rm Re}(\bm{X})\cr {\rm Im}(\bm{X})\end{matrix}\right]\right)=\frac12\mathcal{E}^*\partial\left|\left|\hat{\bm{\Omega}}\right|\right|_*,
\end{equation}
where $\mathcal{E}^*$ is the adjoint of the linear operator $\mathcal{E}$.

On the one hand, the adjoint $\mathcal{E}^*$ is given by, for any
$\bm{\Delta}=\left[\begin{matrix}\bm{\Delta}_{11}&\bm{\Delta}_{12}\cr\bm{\Delta}_{21}&\bm{\Delta}_{22}\end{matrix}\right]\in\mathbb{R}^{2M\times 2N}$
with each block in $\mathbb{R}^{M\times N}$,
\begin{equation}\label{eq:adjE}
\mathcal{E}^*\bm{\Delta}=\left[\begin{matrix}\bm{\Delta}_{11}+\bm{\Delta}_{22}\cr \bm{\Delta}_{21}-\bm{\Delta}_{12}\end{matrix}\right].
\end{equation}
On the other hand, since \eqref{eq:SVDrealOmega} provides a singular value decomposition of $\hat{\bm{\Omega}}$,
\begin{equation}\label{eq:subdiffnuc}
\partial\|\hat{\bm{\Omega}}\|_*=\left\{\bm{\Theta}\bm{\Xi}^T+\bm{\Delta}~|~
\bm{\Theta}^T\bm{\Delta}=\bm{0},~\bm{\Delta}\bm{\Xi}=\bm{0},~\|\bm{\Delta}\|_2\leq1\right\}.
\end{equation}
Combining \eqref{eq:subdiffF}, \eqref{eq:adjE}, \eqref{eq:subdiffnuc}, and  \eqref{eq:ThetaXi} yields the subdifferential of $\mathcal{F}(\cdot)$ at
$  \left[\begin{matrix}{\rm Re}(\bm{X})\cr {\rm Im}(\bm{X})\end{matrix}\right]$:
\begin{align}
\label{sub_real_expression}
&~~\partial\mathcal{F}\left(\left[\begin{matrix}{\rm Re}(\bm{X})\cr {\rm Im}(\bm{X})\end{matrix}\right]\right)\\
&=\left\{\left[\begin{matrix}
\left(\bm{\Theta}_1\bm{\Xi}_1^T+\bm{\Theta}_2\bm{\Xi}_2^T+\frac{\bm{\Delta}_{11}+\bm{\Delta}_{22}}{2}\right)\cr
\left(\bm{\Theta}_2\bm{\Xi}_1^T-\bm{\Theta}_1\bm{\Xi}_2^T+\frac{\bm{\Delta}_{21}-\bm{\Delta}_{12}}{2}\right)
\end{matrix}\right]~\Big|~
\bm{\Delta}=\left[\begin{matrix}\bm{\Delta}_{11}&\bm{\Delta}_{12}\cr\bm{\Delta}_{21}&\bm{\Delta}_{22}\end{matrix}\right],~\bm{\Theta}^T\bm{\Delta}=\bm{0},~\bm{\Delta}\bm{\Xi}=\bm{0},~\|\bm{\Delta}\|_2\leq1
\right\}.
\end{align}

We are now ready to show (\ref{eq:GsubsetsubdiffF}).

Firstly, we show that any element in $\mathfrak{H}\equiv \left\{\left[\begin{matrix}\bm{\alpha}\cr\bm{\beta}\end{matrix}\right]~\Big|~
\bm{\alpha}+\imath\bm{\beta}\in\mathfrak{S}\right\}$ must also be in $\partial \mathcal{F}\left(\left[\begin{matrix}{\rm Re}(\bm{X})\cr {\rm Im}(\bm{X})\end{matrix}\right]\right)$, namely (\ref{sub_real_expression}).
In fact, for any $\bm{W}=\bm{\Delta}_1+\imath\bm{\Delta}_2$ satisfying $\bm{U}^*\bm{W}=0,\bm{W}\bm{V}=0$ and $\|\bm{W}\|_2\leq 1$, we choose $\bm{\Delta}=\left[\begin{matrix}\bm{\Delta}_1&-\bm{\Delta}_2\cr\bm{\Delta}_2&\bm{\Delta}_1\end{matrix}\right]$.
This choice of $\bm{\Delta}$ satisfies the constraints on $\bm{\Delta}$ in (\ref{sub_real_expression}).
Furthermore, $\bm{U}\bm{V}^*+\bm{W}=(\bm{\Theta}_1\bm{\Xi}_1^T+\bm{\Theta}_2\bm{\Xi}_2^T+\bm{\Delta}_1) +\imath(\bm{\Theta}_2\bm{\Xi}_1^T-\bm{\Theta}_1\bm{\Xi}_2^T+\bm{\Delta}_2)$.
Thus
\begin{equation}\label{setforward}
\mathfrak{H} \subseteq \partial \mathcal{F}\left(\left[\begin{matrix}{\rm Re}(\bm{X})\cr {\rm Im}(\bm{X})\end{matrix}\right]\right).
\end{equation}
Secondly, we show that
\begin{equation} \label{setreverse}
\partial \mathcal{F}\left(\left[\begin{matrix}{\rm Re}(\bm{X})\cr {\rm Im}(\bm{X})\end{matrix}\right]\right) \subseteq \mathfrak{H}.
\end{equation}
We let $\bm{\Delta}=\left[\begin{matrix}\bm{\Delta}_{11}&\bm{\Delta}_{12}\cr\bm{\Delta}_{21}&\bm{\Delta}_{22}\end{matrix}\right]$ be any matrix satisfying the the constraints on $\bm{\Delta}$ in (\ref{sub_real_expression}).
We claim that $\bm{W} \doteq   \frac{\bm{\Delta}_{11}+\bm{\Delta}_{22}}{2} + \imath \frac{\bm{\Delta}_{21}-\bm{\Delta}_{12}}{2}$ satisfies $\bm{U}^*\bm{W}=0,\bm{W}\bm{V}=0$ and $\|\bm{W}\|_2\leq 1$.

In fact, from $\bm{\Theta}^T  \left[\begin{matrix}\bm{\Delta}_{11}&\bm{\Delta}_{12}\cr\bm{\Delta}_{21}&\bm{\Delta}_{22}\end{matrix}\right]  =\bm{0} $, we have
\begin{align}
&+\bm{\Theta}_1^T \bm{\Delta}_{11}+\bm{\Theta}_2^T\bm{\Delta}_{21}=\bm{0}\label{eq1}\\
&+\bm{\Theta}_1^T \bm{\Delta}_{12}+\bm{\Theta}_2^T\bm{\Delta}_{22}=\bm{0}\label{eq2}\\
&-\bm{\Theta}_2^T \bm{\Delta}_{11}+\bm{\Theta}_1^T\bm{\Delta}_{21}=\bm{0}\label{eq3}\\
&-\bm{\Theta}_2^T \bm{\Delta}_{12}+\bm{\Theta}_1^T\bm{\Delta}_{22}=\bm{0} \label{eq4}
\end{align}

Thus we obtain
\begin{align}
\bm{U}^*\bm{W}&= (\bm{\Theta}_1^T-\imath \bm{\Theta}_2^T)(\frac{\bm{\Delta}_{11}+\bm{\Delta}_{22}}{2} + \imath \frac{\bm{\Delta}_{21}-\bm{\Delta}_{12}}{2})\\
&=\bm{\Theta}_1^T \frac{\bm{\Delta}_{11}+\bm{\Delta}_{22}}{2}+\bm{\Theta}_2^T \frac{\bm{\Delta}_{21}-\bm{\Delta}_{12}}{2} +\imath\left( \bm{\Theta}_1^T \frac{\bm{\Delta}_{21}-\bm{\Delta}_{12}}{2} -\bm{\Theta}_2^T\frac{\bm{\Delta}_{11}+\bm{\Delta}_{22}}{2}
\right)\\
&=\bm{0}+ \imath \bm{0}=\bm{0},
\end{align}
where the last two equalities come from adding up (\ref{eq1}) and (\ref{eq4}), and subtracting  (\ref{eq2}) from (\ref{eq3}),  respectively.

Similarly from $\left[\begin{matrix}\bm{\Delta}_{11}&\bm{\Delta}_{12}\cr\bm{\Delta}_{21}&\bm{\Delta}_{22}\end{matrix}\right] \bm{\Xi}=\bm{0}$, we can verify that
$$\bm{W}\bm{V}=0.$$
Moreover,
\begin{align*}
\|\bm{W}\|_2&=\left|\left| \left[\begin{matrix}  \frac{\bm{\Delta}_{11}+\bm{\Delta}_{22}}{2}
&  \frac{\bm{\Delta}_{12}-\bm{\Delta}_{21}}{2}   \cr \frac{\bm{\Delta}_{21}-\bm{\Delta}_{12}}{2}& \frac{\bm{\Delta}_{11}+\bm{\Delta}_{22}}{2}\end{matrix}\right]   \right|\right|_2      \\
            &=\left|\left|  \frac{1}{2} \left[\begin{matrix}\bm{\Delta}_{11}&\bm{\Delta}_{12}\cr\bm{\Delta}_{21}&\bm{\Delta}_{22}\end{matrix}\right] +
            \frac{1}{2} \left[\begin{matrix}\bm{\Delta}_{22}&-\bm{\Delta}_{21}\cr-\bm{\Delta}_{12}&\bm{\Delta}_{11}\end{matrix}\right]       \right|\right|_2\\
            &\leq \frac{1}{2} \left|\left|   \left[\begin{matrix}\bm{\Delta}_{11}&\bm{\Delta}_{12}\cr\bm{\Delta}_{21}&\bm{\Delta}_{22}\end{matrix}\right] \right|\right|_2+
             \frac{1}{2}\left|\left| \left[\begin{matrix}\bm{\Delta}_{22}&-\bm{\Delta}_{21}\cr-\bm{\Delta}_{12}&\bm{\Delta}_{11}\end{matrix}\right]       \right|\right|_2\\
            &\leq \frac{1}{2}+\frac{1}{2}\\
            &=1,
\end{align*}
where we used the Jensen's inequality for the spectral norm, and
the fact that $1\geq \left|\left|\left[\begin{matrix}\bm{\Delta}_{22}&-\bm{\Delta}_{21}\cr-\bm{\Delta}_{12}&\bm{\Delta}_{11}\end{matrix}\right] \right|\right|_2 = \left|\left|\left[\begin{matrix}\bm{\Delta}_{11}&\bm{\Delta}_{12}\cr\bm{\Delta}_{21}&\bm{\Delta}_{22}\end{matrix}\right] \right|\right|_2$
(which comes from using the variational characterization of spectral norm).
This concludes the proof of (\ref{setreverse}).

Combining (\ref{setforward}) and (\ref{setreverse}), we arrive at (\ref{eq:GsubsetsubdiffF}).

\end{proof}

\bibliography{Reference_Hankel_Nonuniform}
\end{document}